\documentclass[reqno, 12pt, a4paper]{amsart}
\usepackage[dvipsnames]{xcolor}
\usepackage{tikz-cd}
\usepackage[utf8]{inputenc}
\usepackage[T1]{fontenc}
\usepackage{amsmath}
\usepackage{amsfonts,amssymb}

\usepackage{tabularx}
\usepackage[dvipsnames]{xcolor}

\definecolor{gruen_melone}{HTML}{006f4c} 
\definecolor{rot_melone}{HTML}{a50b1b}
\colorlet{rot_melone}{rot_melone!80!gray}

\usepackage{hyperref}
\hypersetup{
  colorlinks   = true, 
  urlcolor     = gruen_melone, 
  linkcolor    = gruen_melone, 
  citecolor   = rot_melone 
}

\usepackage{tikz-cd}
\usepackage[utf8]{inputenc}
\usepackage[T1]{fontenc}
\usepackage{amsmath}
\usepackage{amsfonts,amssymb}
\usetikzlibrary{matrix}
\usepackage{tensor}
\newcommand{\swap}[2]{ \tensor[_#1]{\sharp}{_#2}}
\usepackage{bbm}

\newcommand{\psdm}{\textsc{psdm}}
\newcommand{\dip}{\textsc{divp}}
\newcommand{\Msc}{\mathscr{M}}
\usepackage{algorithm2e}
\graphicspath{{./pics_prostyB}}

\newcommand{\E}{\mathbb{ E}  }

\makeatletter
\newcommand{\vast}{\bBigg@{4.030}}
\newcommand{\vastplus}{\bBigg@{7.30}}
\newcommand{\Vast}{\bBigg@{18.30}}
\makeatother

\usepackage{tabularx}
\newcolumntype{L}{>{\arraybackslash}m{12cm}}
\let\langleb=\langle
\let\rangleb=\rangle
\usepackage[cal=euler, scr=boondoxo]{mathalfa}

\def\[#1\]{%
  \begin{align}#1%
  \end{align}%
}

 \usepackage{mathabx}
\let\langle=\langleb
\let\rangle=\rangleb

\newcommand{\gris}[1]{\textcolor{gray}{#1}}

\definecolor{azulESI}{HTML}{1266AE}
\usepackage{kbordermatrix}
\usepackage{blkarray}
\definecolor{AZULESI}{HTML}{1266AE}
\usepackage{enumerate}

\usepackage{nicematrix}
\usepackage{soul}
\usepackage[utf8]{inputenc}

\newcommand{\wenn}{\text { if }}

\newcommand{\summ}[1]{\sum_{#1=1}^m }
\newcommand{\dcupa}{\dot\cup_{a=1}}

\makeatletter
\ifcase \@ptsize \relax
  \newcommand{\nano}{\@setfontsize\miniscule{3.5}{4.5}}
\or
  \newcommand{\nano}{\@setfontsize\miniscule{4.5}{5.5}}%
\or
  \newcommand{\nano}{\@setfontsize\miniscule{4.5}{5.5}}%
\fi
\makeatother

\newcommand{\balita}{\raisebox{1.3pt}{\text{\nano$\bullet$\hspace{.75pt}}}}

\usepackage{tikz}
\usetikzlibrary{automata,arrows,topaths}

\newcommand{\inter}{^{\text{\tiny int}}}

\newcommand{\inv}{^{-1}}

\newcommand{\itemb}{\item[\balita]}
\newcommand{\cH}{\mathcal{H}}

\newcommand{\can}{\mathrm{canon}}

\usepackage{float}
\usepackage[font=small,labelfont=sc]{caption}
\usepackage{subcaption}
\usepackage{amsthm}
\numberwithin{equation}{section}
\newtheoremstyle{mytheoremstyle} 
    {10pt}                    
    {8pt}                    
    {\itshape}                   
    {}                           
    {\scshape}                   
    {.}                          
    {.5em}                       
    {}  

\usepackage[includeheadfoot,margin=2.2cm]{geometry}
\makeatletter
\newcommand{\leqnomode}{\tagsleft@true}
\newcommand{\reqnomode}{\tagsleft@false}
\makeatother
\theoremstyle{mytheoremstyle}

\newtheorem*{theorem*}{Theorem A}
\newtheorem{theorem}{Theorem}[section]

\newtheorem{corollary}[theorem]{Corollary}
 \newtheorem{lemma}[theorem]{Lemma}
 \newtheorem{claim}[theorem]{Claim}
 \newtheorem{proposition}[theorem]{Proposition}
 \newtheoremstyle{definition} 
    {8pt}                    
    {5pt}                    
    {}                   
    {}                           
    {\scshape}                   
    {.}                          
    {.5em}                       
    {}  

 \theoremstyle{definition}
 \newtheorem{definition}[theorem]{Definition}
 \newtheorem{example}[theorem]{Example}

\newcommand{\N}{\mathbb{Z}_{>0}}
\newcommand{\Z}{\mathbb{Z}}

\newcommand{\C}{\mathbb{C}}
\newcommand{\R}{\mathbb{R}}
\newcommand{\runter}[1] {\raisebox{-.45\height}{#1}}

\newcommand{\dif}{{\mathrm{d}}}

\newcommand{\uni}{\mathrm{U}}

\newcommand{\ee}{\mathrm{e}}

\DeclareMathOperator{\Wick}{\mathrm{Wick}}

\DeclareMathOperator{\Sym}{\mathrm{Sym}}

\DeclareMathOperator{\Tr}{Tr}

\newcommand{\bT}{\overline{T}}

\newcommand{\hp}[1]{^{(#1)}}

\usepackage{multicol,caption}

\tikzcdset{arrow style=tikz, diagrams={>={Stealth[round,length=4pt,width=4.95pt,inset=2.75pt]}}}

\usepackage{stackrel}
\newcommand{\gauss}{\dif \mu_0 (T,\bT)}

\newcommand{\conn}{^{\text{\tiny conn.}}}
\makeatletter

\newcommand*\notocchapter[1]{%
  \if@openright\cleardoublepage\else\clearpage\fi
  \thispagestyle{empty}\global\@topnum\z@
  \@afterindenttrue
  \let\@secnumber\@empty
  \@makeschapterhead{#1}\@afterheading
}

\newcommand{\CN}[1]{(\C^{N})^{\otimes #1}}
\newcommand{\orth}{\mathrm{O}}
\makeatother
\newcommand{\und}{\text{and }}
\renewcommand{\forall}{\text{for all }}
\newcommand{\with}{\text{with }}

 \title[Redundant moments of non-melonic random tensors \& bootstrap]{Redundant
 moments of non-melonic random tensor\\ models: simplifying the tensor bootstrap}
 \author[   C. I. P\'erez S\'anchez   ]{Carlos I. P\'erez S\'anchez }

  \address{ }
  \email{\href{mailto:perez.sanchez@protonmail.ch}{perez.sanchez@protonmail.ch}}

\begin{document}
\begin{abstract}
We prove the redundancy
of a family (the so-called melonic family) of large-$N$ moments in arbitrary tensor models. This
reduces the number of independent entries in
the positive semidefinite matrices that build the core of the tensor bootstrap
(for tensor models there are 
already several generalizations of the positive semidefinite Toeplitz and Hankel
matrices of lattice gauge theory and random matrix bootstrap, respectively). More concretely,
we prove that any melonic operator $C$ at large-$N$ has an expectation value that
depends on $C$ exclusively through its degree. A similar statement is shown here to hold
for the Schwinger-Dyson equations of melonic moments. The strength of our result is also its scope,
which is not limited to melonic tensor models.
\end{abstract}


\keywords{Random tensors, Tensor Bootstrap, Melonic, Non-melonic}

\maketitle

\colorlet{rot_meloneB}{rot_melone}
\colorlet{rot_melone}{gruen_melone}


%
\section{Introduction}\label{sec:intro}

What we shall refer to as `random tensors' or `tensor models' \cite{GurauBook} appeared first in
\cite{AmbjornTM,SasakuraTM}, motivated by the success of random matrix theory as a description of
two-dimensional quantum gravity \cite{di19952d} (or in combinatorics, of the theory of
random maps \cite{EynardBook}). Random tensors were used to propose a theory
of simplicial gravity---or in combinatorics, of higher dimensional maps,
\cite{lionni2018colored,arXiv:2212.12200,octahedra}---whose large-$N$ expansion had to wait some decades \cite{Gurau:2009tw,Nexpansion_coloured}.
After the core of random tensors was fully developed,
other ensembles generalized the initial `$\uni(N)$ invariant tensor models'. In particular,
$\orth(N)$ tensor models \cite{ON} were also interesting for SYK-physics \cite{WittenSYK,GurauSYKinvN,KlebanovTarnopolsky,BenedettiGurauSYK}.
Random tensors served as well to develop renormalization techniques \cite{3Dbeta,4renorm,krajewskireiko,LahocheDine2018} or be approached to constructively
\cite{VignesRivasseau_makesense,RivDele}, just to name further research avenues they opened.
\\

In the present article, random tensors are considered from the viewpoint of
positivity bootstrap, which has been successful in lattice gauge field theory \cite{Kruczenski,KazakovZhengLattice}
 and random matrices \cite{Lin,KazakovZheng}. Positivity bootstraps were developed in \cite{BootstrapDirac} for
some 
\textit{bi-tracial} multi-matrix ensembles originated in random noncommutative geometry 
\cite{BarrettGlaser,bitracial,ReviewKhalkhali,RandomNCG_YMH,DArcangeloMC} and in
 \cite{perez2025loop} for gauge theories on quivers \cite{QuiversNCG}.
\\

It was natural to formulate the tensor bootstrap
as an open problem  \cite{IHP_OpenProblem},
exhibiting a particular positive semidefinite  matrix
that could play a role
in solving random tensor models, but it was only recently seen (luckily, not only by the author)
that indeed positive semidefiniteness---of this and of new matrices (cf. Sec. \ref{sec:BootsMatrices})---is able to solve a
restricted family of tensor models \cite{Reiko2026,TensorBootstrapProgram,Reiko_2BTM}.
Since the \textit{tensor bootstrap program} is still emerging,
we consider here some combinatorial aspects
that will simplify numerics, before bootstraps are applied to a completely
general tensor model.\\

Just as the moments of multi-matrix are indexed by  words
in the random matrices modulo cyclicity and other
symmetries inferred from the measure\footnote{We mean,
$AA,BB,ABAB,ABBA,\ldots$ for instance, if the symmetry is dihedral as in
the $ABAB$-model \cite{ABAB} or interpolations between $ABBA$ and $ABBA$ \cite{ABAB_MonteCarlo}.},
observables of
arbitrary $\uni(N)$-tensor models
are invariants built from polynomials in $T\in \CN D$, with each
factor $\C^N \subset \CN D$
being acted upon independently by $\uni (N)$ (we will omit the `$\uni(N)$' in `$\uni(N)$-invariance' from now on). After
reading the precise formulation below, it will be evident that
the complex conjugate $\bT$  of $T$ must appear in an invariant as often as $T$. \\
Let us illustrate with the
easiest case, $D=3$, how some invariants appear. After the product of three Kronecker deltas $\delta_{a_i,b_i}$,
the easiest invariant is $T\cdot \bT$ or
\[
T_{a_1,a_2,a_3}
\delta_{a_1,q_1}
\delta_{a_2,q_2}
\delta_{a_3,q_3}
\bT_{q_1,q_2,q_3}
\]
tacitly summed upon all $a_i,q_i=1,\ldots,N$ (Einstein sum). One particular way to construct invariants iteratively is
to replace each occurrence of
$\delta_{a_i,q_i} $ (for some $i=1,\ldots, D$) in an invariant
by \textit{dipoles}, that is matrices of the form
\[
T_{b_1,b_2,b_3}\Big[
\delta_{a_i,b_i}
\delta_{q_i,p_i}\textstyle\prod\limits_{k\neq i}
\delta_{b_k,p_k}\Big]
\bar T_{p_1,p_2,p_3}. \label{replacement_ofDelta}
                                       \]
At each step one introduces new deltas that can be replaced again by \eqref{replacement_ofDelta}, thus increasing
the degree by 2. Evidently, not all invariants are constructed like this,
for instance the next two are not: \vspace{-2ex}
\begin{subequations}%
\[ & \label{K33}
T_{i_1,i_{2},i_{3}}T_{n_1,n_{2},n_{3}}T_{x_1,x_{2},x_{3}}%
\bar{T}_{i_{1},x_{2},n_{3}} \bar{T}_{n_{1},i_{2},x_{3}} \bar{T}_{x_{1},n_{2},i_{3}}
\\
& \label{cube}
T_{i_1,i_{2},i_{3}}T_{n_1,n_{2},n_{3}}T_{x_1,x_{2},x_{3}}T_{z_1,z_{2},z_{3}}%
\bar{T}_{i_{1},n_{2},z_{3}} \bar{T}_{n_{1},i_{2},x_{3}} \bar{T}_{z_{1},x_{2},i_{3}} \bar{T}_{x_{1},z_{2},n_{3}}
\]\end{subequations}
By definition, those
constructed starting from $T\cdot\bT$ and replacing recursively,
for any $i=1,\ldots, D$, the Kronecker delta
$\delta_{a_i,q_i} $ by Term \eqref{replacement_ofDelta}
are called (connected) \textit{melonic}. Our result can thus be formulated as follows: \vspace{-3ex}
\begin{quote}%
\begin{theorem*}[Informally stated]\label{thm:informal} Large-$N$
moments of melonic observables
of the same degree (in $T$) coincide whenever they exist. Further, their
corresponding Schwinger-Dyson Equations coincide term by term.
\end{theorem*}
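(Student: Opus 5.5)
The plan is to prove both claims by induction on the number of dipole insertions, i.e.\ on the degree $2k$ of the melonic observable, using only the large-$N$ factorization of invariants and the structure of the Schwinger--Dyson equations (SDEs). No specific form of the interaction is needed. First I fix the normalization: a connected invariant $\Tr_B(T,\bT)$ of degree $2k$ is scaled by the power of $N$ for which its expectation has a finite limit, and that limit is denoted $\langle B\rangle$. For melonic bubbles this scaling is uniform in $k$: every melonic $B$ has the maximal number of faces, namely $F(B)=(D-1)k+1$.

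The key combinatorial input is the recursive structure. Any melonic $B$ with $k\ge 2$ contains an \emph{elementary melon}, i.e.\ a pair $T,\bT$ joined by $D-1$ colours, which is the most recent dipole from \eqref{replacement_ofDelta}. I will derive the SDE obtained by taking $\partial/\partial T$ of the product of the remaining factors times $T$ placed at a white vertex of that melon. At leading order in $N$, the Gaussian part of this SDE says the following. Contracting the two halves of the elementary melon leaves the $(D-1)$ colour cycles closed, giving a factor $N^{D-1}$, and what remains is the smaller melonic bubble $B'$ of degree $2(k-1)$. Every other way of splitting the graph is either non-melonic or produces a disconnected product. By large-$N$ factorization, such a product becomes $\langle B_1\rangle\langle B_2\rangle$ with $B_1,B_2$ melonic of smaller degree.

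The second step is to show that the whole right-hand side depends on $B$ only through $k$. For the free part I expect a recursion of the form
\[
\langle B\rangle=\sum_{j}\langle B_{j}\rangle\langle B_{k-1-j}\rangle ,
\]
with Catalan-type counting coming from where the melon sits in the nesting tree. What has to be checked is that this counting is independent of the colour labels and of the shape of the tree. I would verify this by showing that the leading-order Wick contractions of a melonic bubble are in bijection with noncrossing pairings of its $k$ black vertices along the tree. For the interaction terms $\partial V/\partial \bT$, I would likewise show that the leading contribution involves only gluing of melonic pieces, and hence again only moments of lower degree or of the same degree shifted by $\deg V$. Induction on $k$ then closes: both sides coincide for all melonic $B$ of the same degree, provided the moments exist. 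The second part of Theorem A, that the SDEs coincide term by term, follows from the same bijection, because the terms correspond one-to-one to decompositions of the melonic tree.

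The main obstacle is the claim that the melonic counting is ``shape-blind'', i.e.\ independent of the colour structure of the tree. This fails at subleading orders and has to be obtained purely from face counting at leading order. The delicate point is showing that inserting a non-melonic interaction vertex into a melonic observable never yields leading-order terms that remember the colours. I would attack this by proving that in any leading-order graph the observable's elementary melons are contracted locally, so that the rest of the graph only sees a single $T\cdot\bT$ edge of weight determined by $k$. Concretely, I would show that the leading two-point function is shared and all melonic observables reduce to products of it.
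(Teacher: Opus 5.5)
Your proposal has a genuine gap: the degree induction on the Schwinger--Dyson equations cannot close once interactions are present.

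In SDE${}_{C,x}$ the interaction terms are $\E[C\underset{x,w}{\cup}B_j]$. These are moments of observables of \emph{higher} degree $2p(C)+2p(B_j)-2$, and they are not melonic when $B_j$ is not. Your induction hypothesis, a statement about melonic observables of lower degree, says nothing about them. The SDEs also cannot be solved downward for $\E[C]$: they express higher moments through lower ones and leave initial data free, which is why a bootstrap is needed at all.

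The same defect affects the second assertion. ``The SDEs coincide term by term'' means precisely that $\mathrm{L.O.}\,\E[C\underset{x,w}{\cup}B_j]=\mathrm{L.O.}\,\E[\tilde C\underset{\tilde x,w}{\cup}B_j]$ for all $j$, all $w\in V_1(B_j)$, and arbitrary black $x,\tilde x$, not only a vertex of the last dipole. A correspondence between the break-and-weld terms on the right-hand side does not address this.

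Your free-theory input is also false. A connected melon has a \emph{unique} face-maximizing Wick contraction, the one pairing its dipole insertion vertex pairs, with $(D-1)p(C)+1$ faces. So every Gaussian melonic moment equals $1$ at leading order after dividing by $N$. There is no Catalan recursion and no bijection with noncrossing pairings; that is matrix-model intuition. Large-$N$ factorization, which you invoke for the disconnected terms, is likewise not established for a general non-melonic model.

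What is missing is exactly the step you flag as the main obstacle, and it is the whole content of the theorem. The paper does not use the SDEs for the first claim. It works with Wick contractions at fixed interaction content $C,B_{i_1},\ldots,B_{i_n}$, and the argument runs as follows.
\begin{enumerate}
\item Face inequalities (Lemma~\ref{thm:bubbles_inequalities_Lemm}) show that in a face-maximizing connected contraction, the Feynman graph is a sequence of swaps of the canonically contracted melon with the connected bulk components. Each bulk component is attached to $C$ by exactly two propagators, sitting at a dipole insertion vertex pair.
\item Melonicity of $C$ is used here to ensure that at most one colour joins two such pairs. Without it, the face gain from reducing a four-propagator attachment to a two-propagator one can be only $D-4$, which need not be positive.
\item The swap identity $f(G\swap{v}{w}H)=f(G)+f(H)-D$ then gives the total face count $(D-1)p(C)+1+\sum_a f(\mathfrak b_a)-d_C D$. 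This depends on $C$ only through $p(C)$.
\item A canonical vertex labelling of melons (the maximal arborescence in $\Lambda_D$) identifies the $p$ dipole insertion vertex pairs of $C$ with those of $\tilde C$. This yields a face-preserving bijection of maximal contractions.
\item The SDE statement follows from the bijection $\mathscr M_1(B)\times\mathscr M_1(C)\to\mathscr M_1(B\underset{x,w}{\cup}C)$, valid for melonic $C$, combined with replaying the same sequence of swaps.
\end{enumerate}

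Your intended local picture is also inaccurate. The bulk does not see ``a single $T\cdot\bT$ edge''. It can be inserted on any of the $p(C)$ propagators of the canonical pairing, not only at actual dipoles. The key fact is that the number of these insertion slots, and the face cost $D$ of each insertion, are independent of the shape of the melon.
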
\vspace{1ex}
\end{quote}
A first formal proof of this theorem  for
melonic tensor models (i.e. whose measure
is build only on melonic observables) is found in \cite{TwofoldUniversality}.
The present paper shows that the statement holds also for non-melonic models,
and for any $D$. Op. cit. shows a twofold universal behaviour of melonic tensor models, not only concerning
the independence  (whenever their degrees coincide) of the heavy combinatorics of
melonic interactions, but also universality in the number $D$ of tensor indices. According to  \cite{TwofoldUniversality}, once observables are melonic
and their moments convergent, then they become \textit{independent of $D$, too}, and can be computed
with a universal measure. Perturbatively, these are evaluated
via melonic polynomials (Sec. 3, op. cit.), to which any large-$N$ integral
of (products of) melons reduces. The first
universality type is proven to hold here for non-melonic models (but not the
$D$-independendence), too.\\

It is useful
to keep the information corresponding to the Kronecker deltas in each invariant,
and replace them colour-wise by strands. One also replaces the
tensors by nodes with attached half-vertices as in the following $D=3$-case:
\begin{subequations}\label{replacement_Inv2Graph}
\[
 \includegraphics[width=.6\textwidth]{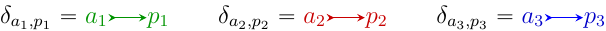}  \\
\runter{
\includegraphics[width=.5006\textwidth]{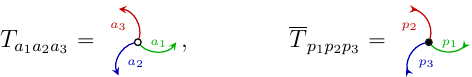}}
\]\end{subequations}%
With this rule, for instance, Invariant \eqref{cube} above is \vspace{-1ex}
\[ \label{cube2}
 \runter{\includegraphics[height=8ex]{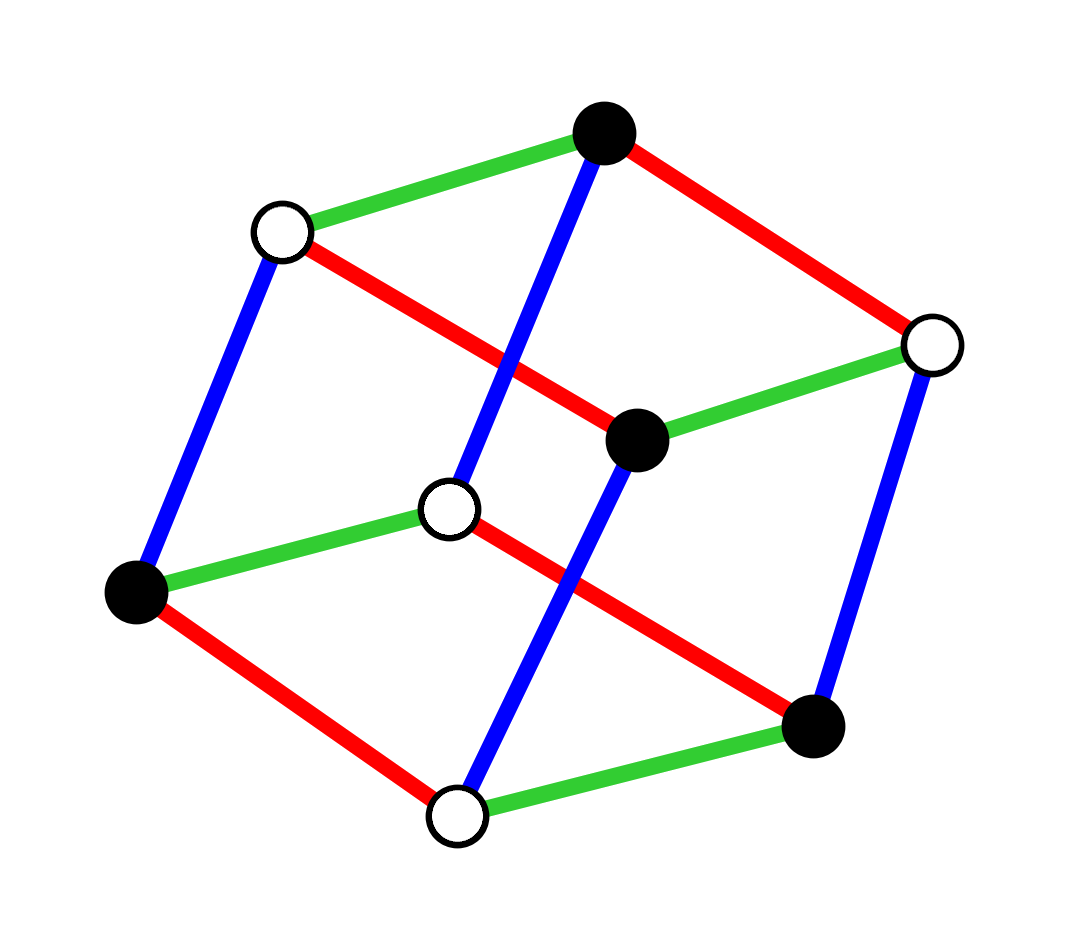}}
\]
An estimation of the magnitude of the simplification that our result implies
can be inferred from Table \ref{tab:Simplifyning}, in which we are allowed to collapse
all shown observables of the same degree (or number of vertices) to the single moment on the right.\\

To properly bootstrap a tensor model, the
positive semidefinite matrices $\mathcal M$ ($\mathcal M    \succeq 0$)
typically are neither of the
Hankel form (as for matrix models) nor Toeplitz form (of lattice Yang-Mills). It is important to be able to accommodate 
arbitrary observables in this matrix.
\\[.3ex]
\begin{minipage}{.58\textwidth}
 After proving our main results,
we present as outlook more general positive semidefinite matrices
and adapt those existing for non-melonic models.
For instance, if one wishes to obtain Observable \eqref{cube2} as an entry of $\mathcal M$, and simultaneously the
constant operator $1$ (needed in the future for relaxation \cite{KazakovZheng}),
we exhibit a basis of observables that leads to the positive semidefiniteness of
the matrix on the right.
\end{minipage}\hspace{-3ex}
\begin{minipage}{.418\textwidth}
\[ \notag\qquad
\E
\begin{bmatrix}
 N
  &     \raisebox{-.75ex}{\includegraphics[width=3ex]{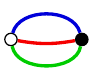} }
   &    \runter{\includegraphics[width=5ex]{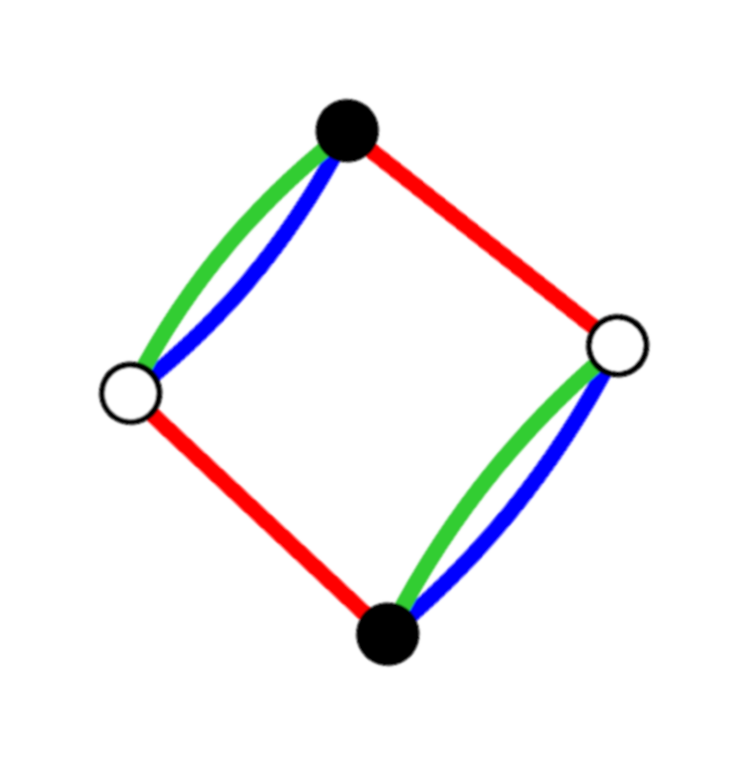} } \ldots
  \\
  \,\,\raisebox{-.75ex}{\includegraphics[width=3ex]{melon_tricolor} }
  &    \runter{\includegraphics[width=5ex]{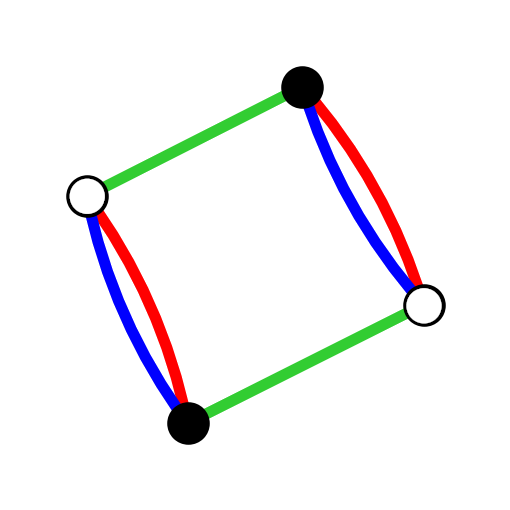} }
&  \runter{\includegraphics[width=5ex]{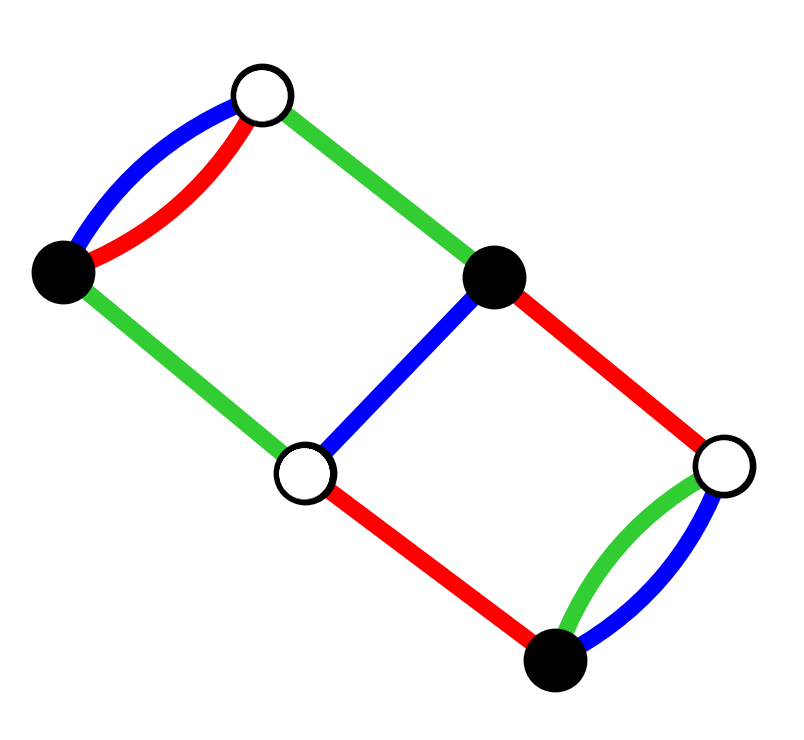} } \ldots
  \\
   \runter{\includegraphics[width=5ex]{V2colors} }
  &     \runter{\includegraphics[width=5ex]{E1colors} }
  &
 N \runter{\includegraphics[width=7ex]{Cubecolors} } \ldots \\
  \vdots &\vdots &\quad\vdots\qquad \ddots
\end{bmatrix} \succeq 0
\]
\end{minipage}

\begin{table}[b]
\[
\runter{
\includegraphics[width=.581\textwidth]{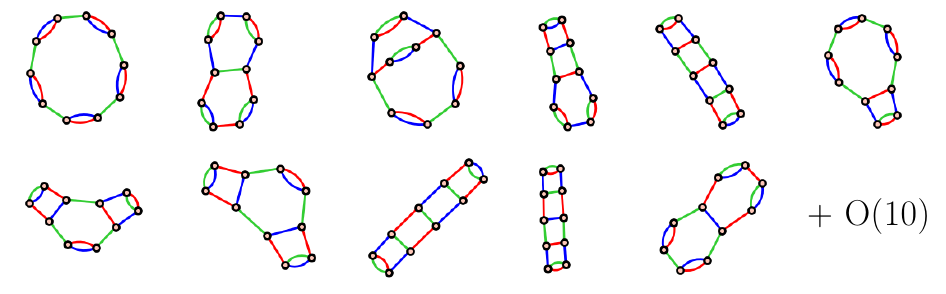}} \quad\vast\}  &
\qquad  m_{10} \nonumber\\
\runter{
\includegraphics[width=.8\textwidth]{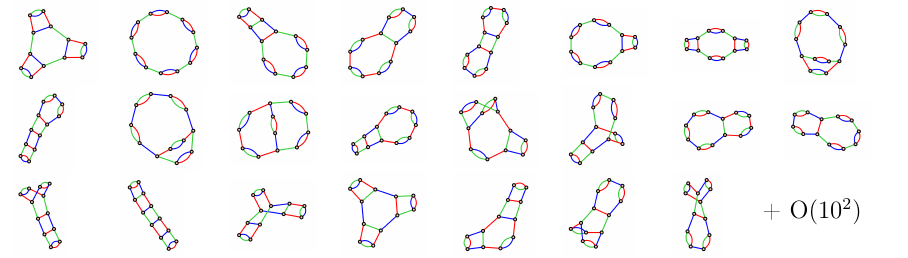}}\vastplus\}
& \qquad  m_{12} \nonumber
\]
\caption{Also for non-melonic tensor models, Theorem A simplifies all the grouped observables' large-$N$,
which can be reduced to a single 10-point $m_{10}$ or to a single 12-point moment (caveat: this does not mean that the moment
of \textit{all} degree-$k$ operators  are reduced to a single $m_k$, but only all melonic). All these graphs are evidently vertex bipartite, but
we have not coloured them black or white since
all these observables are real-valued. (To be continued in Table \ref{tab:nonBoteros}.)\label{tab:Simplifyning}}
\end{table}


\colorlet{rot_melone}{rot_meloneB}

\normalsize
\section{Tensor models}
\label{sec:setting_notation}

For integers $D\geq 3$ and $N\geq 2$ we consider the action of
$\uni (N)^D$ on
$\CN D$  by\begin{subequations}\label{trans_rule}\[
T_{a_1,a_2,\ldots ,a_D } & \mapsto  \prod_{k=1}^D
U\hp k_{a_i,b_i}
T_{b_1,b_2,\ldots b_D } \\
\bT_{q_1,q_2,\ldots q_D }& \mapsto  \prod_{k=1}^D
\overline{U}\hp k_{q_i,p_i}
\bT_{p_1,p_2,\ldots, p_D }
\]\end{subequations}
for all $
(U\hp 1,\ldots U\hp D) \in \uni(N)^D$ and $T=\CN D$,
whose components are $T _{a_1,a_2,\ldots a_D } $, where sum of repeated indices is implicit. (The very transformation rule
makes it irrelevant which bases we pick, as indices appear always contracted.)
Rule \eqref{trans_rule} is originally from \cite{Gurau:2009tw}.

\subsection{Tensor measures}
Measures in random tensors are build from a finite set $B_1(T,\bT),$ $B_2(T,\bT),\ldots, B_m(T,\bT)$
of unitary invariants and read $\ee^{-S\inter(T,\bT)} \gauss$ with
\[\ee^{-S\inter(T,\bT)} \gauss \qquad \qquad
\with
\qquad S\inter(T,\bT) &= \summ j  g_m N^{s_j} B_j(T,\bT). \label{Model}
\]
We now explain each ingredient. First, $\gauss$ is the normalized ($\int \gauss =1$)
Gaussian measure on $\CN D$ with
\[ \int_{\CN D}
T_{a_1,a_2,\ldots ,a_D } \bT_{p_1,p_2,\ldots, p_D } \gauss  = \frac{1}{N^{D-1}}\prod _{c=1}^D
\delta_{a_c,p_c} . \label{correlation_gaussian}
\]
Above $s_1,\ldots,s_m \in \Z_{\geq 0}$
are scaling factors that ensure the existence of the large-$N$ limit and $g_1,\ldots, g_m \in\R$
are coupling constants or just real parameters of the measure.
Since the Gaussian measure (which includes the `kinetic' quadratic
part $T\cdot\bT$) is fixed we refer to $S\inter$ as the tensor model.
A model is called melonic if $S\inter$ is spanned
exclusively by (connected) melonic invariants $B_1,\ldots,B_m$. For a
general theory the parameters $s_1,\ldots,s_m$ are, to the best of the authors' knowledge, not known.
For melonic models, it is known that these scalings are all $D-1$ \cite{GurauRyan}.
(Below it will be clear that the role of  $s_1,\ldots,s_m$ is for our results unimportant, as we focus
on maximizing faces-graphs.) As the $B_j$'s, the $s_j$'s and $N$ will be fixed in the whole manuscript,
we use a light notation for $S\inter$, instead of writing $S\inter_{N;g_1,\ldots,g_m}$.

The moment $\E[B]$ of an observable $B(T,\bT)$, shorthand for
$\E_{g_1,\ldots,g_m}\hp N [B(T,\bT)]$,  is
\[ \label{ExpectDefinition}
\E [B ] =\frac{1}{\mathcal Z}\int_{\CN D}
B(T,\bT) \gauss,
\qquad \mathcal Z = \int_{\CN D }\ee^{-S\inter(T,\bT)}  \gauss,
\]
but this exact expression need not be finite at large-$N$.
Melonic observables require to be scaled by $1/N$
for their moments to be finite. (For non-melonic models
it is not know how to find the optimal way to scale
the moments for them to be finite, but for melonic models
\cite{TensorBootstrapProgram} presents an algorithm.)
Assuming that $\E[B]$'s leading order scales as $N^{q_B}$ at large-$N$, its
large-$N$ moment $ \lim_{N\to \infty } N^{-q_B}\E[B]$ is denoted by $m (B)$.
(While we are introducing this notation
for moments, we anticipate that, when $B$ is melonic,
we will be able, as a consequence of the present results,
to replace $m (B)$ by $m_{2p}$, with $2p=\deg B$.)

\subsection{Observables as graphs}
It has been anticipated in the Introduction that
graphs absorb the combinatorics of tensor invariants.
In fact, the map that applies Rule \eqref{replacement_Inv2Graph} to an invariant
yields a bijection between invariants and $3$-coloured graphs
(with the obvious generalization for any $D$). Concretely, the target consists of
regular graphs with vertices that are either \textit{black} (or, when enumerated, \textit{even})
and \textit{white} (or \textit{odd}), and edges that are
regularly coloured
by the set $\{1,2,\ldots, D\}$ and connect only black with white vertices (`$D$\textit{-coloured graphs}').
It is evident that there is a \cite{GurauRyan}
one-to-one correspondence between
invariants of tensors with $D$-indices, in the sense of \eqref{trans_rule}, and  $D$-coloured graphs. \\

As usual, $V(B)$ is the
\textit{set of vertices} of $B$,
and here we write $V_0(B) \subset V(B)$ for the
black vertex set and $V_1(B)\subset V(B)$ for the white vertex set.
To avoid confusion with the edge-colouring we
use \textit{parity} for the belongingness to either set.
To distinguish  a coloured graph
from an invariant we use $B$ and $B(T,\bT)$, respectively, if really needed.
If the $B$ is connected, $B(T,\bT)$ is \textit{single-trace} or itself  \textit{connected}.
\\

Since all invariants $B$ are bipartite, a useful and wide-spread
notation we adhere to is $p(B)=\frac12 \#V(B)$.
A \textit{vertex-labelling} $\lambda=(\lambda_0,\lambda_1)$ of a coloured graph $B$
is a pair of bijections
\[ \lambda_0 :V_0(B) \stackrel\sim\to \{0,2,\ldots, 2p-2\} \qquad \und\qquad
\lambda_1:  V_1(B) \stackrel\sim\to\{1,3,\ldots, 2p-1\}.
\]


\subsection{Wick contractions and Feynman graphs}
This short section introduces the notation to evaluate the integral
\[ \label{WickIntegral}
\int_{\CN D}
B_{1}(T,\bT)\cdots B_n(T,\bT) \gauss =
\sum_{\pi \in \Wick(\mathcal B) } A [\pi (\mathcal B )]
\]
by Wick's theorem. Here the sum is performed over \textit{Wick contractions}
of the disconnected graph $\mathcal B =\dot\cup_{i=1}^n B_n$,
namely bijections $\pi: V_1(\mathcal B ) \to V_0(\mathcal B )$.
In the RHS one sums over amplitudes $A(G_\pi)$ of \textit{Feynman graphs}
$G_\pi= \pi(\mathcal B)$, where $ \pi(\mathcal B)$ denotes the
$(D+1)$-coloured graph determined as follows ($A$ is defined below). Let $E(\mathcal B)$ be the
edge set of $\mathcal B$, then
\[
V(\pi(\mathcal B)) & := V(\mathcal B) \\
E(\pi( \mathcal B ) ) & :=E  (\mathcal B) {\dot\cup} \{ (v,\pi(v) ) : v\in V_1(\mathcal B) \}
\]
with the set of new edges (the `graph of the function' $\pi$)
is given, by definition the \textit{colour $0$}. Conventionally,
we colour it with gray (other conventions use a dashed line). The 0-colored edges
$\{ (v,\pi(v) ) : v\in V_1(\mathcal B) \}$ are also called \textit{propagators}.
\\

A connected subgraph of $\pi(\mathcal B)$ that contains only edges of colours $\{0,c\}$
is called a $c$-\textit{coloured face} of $\pi(\mathcal B)$. (Caveat: several sources
allow faces to have colours  $\{c_1,c_2\}$ for $c_1\neq c_2$, $c_i\in \{1,\ldots,D\}$,
but we will not use this convention). The total number of faces (i.e. adding those for colours $c=1,\ldots,D$)
will be abbreviated by
\[  f(G_\pi) :=  \# F (G_\pi) .\]

We are finally in the position of defining the amplitude $A(G_\pi)= A [\pi (\mathcal B )]$
as 
\[A(G_\pi) = \label{amp_def}
N^{f(G_\pi) -(D-1)p(G_\pi)} \qquad \text{[recall $p(G_\pi)=\tfrac12 \#V(G_\pi)$]}.
\]
Just as in \cite{TwofoldUniversality}, we stress that
we perform Sum \eqref{WickIntegral} over Wick contractions and not over graphs,
which should explain why we have a coefficient $1$ in the amplitude. This expression explains also the
traditional notation  `$p$' in Eq. \eqref{amp_def} as the number of propagators, $p(G_\pi)$.
The reason to be interested in faces is the equivalence of
Eq. \eqref{correlation_gaussian} with $D$ parallel lines, one per colour. One can then decompose
the 0-colour (gray) concretely as
\begin{equation}\notag
\frac{1}{N^{D-1}}
\raisebox{-.35\height}{
\tikz[scale=.65,  outer ysep=-2 pt,
    baseline=-.5ex,shorten >=.1pt,node distance=18mm,
    semithick,auto,
    every state/.style={fill=black ,draw=black,inner sep=.4mm,text=black,minimum size=0},
    accepting/.style={fill=gray!50!black,text=white},
    initial/.style={white,text=black}
]{
\node[state,minimum size=0,circle, fill=white,  draw=green!58!black] (A11) at (0,0) {};
 \node[state,minimum size=0,circle, fill=white, draw=red] (A12) at (0,.4) {};
 \node[state,minimum size=0,circle, fill=white, draw=blue] (A13) at (0,.8) {};
  \node[state,minimum size=0,circle, fill=white, draw=orange] (A1D) at (0,1.6) {};
  \node[ ] (mid) at
(1.5,1.32) {\tiny $\vdots$};
\node[state,minimum size=0,circle , draw,fill=blue!40] (q13) at
(3,.8) {};
\node[state,minimum size=0,circle , draw,fill=orange] (q1D) at
(3,1.6) {};
 \node[state,minimum size=0,circle, draw,fill=red!50] (q12) at (3,.4) {};
 \node[state,minimum size=0,circle,draw,fill=green!40!lightgray] (q11) at (3,0) {};
 \foreach \c/\j in {green!58!black/1, red/2, blue/3,orange/D}{
 \draw[,\c] (A1\j) -- (q1\j);
 }
 } }\,
 \label{prop}
\end{equation}
and then gather all $c$-coloured loops $c=1,\ldots,D$ of these delta's with those of the integrand.

\begin{minipage}{.69\textwidth}\begin{example}\label{K33K33}
In the  Feynman graph (top right) an example of what we call face of colour 2 (red) is the four red-gray alternating edges
at $(v,w,y,x)$.
Having in total 9 faces, three per colour,
and $(D-1)p=12$ in this case, the amplitude of this graph is
$N^{-3}$.  Nevertheless, this is a dominant graph
in the model with a single interaction vertex $K_{3,3} $,
for the interactions in tensor integrals appear actually weighed
with powers of $N$.
Here, the
graph $K_{3,3}$ has the scaling $N^3$, or $s(K_{3,3}) =3$,
and the moment $\E [ K_{3,3} ]$ scales as 1, or $q(K_{3,3})=0$.
(See e.g. \cite{LionniJohannes} to know more about this model.)
\end{example}
\end{minipage}
~
\begin{minipage}{.28\textwidth}
 \[\runter{\includegraphics[width=19ex]{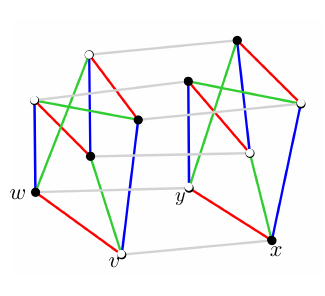}} \notag \\
 K_{3,3} = \runter{\includegraphics[width=8ex]{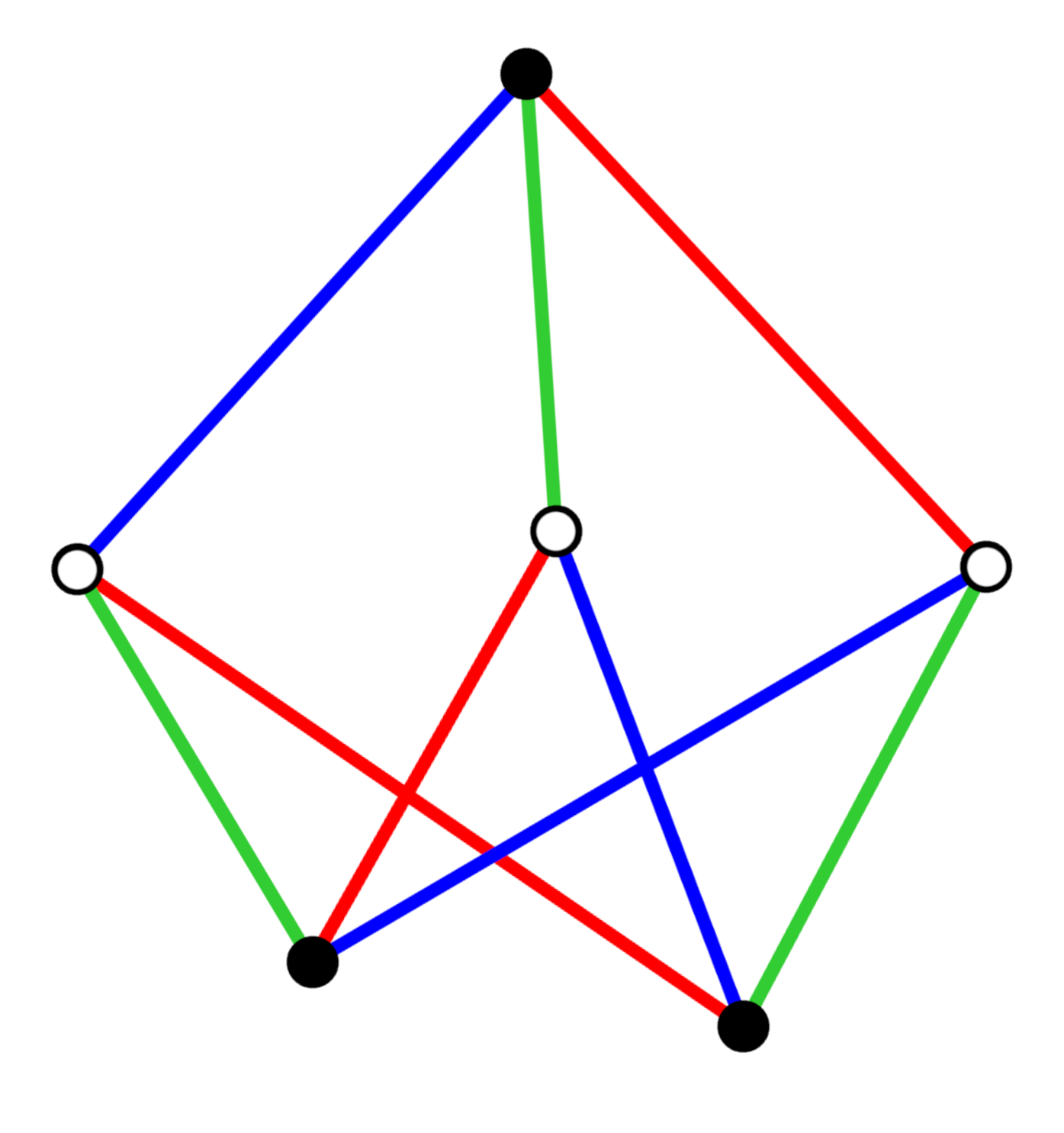}} \notag
 \]
\end{minipage}

\section{A canonical tree for melonic graphs}

We need to elaborate on the well-known fact that `melons are trees',
exploited for instance in \cite{BranchedPolymers}. Concretely, a single melon is actually several trees.
Lacking uniqueness, we need to
find a way to select a canonical tree, together with a canonical vertex-numeration, that allows us to
`replace a melon by another melon' with the same number of vertices in a unique way.
First, we do not order all melon labels, but define a criterion
of admissibility. Among all admissible labels of a melon, we define
a canonical one.
It will be enough for us that there is a unique map that brings
any label, even if not admissible, to the canonical label.

\subsection{Dipole insertion vertex pairs (\dip s)}
\label{sec:melonic_univ}
A connected melon is either the quadratic invariant
$T_{a_1,a_2,\ldots, a_D} \delta_{a_1,p_1}
\cdots \delta_{a_D,p_D} \bT_{p_1,p_2,\ldots, p_D}$ or iteratively
obtained from that invariant by replacing, at each step,
a delta $\delta_{a_i,p_i}$  by a \textit{dipole of colour $i$}, namely by \[
T_{b_1,b_2,\ldots, b_D}\Big[
\delta_{a_i,b_i}
\delta_{q_i,p_i}\textstyle\prod\limits_{k\neq i}
\delta_{b_k,p_k}\Big]
\bar T_{p_1,p_2,\ldots,p_D}. \]
Equivalently, but more colourfully, a $D$-coloured graph  $B$ is a \textit{melon} if it
is either the quadratic invariant, i.e. $B= \runter{\includegraphics[height=5.75ex]{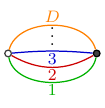}}$, or
it is the result of substituting there any of the coloured edges
\[ \hspace{-3ex}
\includegraphics[height=3ex]{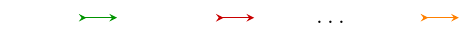} \label{lines}
\]
iteratively by the \textit{dipole} of its respective colour,
\[
\raisebox{-2.5ex}{\includegraphics[width=14ex]{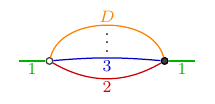}}\runter{
\includegraphics[width=14ex]{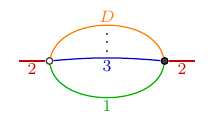}}
\cdots
\runter{\includegraphics[width=14ex]{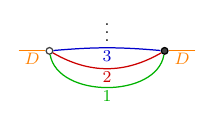}} \label{dipoles}
\]

We call each of the vertex pair that emerged by a substitution of
any line \eqref{lines} by its corresponding dipole \eqref{dipoles}  a
\textit{dipole insertion vertex pair} (\dip).  Observe that
one can obtain the complete (not ordered) list of \dip s
of a melon $C$
as follows: at the beginning, the list of \dip s consist
of all actual dipoles of $C$ (by definition, always exist at least one dipole). Then,
one removes one by one, after which new dipoles should be formed
after having glued the broken edges caused by dipole removal, that is, when going from \eqref{dipoles} to \eqref{lines}.
Observe that dipoles always exist, else the new
graph was not a melon, and then $C$ was not a melon either. One adds these new dipoles to the list and
repeats this process until one gets a 2-vertex graph, whose
two vertices are the last \dip.
\\

\begin{minipage}{.67\textwidth}\begin{example}
In the graph on the right, with $D=6$, the six  actual dipoles are obviously \dip s,   but, more interestingly,
also the couple of vertices in the middle tagged with $(v,w)$,
which do not even have edges in common, is a \dip,  since
removing the dipoles and contracting the broken half-edges
 yields the two-vertex graph with six colours spanned by $(v,w)$.\\ 
\end{example}
\end{minipage}
~~
\begin{minipage}{.30\textwidth}
\[ \notag
\raisebox{5ex}{
\includegraphics[height=20ex]{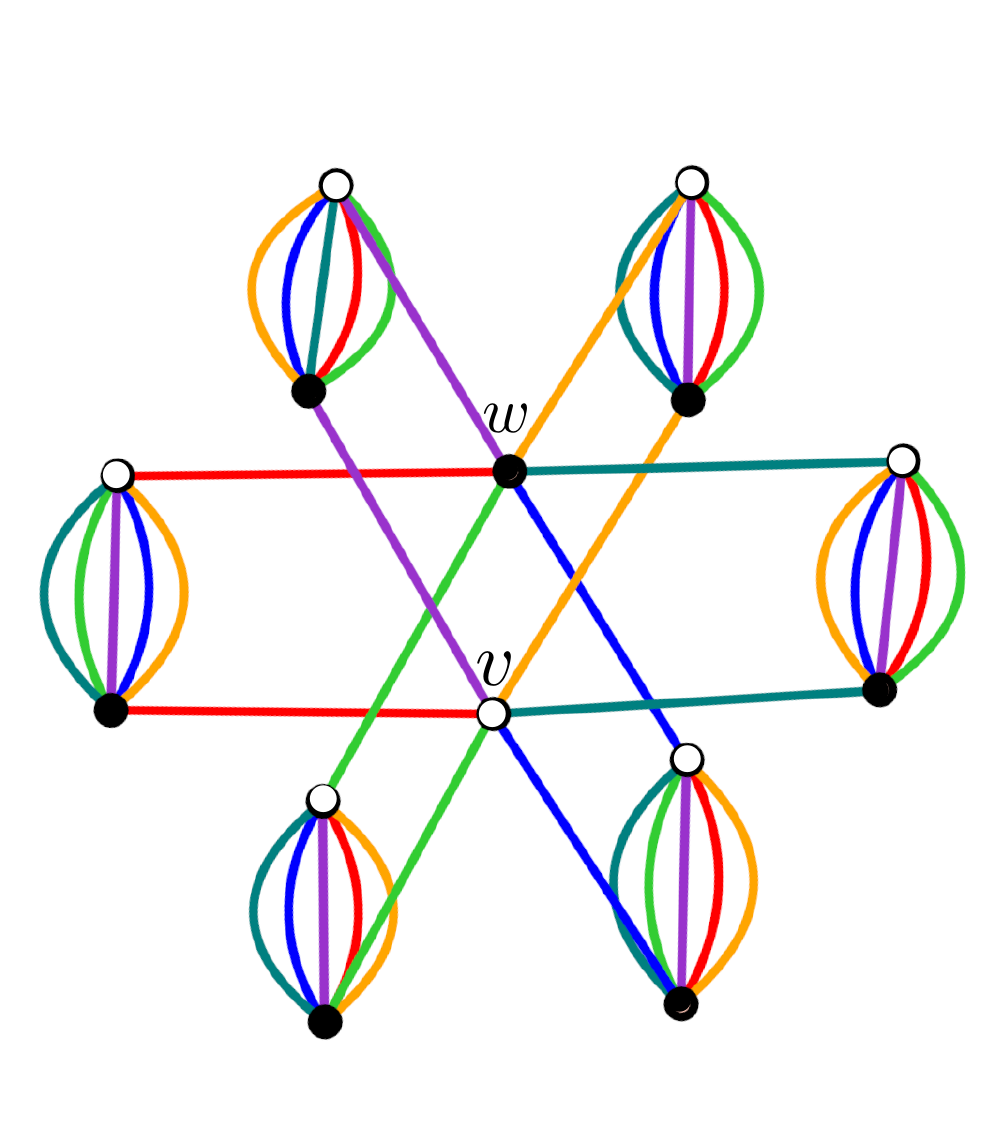}}
\]
\end{minipage}

\allowdisplaybreaks[3]

\subsection{Ordering labels of melons}\label{sec:new_recipy}

The \textit{tree $\mathcal T_{\lambda}(C)$ associated to a melon} $C$
with vertex labels $\lambda$ is obtained as follows.
Since $C$ has  $2p=2p(C)$ vertices, and $C$ is a melon,
we can pick any of the available dipoles, and let us call
$\lambda (v_{p-1})$ the number of the corresponding black vertex $v_{p-1}$ (we mean $\lambda_0$,
which enumerates vertices using even numbers but keep notation compact). Let us denote that dipole's colour by
$c_{p-1} \in \{1,\ldots,D\}$. After the picked dipole
is replaced by a single line of colour $c_{p-1}$, as
\eqref{dipoles} $\to $ \eqref{lines}, one records the
pair $(\lambda(v_{p-1}),c_{p-1})$ and picks another dipole (if it did
not exist, it surely will: after collapsing the dipole to a line
a dipole must emerge by melonicity of $C$). This can be iterated
until one arrives to the two-point graph associated to $T\cdot \bT$.
In the process, we recorded  the pairs
\[
( \lambda(v_{p-1}),c_{p-1}),
(\lambda(v_{p-2}),c_{p-2}),\ldots,
(\lambda(v_{1}),c_{1}) \in \{0,2,\ldots,2p-2\} \times \{1,2,\ldots,D\}
\]
which allow to construct a vertex-labelled $D$-ary tree in
by attaching to the root $ v_0 $
the vertex $v_1$ at its only edge of colour $c_1$,
and so on,
attaching the vertex $v_i$ at its only edge of colour $c_i$
to $v_{i-1}$.  This process is depicted by the transition of Figure \ref{fig:drzewo0} (a) to Figure \ref{fig:drzewo0} (b).
\begin{definition}
A branch of a rooted tree is any path from the root (placed at the top)
to a last-parent, i.e. a node
having only children without children.
A vertex-labelling $\lambda$ of a melon is \textit{admissible}
if its tree $\mathcal T_{\lambda}(C)$ (vertex-labelled
by even numbers  as induced by $\lambda$)
is  increasing along any branch $\theta$, and if the numeration is either consecutive (in the even numbers)
when moving downwards, or else the next number leaves a branch, but
none of the successors enumerates another node of original branch $\theta$.
\end{definition}
Intuitively, admissibility says that if we follow the path of the
numeration, the path goes either downwards (if we draw the root on the top)
or it is allowed to go side- and upwards, but only in order to definitely change branch; see Figure \ref{fig:branch_and_admissible_not}.
\begin{figure}[H]
 \includegraphics[width=48ex]{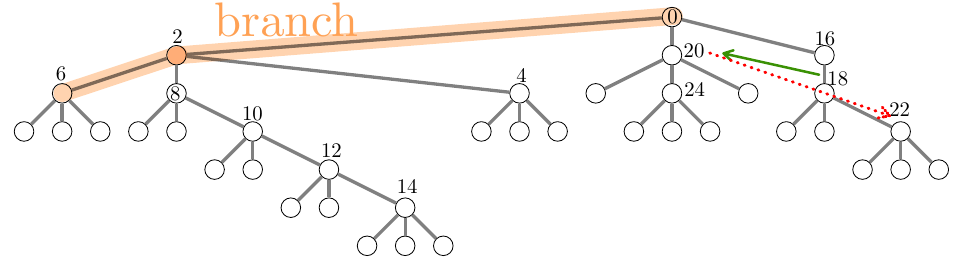}\caption{We show a 3-ary rooted tree, a branch (0,2,6)
 and a non-admissible numeration. `Last-parents' are $4,6,14,22$ and $24$. All labels up to (incl.) $20$ are placed legally, as it is possible to leave a branch,
 as in the transition $18\to 20$, but never allowed to come back (forbidden is in red/dashed).
 \label{fig:branch_and_admissible_not}}
\end{figure}\vspace{-2ex}

Consider the set $\Lambda_D = \{ (l,\alpha) : l =0,1,2\ldots \und  \alpha = 0,1,\ldots,D^p-1 \}$. Since $\Lambda_D$ will index all possible $D$-ary trees,
we like to think of this set as a subset of a lattice on the plane in polar coordinates [in gray in Fig. \ref{fig:drzewo0} (c)] .
For a fixed melon $C$, we associate to a labelling of $V(C)$
---or equivalently, to  $\mathcal T$, the tree that originates $C$---
an arborescence (for us: a directed edge $D$-colored tree) that embeds in $\Lambda_D$
in a natural fashion, as exemplified for the tree $\mathcal T \subset \Lambda_3$
in Figure \ref{fig:drzewo0}. To form the arborescence  $\mathcal A$
one removes all leaves from the tree $\mathcal T$, and keeps only those edges
that became parents. The way we embed the tree $\mathcal T$ in $\Lambda_D$
respects the spaces of the leaves not present in $\mathcal T$. To be concrete, first, the coordinate $l$
represents the depth of each branch of $\mathcal T$  (depth: number of
children, branch: any root to end-point path). Let us construct the arborescence $\mathcal A$
inductively, supposing that
$(l,\alpha)$ is already the target of a directed edge in $\mathcal A$.
Then, by definition, there is an edge from  $(l,\alpha)$ to
$(l+1,\beta)$ whenever
\begin{itemize}
 \itemb
$\beta= D \alpha + r$,
for some $r= 0,\ldots,D-1$,  that is
when $\lfloor \beta /D \rfloor =\alpha$, \textit{and}
\itemb
the vertex in $\mathcal T$ that corresponds to  $(l,\alpha)$
has a colour-$(r+1)$ descendent.
\end{itemize}  We assumed that   $(l,\alpha)$
is target of a directed edge in $\mathcal A$, so
we know  that $(l,\alpha)$ corresponds to a parent $v$ in $\mathcal T$, thus
the second  condition says
that $v$ must be parent of a colour-$c$ child with $c=r+1$.
The remainder $(c-1)$ of the division
determines the color $c$ of the edge (to match the conventions of reserving
the colour $0$ for propagators, this
colour cannot match the remainder exactly and---somehow annoyingly---has to be shifted by $1$).\\

\begin{figure}
\begin{minipage}{.20\textwidth}
\includegraphics[width=.79\textwidth]{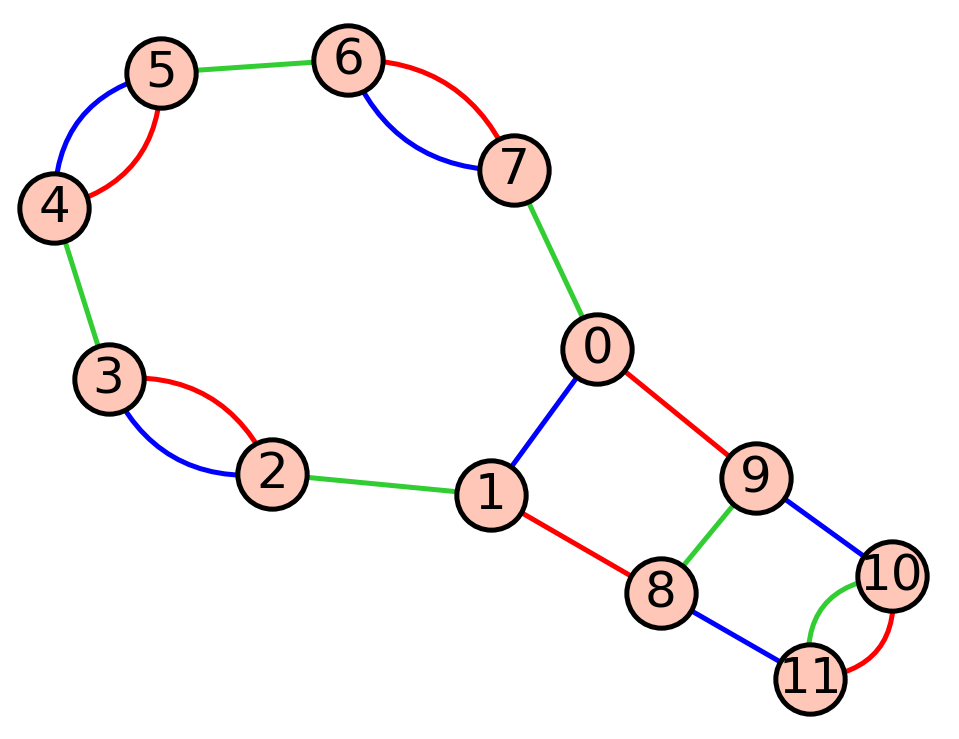}
\end{minipage}~
\begin{minipage}{.3\textwidth}
\includegraphics[width=.95\textwidth]{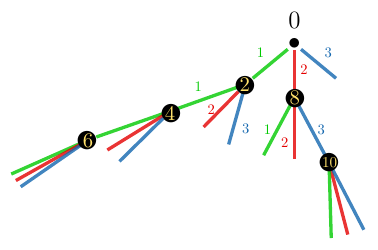}
\end{minipage}~
\begin{minipage}{.55\textwidth}\qquad
\includegraphics[width=.7\textwidth]{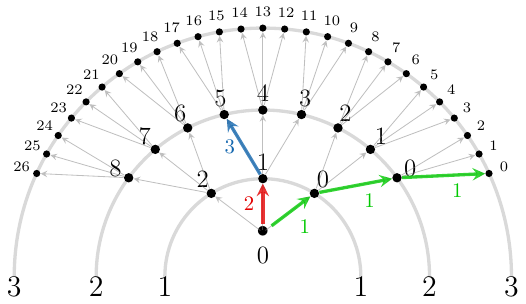}
\end{minipage} \small
\phantom{A}\hspace{0ex} (a) $C$'s label $\lambda$
\phantom{A}\hspace{9ex} (b) Tree $\mathcal T_\lambda(C)$
\phantom{A}\hspace{4ex} (c) Arborescence $\hookrightarrow \Lambda_3$ (black dots $= \Lambda_3$)
\caption{A melon $C$ with vertex label $\lambda$ (a)
can be constructed from a tree (b), from
which also an arborescence
is constructed and depicted as embedded in $\Lambda_3$ in (c). When (b) is turned upside down and
the leaves of the tree pruned, we obtain (c).
The first coordinate of $(l,\alpha)\in \Lambda_3$ are denoted by black numbers $l=0,1,2,3$.
The second coordinate $\alpha$ is the dot's number right to left
at depth $l$. The label (a) of the melon yields here two endpoints
with coordinates (2,5) and (3,0).
\label{fig:drzewo0} }\normalsize
\end{figure}



\begin{minipage}{.74\textwidth}
\begin{example}\label{ex:drzewa}Consider the melon $C$ on the right.
Below, three admissible labellings of $V(C)$ we display the associated arborescences.
In each picture, $r_m$ denotes the remainder of the $m$-th division by $D=3$
(the depicted colour corresponds with $r_m+1$). For instance
the first example below has two end points. The longest branch
yields $16= \sum_a r_a 3^a = 1\times 3^2 +2 \times 3^1 + 1 \times 3^0$, and the other just $0=r_0' \times 3^0$.
\end{example}
\end{minipage}~
\begin{minipage}{.20\textwidth}
\[ \notag
\phantom{ABC}
C = \hspace{-3ex}
\runter{
\includegraphics[width=.8\textwidth]{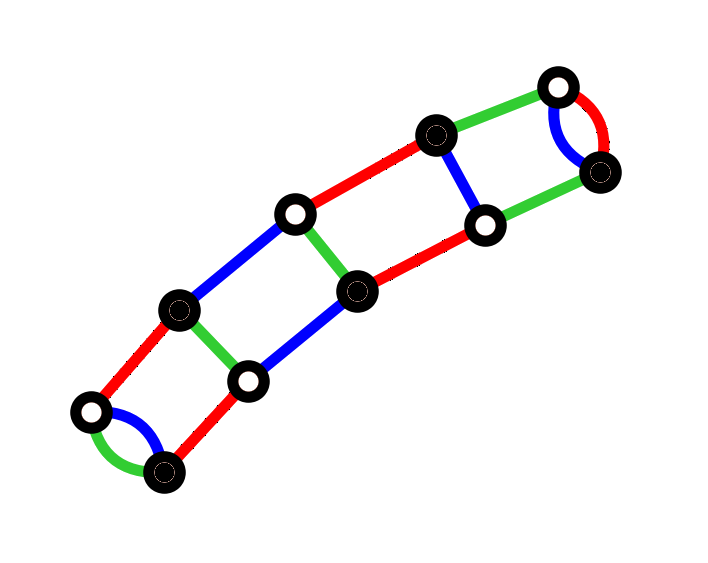}}
\]
\end{minipage}

\begin{subequations}
\[
\runter{
\includegraphics[width=.25\textwidth]{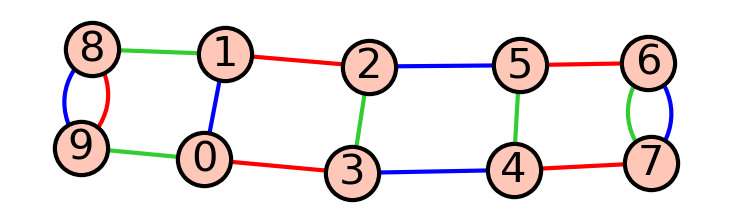}} & \quad \to \quad & \hspace{0ex}
\runter{\includegraphics[width=.48\textwidth]{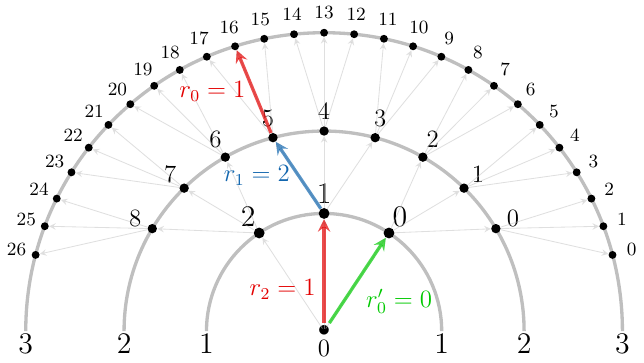}}
\label{ex_branches0}
\]
 Similarly for the next two
examples, each with a single branch.  \[
\runter{
\includegraphics[width=.25\textwidth]{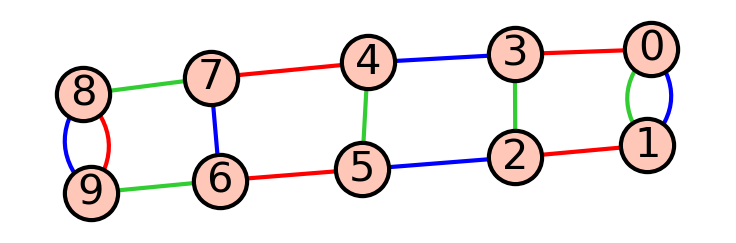}} & \quad \to \quad &
\runter{\includegraphics[width=.5\textwidth]{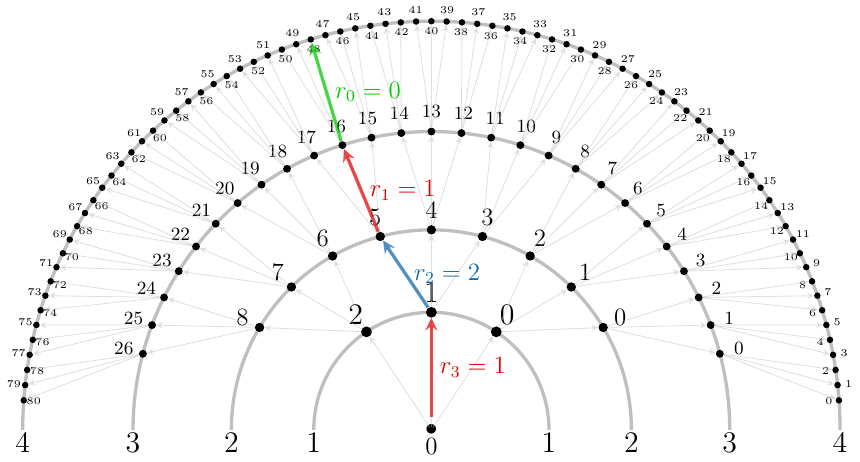}}
\label{ex_branches1}\\
\runter{
\includegraphics[width=.25\textwidth]{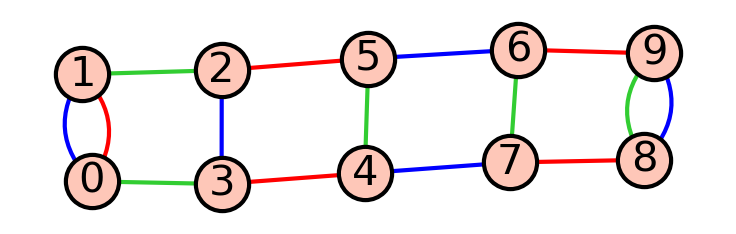}} & \quad \to \quad &
\runter{\includegraphics[width=.5\textwidth]{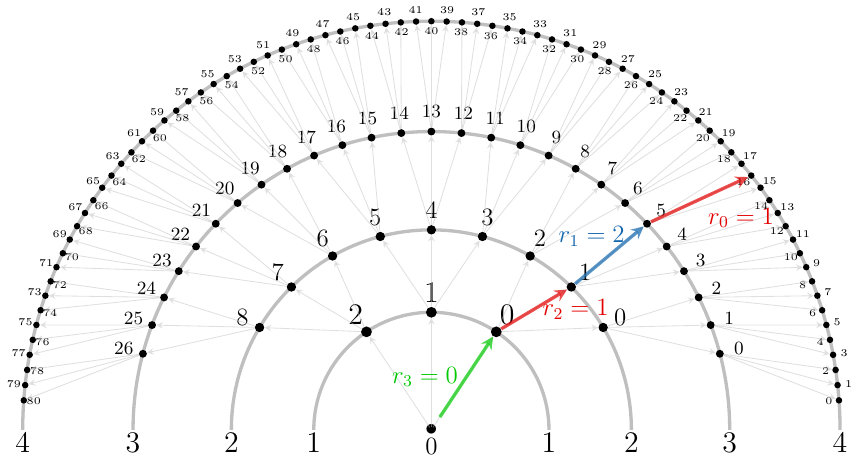}}
\label{ex_branches2}
\]%
\end{subequations}%
Notice that we care about admissible labels. An example of non-admissible label is
\[\runter{
\includegraphics[width=.205\textwidth]{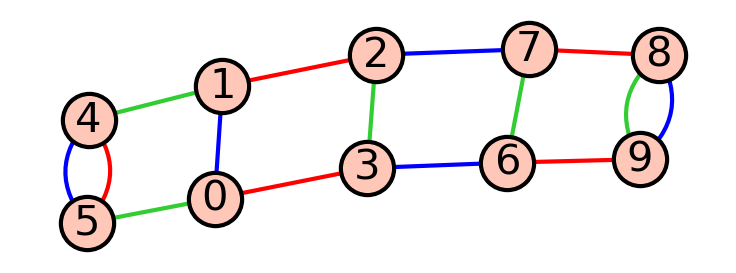}}\]
since starting with the root (0),
the labels of the right branch `jump' from $\{2,3\}$ to $\{6,7\}$, while the missing
$\{4,5\}$ is on the left. In the tree associated to this melon, this is a jump to another branch, and trying to come back to the original. This example also motivates
the following definition.
\begin{definition}\label{def:order_melonic} First, let us consider
$\Lambda_D$ with its lexicographic order
\[
(l,\alpha) > (l',\alpha') \qquad  \wenn  l> l' \text{ or if } l=l' \,\,\und  \alpha>\alpha'.
\]
We can extend this order to  $\Lambda_D^k$, $k\in\N$.
Consider two vertex-labels $ \lambda$ and $\lambda'  $
of the same $D$-coloured melon $B$, and let
\[
(l_1,\alpha_1),
(l_2,\alpha_2),&
\ldots,
(l_b,\alpha_b) \\
(l_1',\alpha_1'),
(l_2',\alpha_2'),&
\ldots,
(l_b',\alpha_{b'}')
\]be
the coordinates in $\Lambda_D$
of the $b$ (resp. $b'$) endpoints of $\lambda$ and $\lambda'$ respectively, also lexicographically ordered
\[ (l_1,\alpha_1)> (l_2,\alpha_2)
> \ldots > (l_b,\alpha_b)  \,\,\und
 (l_1',\alpha_1')> (l_2',\alpha_2')
> \ldots > (l_b',\alpha_b') . \]
Then, by definition, $\lambda>\lambda'$ \begin{itemize}
                                            \itemb
if there exist an $i=1,2,\ldots,\min\{b,b'\}$ with
$(l_i,\alpha_i) > (l_i',\alpha_i')$
\itemb \textit{and} if $i$
is the minimal index for which equality
 does not hold, i.e.
$ (l_a,\alpha_a) = (l_a',\alpha_a')$ for all $a<i$.

                                           \end{itemize}
Having this order at our disposal, we can define \textit{the
canonical label} of a melon $C$ as
$\lambda^{\can}(C)= (\lambda_0^{\max},\lambda_1^{\max})$, where
$\lambda_0^{\max}$ is maximal for the black (or even) vertices in the above sense, induced by
the lexicographic order of its coordinates in $\Lambda_D$, and (see Sec. \ref{sec:setting_notation}
for notation)
\[\lambda_1^{\max}(v) := \lambda_0^{\max}(w) + 1,\,
 \text{ whenever $(v,w)$ is a \dip{ }of $C$}. \label{lambda1}
\]

\end{definition}

Before verifying that this definition makes sense, let us digest it by retaking the example given above.

\begin{example} In Example \ref{ex:drzewa} we have given three
arborescences (corresponding to three trees) of the
melon denoted by $C$ there.
\begin{enumerate}[(a)]
 \item The vertex labelling in \eqref{ex_branches0}
 has two endpoints, whose
 coordinates in $\Lambda_3$ are
 $(l_1,\alpha_1) = (3,16) $ and $(l_2,\alpha_2)=(1,0)$.
 This corresponds to the ternary
 $(121)_3=16$ and $0_3=0$, where the RHSes are in the decimal
 representation.
 \item In \eqref{ex_branches1} one has a single endpoint
 with coordinate $(4,48)$ corresponding to $48=(1210)_3$.
  \item In \eqref{ex_branches2} has coordinates
  $(l,\alpha)= (4,16)$, with $16 = (0121)_3$.
\end{enumerate}
Notice that the last `sixteen'
  has $l=4$  digits, as $l$ is determined by the arborescence,
  whence we have a digit 0 on the left; this 0 is relevant, as
  in the lexicographic order $(4,16) > (3,16)$, the latter being the sixteen of \eqref{ex_branches0}.
One has  then the following  order for these vertex labellings,
$\lambda|_{\text{(b)}} > \lambda|_{\text{(c)}}> \lambda|_{\text{(a)}}$. One has
$\lambda^{\max} (C) = \lambda|_{\text{(b)}}$, as this is the largest number with four
digits in $\{0,1,2\}$ (corresponding to colours $\{1,2,3\}$) that yields the original melon $C$.
Observe also that, although $ (2222)_3 >  (2110)_3 >  (1210)_3$,
the arborescences with coordinates $(2222)_3 $ and $(2110)_3$ belong to different melons, namely to
\[\runter{
\includegraphics[width=.12\textwidth]{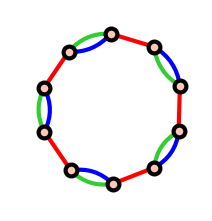}} \to (2222)_3  \qquad \quad\runter{
\includegraphics[width=.133\textwidth]{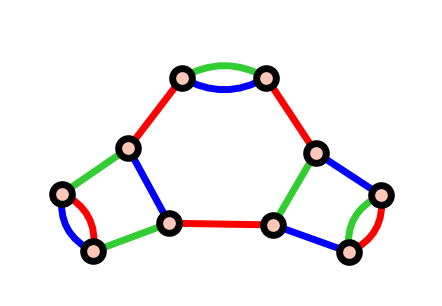}}\to (2110)_3
 \]
This emphasizes that we can compare labels only for a fixed melon,
and in the case at hand the maximal label is \eqref{ex_branches1}.


\end{example}

\begin{claim}\label{thm:unicity_max_claim}
The canonical label of a melon $\lambda^{\can}(C)$ is, indeed, unique.
\end{claim}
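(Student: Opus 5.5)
The plan is to separate the claim into two parts. First, the maximum of the order in Definition \ref{def:order_melonic} is attained by a single embedded arborescence $\mathcal A\subset\Lambda_D$. Second, that arborescence, together with admissibility and \eqref{lambda1}, determines the full label $(\lambda_0^{\max},\lambda_1^{\max})$. By "determines" I mean: the label is determined uniquely up to automorphisms of the coloured graph $C$. Labels related by such an automorphism give the same invariant $C(T,\bT)$, which is all that is needed later.

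\emph{Step 1 (existence and the order is total on the relevant set).} For fixed $C$ there are finitely many vertex-labellings, hence finitely many trees $\mathcal T_\lambda(C)$ and arborescences. Melonicity guarantees that at least one admissible labelling exists. To see this, run the dipole-removal procedure of Section \ref{sec:new_recipy}, then reorder the recorded pairs so that the tree is numbered depth-first. The lexicographic order on $\Lambda_D$ is total. Comparing sorted endpoint lists entry by entry is then total, except for one possible gap: one list could be a proper prefix of the other. I will rule this gap out. Every arborescence coming from $C$ has the same number of nodes, namely the number of parents in a tree with $p$ nodes, which is fixed by $C$. Moreover, the arborescence is the union of the root-to-endpoint paths in $\Lambda_D$. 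If the endpoint list of $\lambda'$ properly extends that of $\lambda$, consider an extra endpoint. It is a leaf of the arborescence and is distinct from the common endpoints. It therefore contributes at least one node not lying on their paths, which contradicts the equal node count. Hence the order is total, and a maximal element exists.

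\emph{Step 2 (equal endpoint data forces equal arborescence and tree).} Suppose two labels are both maximal. Then their sorted endpoint lists coincide. Each endpoint $(l,\alpha)$ encodes, through its base-$D$ digits, the entire colour sequence of the path from the root. So the union of these paths, namely the arborescence, coincides. Re-attaching the leaves is forced by $D$-regularity: each parent has exactly one child of each colour, and children that are not parents are leaves. Therefore the trees $\mathcal T_\lambda(C)$ agree as coloured rooted trees.

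\emph{Step 3 (the tree determines the label).} This is the step I expect to be the main obstacle. Admissibility says the numbering descends consecutively along a branch and may only jump sideways to a branch it will never revisit. So the even labels give a depth-first traversal of $\mathcal T$. The remaining freedom is the order in which sibling subtrees are visited. I would fix it by the same maximality, visiting the child with larger $\alpha$-coordinate first. I would argue that any other sibling order either violates admissibility or produces a different embedding that is strictly smaller in the order. With that fixed, $\lambda_0^{\max}$ is determined on the tree nodes. Since nodes correspond to black vertices through the \dip s, $\lambda_0^{\max}$ is determined on $V_0(C)$ up to automorphisms of $C$ preserving the tree. Finally, $\lambda_1^{\max}$ is forced by \eqref{lambda1}. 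For this I need that \dip s pair each white vertex with exactly one black vertex. That follows from the inductive removal process described in Section \ref{sec:melonic_univ}: each step removes exactly one black and one white vertex.
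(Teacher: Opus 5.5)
Your Steps 1 and 2, together with the final sentence of Step 3, follow the paper's route. The paper's proof only notes four things: the $\Lambda_D$-coordinates of a maximal label are base-$D$ integers encoding the colour words of the branches; the largest such datum is unique; base-$D$ expansions are unique; and the \dip s then fix $\lambda_1^{\max}$. Your exclusion of the prefix case and your reconstruction of the coloured rooted tree from the endpoint list make explicit what the paper leaves implicit. (In Step 1, state the invariant as ``the nodes of $\mathcal A$ are exactly the $p$ \dip s''. The number of parents in a tree \emph{with $p$ nodes} does depend on the root.)

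The gap is in Step 3, where you expected it. You want maximality to fix the order in which sibling subtrees are numbered, claiming that any other order either violates admissibility or gives a strictly smaller embedding. Neither happens. The coordinates $(l,\alpha)$ of a node of $\mathcal A$ depend only on its depth and on the colour word of its path from the root. Re-ordering siblings changes neither the root nor the coloured tree, so it produces literally the same arborescence and endpoint list, i.e.\ a tie in Definition~\ref{def:order_melonic}. Admissibility only asks for a depth-first numbering and places no constraint on sibling order.

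Concretely, for $D=3$ insert dipoles of colours $1,2,3$ on the three edges of the quadratic melon, with \dip s $(v_0,w_0)$ and $(v_c,w_c)$, $c=1,2,3$. Rooting at $v_3$ gives $v_3\to v_0$ (colour $3$) and $v_0\to v_1,v_2$ (colours $1,2$), with endpoints $(2,7)>(2,6)$. The roots $v_2,v_1,v_0$ have leading endpoints $(2,5),(2,2),(1,2)$. Hence $(v_3,v_0,v_1,v_2)\mapsto(0,2,4,6)$ and $(v_3,v_0,v_1,v_2)\mapsto(0,2,6,4)$ are both admissible and both maximal. Yet $C$ has no nontrivial colour-preserving automorphism, since the parallel edges at $v_1$ and $v_2$ carry colours $\{2,3\}$ and $\{1,3\}$. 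So the maximum is not attained at a unique label, not even up to automorphisms, and no argument from maximality can close Step 3.

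Your rule ``larger $\alpha$ first'' works, but only if it is imposed as an extra tie-break in the definition of $\lambda^{\can}$; then Step 3 gives uniqueness up to automorphisms of $C$. The paper's proof passes silently from uniqueness of the maximal endpoint data to uniqueness of the label, so it does not address this either.

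Finally, for \eqref{lambda1} to be well defined you need the \dip{} matching to be independent of the order of dipole removals. It is not enough that each removal run yields a perfect matching. For $D\geq 3$ this follows, e.g., from the uniqueness of the face-maximizing contraction invoked in Lemma~\ref{thm:bubbles_inequalities_Lemm}(a).
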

\begin{proof}
We defined $\lambda^{\can}(C) = (\lambda_0^{\max},\lambda_1^{\max})$
where $\lambda_0^{\max}$ is determined by its coordinates in $\Lambda_D$ being maximal,
and since a melon knows its \dip s, $\lambda_0^{\max}$ determines $\lambda_1^{\max}$, cf. Eq. \eqref{lambda1}.
But by construction of $\lambda_0^{\max}$, the $\Lambda_D$ coordinates of
$\lambda_0^{\max}$ are a list of $b$ base-$D$ integers (set $p=p(C)$ here)
\[ \label{Dary_proof}
(c_{0,\mu} c_{1,\mu} c_{2,\mu}\cdots  c_{l_\mu-1,\mu} )_D
=\sum_{a=0}^{l_\mu -1} (c_{a,\mu} -1) D^a \qquad \mu=1,\ldots,b
\]
where $b$ is the number of endpoints of $\lambda_0^{\max}$ in $\Lambda_D$, and
$c_{i,\mu}$ is the colour of the $i$-th dipole insertion of the $\mu$-th branch
that goes from the root (0) to the $\mu$-th endpoint (recall $c_{a,\mu} -1$ are the  remainders of division by $D$).
Among all these numbers there is exactly one with the  largest $l_\mu$ and largest value of the RHS of Eq. \eqref{Dary_proof}.
Uniqueness then follows from that of the basis-$D$ representation of integers.
\end{proof}

%

\section{Equivalent correlators in non-melonic models}\label{sec:correlators_melon_nonmelonic}

To prove the main lemma of this section we define the swap.
Given two Feynman graphs  $G$ and $H$ with the same number of colours,
their  \textit{swap}  $G \swap{v}{w} H$ at
at $v\in V_0(G)$  and $w\in V_1(H)$ is the graph defined in Figure \ref{fig:swap} (for details, see the original definition of \cite{fullward}, where a different notation is used). The swap
has useful properties, like behaving as the graph theoretical counterpart\footnote{In particular, it is additive in the Gur\u au-degree,
as proven by \cite{fullward} and later also in \cite{Cristofori_Dartois_etal}.} of a
connected sum of topological spaces, but we shall extensively exploit only that
\[
f(  G \swap{v}{w} H  ) = f (G) + f (H) - D.
\label{swap_faces_deficit}
\]

\begin{figure}\centering
\includegraphics[width=.55\textwidth]{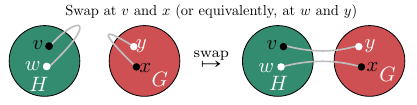}
\caption{The swap for $H$ and $G$ $(D+1)$-coloured graphs (in gray, we represent propagators or $0$ colour).\label{fig:swap}}
\end{figure}
This formula can be proven by noticing that the $D$ faces that contained the
propagator at $v$ and those $D$ faces
that contained the
propagator $w$ merge after applying the swap.\\[-1ex]

Consider now a tensor model $S\inter(T) = \summ i g_i N^{s_i} B_i(T,\bT)$,
which need not be melonic. We prove that for any
(connected) melonic invariant $C$, the large-$N$ moment $m(C)$ depends only on $p(C)$
and not on the graph $C$ itself. That is,
we can drop the whole heavy combinatorics a
graph entails once we had verified that $C$ is melonic, and
in that case $m(C)$ sees only the number of vertices, \textit{even if $S\inter$ is not melonic}.\\[-1ex]


\begin{lemma}\label{thm:bubbles_inequalities_Lemm}
Let $C$ be a connected  melonic graph and $B_{i_1},\ldots,B_{i_n}$ not necessarily melonic, but connected,
and $i_1,\ldots,i_n \in \{1,2,\ldots,m\}$ not necessarily different.
Let us denote the
\dip s of $C$ by close-lying, encircled black-white couples; those of the
red (hatched) graph are arbitrary black-white pairs. With gray solid lines one
denotes only the Wick contractions of $C \dot\cup B_{i_1}\dot\cup B_{i_2}\dot\cup  \cdots \dot\cup  B_{i_n}$ that differ on the two sides of the following inequalities: \\[-4ex]
\begin{subequations}\label{bubbles_inequalities}
\[\label{bubbles_inequalitiesA}
\raisebox{-.50\height}{\includegraphics[width=.417\textwidth]{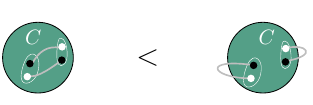}} \phantom{ABc}\\[2ex]\label{bubbles_inequalitiesB}
\\[-7ex]\label{bubbles_inequalitiesC}
\raisebox{-.50\height}{\includegraphics[width=.57\textwidth]{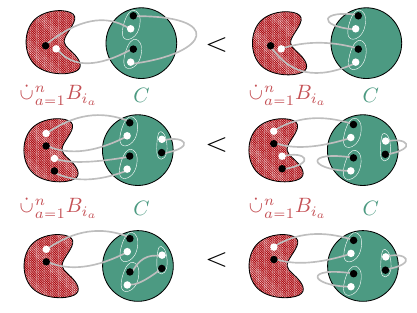}} \\[-8ex]\label{bubbles_inequalitiesD}
\\\notag
\]
which hold after applying $f (\balita)$, that is `after taking the number of faces',
on both sides. All blobs are connected (after Wick contracting; between the enumerated
lines, we display the graphs that are being Wick-contracted).

\end{subequations}
\end{lemma}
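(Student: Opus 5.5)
The plan is to reduce each of the inequalities \eqref{bubbles_inequalities} to a local face count at a single propagator rerouting, and then to extend this from genuine dipoles to arbitrary \dip s by induction along the dipole-removal sequence of Section~\ref{sec:new_recipy}. Since the figures only show the contractions that differ between the two sides, I read each inequality as a comparison of two Wick contractions $\pi,\pi'$ of $C\,\dot\cup\, B_{i_1}\dot\cup\cdots\dot\cup B_{i_n}$ that agree except on a bounded number of propagators at a \dip\ $(v,w)$ of $C$. On the side that should have more faces (which I expect to be the left), $w$ is contracted with $v$. On the other side, $v$ and $w$ are contracted with vertices $x,y$ lying elsewhere, in $C$ or in the red blobs. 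This is an interpretation of the pictures, not something the statement spells out. In every case the comparison is a \emph{transposition} of two propagators: $\pi'(w)=v$ and $\pi'(\pi^{-1}(v))=\pi(w)$.

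\emph{Step 1 (elementary rerouting bound).} For each colour $c$, exchanging the endpoints of two propagators changes the number of $c$-coloured faces by $-1$, $0$ or $+1$. If both propagators lie on the same $c$-face, that face either splits or stays one face; if they lie on different $c$-faces, those faces merge. Summing over colours gives
\[
|f(G_{\pi'})-f(G_\pi)|\le D .
\]

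\emph{Step 2 (base case: $(v,w)$ an actual dipole of colour $i$).} Then $v$ and $w$ share the $D-1$ edges of colours $k\neq i$. In $G_{\pi'}$ the propagator $(w,v)$ closes these into $D-1$ faces of length two. In $G_\pi$, by contrast, each $k$-face through $v$ also passes through $w$, so it contains both rerouted propagators. Hence every colour $k\neq i$ contributes exactly $+1$ when passing from $\pi$ to $\pi'$. Colour $i$ contributes at least $-1$, and I will check that it contributes $0$ or more by observing that the $i$-strands through $v$ and $w$ lie on a common face whenever $(v,w)$ is a dipole. This gives $f(G_{\pi'})\ge f(G_\pi)+D-2\ge f(G_\pi)$. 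I expect to be able to phrase this more cleanly with the swap: $G_\pi = G_{\pi'}\swap{v}{w}H$-type decompositions together with Eq.~\eqref{swap_faces_deficit}, where the $2$-vertex melon contributes $D$ faces.

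\emph{Step 3 (general \dip s by induction on $p(C)$).} Take a \dip\ $(v,w)$ that is not an actual dipole. By Section~\ref{sec:melonic_univ} it becomes an actual dipole only after removing actual dipoles $(v_j,w_j)$ of $C$ and regluing the broken edges. I will show that, under self-contraction of an actual dipole, removing it commutes with the comparison. Concretely, if $(w_j,v_j)$ is self-contracted on both sides, then the Feynman graph equals, up to the swap identity \eqref{swap_faces_deficit}, the Feynman graph of $C$ with that dipole removed, with exactly $D-1$ extra faces. This shift is the same on both sides, so the inequality for $C$ follows from the inequality for the smaller melon.

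The main obstacle is the case where an inner dipole $(v_j,w_j)$ is \emph{not} self-contracted in $\pi$. There I plan first to apply Step 2 to $(v_j,w_j)$, which does not decrease faces, and only then the induction. The delicate point is to check that this preliminary rerouting does not disturb the propagators at $(v,w)$ that the inequality compares. This is exactly where I expect to need the ordering from the canonical tree, processing the \dip s from the leaves upward, and, for the mixed cases with the red blobs, connectedness of the blobs so that Eq.~\eqref{swap_faces_deficit} applies.
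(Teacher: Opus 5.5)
\paragraph{The gap: a single transposition at a general \dip\ can lose faces.} You read every inequality as a single transposition at a \dip\ $(v,w)$, with the rest of the contraction arbitrary. Under that reading the lemma becomes false as soon as $(v,w)$ is not an actual dipole, so Step~3 cannot be completed. Here is a counterexample.
\begin{itemize}
\item Build $C$ from $T\cdot\bT$ on $a$ (black) and $b$ (white). Insert a colour-$1$ dipole $(v,w)$, then a colour-$2$ dipole $(v',w')$ on the colour-$2$ edge $v$--$w$. The \dip s are $(a,b)$, $(v,w)$, $(v',w')$, and $v,w$ now share only the colours $\geq 3$.
\item Take $\pi\colon w\mapsto v',\ w'\mapsto a,\ b\mapsto v$, and its transposition at $(v,w)$, $\pi'\colon w\mapsto v,\ w'\mapsto a,\ b\mapsto v'$.
\item Count $c$-faces as cycles of $\pi\circ\sigma_c$ on the black vertices, where $\sigma_c$ sends a black vertex to its colour-$c$ neighbour. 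Then $\pi$ has $2$ faces in colour $1$, $2$ in colour $2$, and $1$ in each colour $\geq 3$, so $f=D+2$. And $\pi'$ has $1$, $1$, and $2$ in each colour $\geq 3$, so $f=2D-2$.
\end{itemize}
Contracting the \dip\ therefore changes the face count by $D-4$. This is negative for $D=3$ and zero for $D=4$; it is the same double-connection phenomenon the paper describes right after the lemma. This is exactly your ``main obstacle'' case, and the canonical ordering cannot repair it. Rerouting the inner dipole $(v',w')$ gains $D-2$ on the side of $\pi$ (giving $2D$ faces) but $D$ on the side of $\pi'$ (which becomes canonical, with $3D-2$ faces). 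So the preliminary rerouting necessarily disturbs the comparison you want to preserve.

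\paragraph{What is actually needed.} In (a) the lemma requires the global statement the paper spells out: the canonical pairing (all \dip s self-contracted) has strictly more faces than \emph{every} other pairing of $C$. Your ingredients do prove this if you apply the rerouting only to the non-canonical side, which is the paper's sketched argument:
\begin{itemize}
\item Step~2 shows that a face-maximizer self-contracts every actual dipole, with gain $\geq D-2>0$.
\item A self-contracted dipole can be removed at a cost of exactly $D-1$ faces.
\item Induction on $p(C)$ then gives $f(\pi)<f(\pi_1)\le f(\pi_{\mathrm{can}})$.
\end{itemize}
The bulk is handled by the swap identity \eqref{swap_faces_deficit}, not by local counting.
\begin{itemize}
\item In (b) the \dip\ is the pair \emph{attached to the blob}, not a self-contracted pair. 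Both sides are swaps of the same connected blob $\Gamma$ with self-contractions $\gamma_L,\gamma_R$ of $C$. Hence $f(G_R)-f(G_L)=f(\gamma_R)-f(\gamma_L)$, which is an instance of (a).
\item (d) reduces to (a) by swapping propagators until the graph disconnects.
\item (c) changes four propagators into one bulk component at two \dip s simultaneously, so it is not a single transposition. It needs its own count: the paper distinguishes whether the two \dip s are joined by an edge of some colour $i$, and how many colours the four bulk vertices share. Your proposal does not address (c).
\end{itemize}
Finally, in Step~2 colour $i$ need not contribute $\geq 0$. In the four-vertex melon obtained after the first insertion above, take $b\mapsto v$, $w\mapsto a$: the colour-$i$ strands at $v$ and $w$ lie on distinct faces. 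Because the graphs are bipartite, each colour changes by exactly $\pm1$ under a transposition, so the gain is $D-2$ or $D$. Your bound $D-2$ therefore still holds, and it is strict because $D\geq 3$, which the lemma requires.
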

Before writing the proof, let us explain, less pictorially, the meaning of the previous equations, respectively:
\begin{enumerate}[(a)]
 \item Let us call the Wick pair of a melon
 that consists
 exclusively of \dip's the \textit{canonical Wick pair}.
 Then any other Wick pairing of a melon
 brings in less faces than the canonical Wick pair.
\item Call $\mathcal{B}=  B_{i_1}\dot\cup B_{i_2}\dot\cup  \cdots \dot\cup  B_{i_n}$. The red (hatched) blob means
a Wick contraction that restricted to $\mathcal{B}$
yields a connected graph. Now suppose that exactly two
vertices of $\mathcal{B}$ are Wick-contracted with two of $C$.
Then Eq. \eqref{bubbles_inequalitiesB}
says that this contraction yields more faces when
the two vertices of $C$ are \dip's than when they are not (all other undepicted
contractions being equal on both sides).

\item Contracting a melon $C$ with the same connected component of $\mathcal{B}$ with more than two propagators
reduces the number of faces.

\item Eq. \eqref{bubbles_inequalitiesD} emphasizes that
Eq. \eqref{bubbles_inequalitiesA} holds also when the melon $C$
is contracted with other graph.

\end{enumerate}

\begin{proof} We prove each inequality independently. In the proof, we refer to the  Wick
contracted $B_{i_1}\dot\cup B_{i_2}\dot\cup  \cdots \dot\cup  B_{i_n}$  as \textit{bulk}.  (The bulk consists of the kidney-like blobs drawn in this Lemma.)

\begin{enumerate}[(a)]
 \item This one has been proven in \cite{TwofoldUniversality}, but we sketch the proof. First, one shows that,
 independent of the anatomy of the graphs $C$, the
 difference $\Delta$ of faces from the leftmost contraction (in gray, at an $i$-coloured dipole)
 minus the number of faces form the rightmost
 contraction (not matching the two dipole vertices) below, is at least $D-2$:
\[
\runter{\includegraphics[width=.5\textwidth]{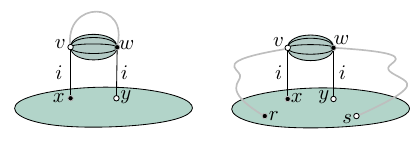}}
\]
Indeed, the differences $\Delta$ of faces (L minus R) is $\Delta=D-2$ if there is a path of alternating
colours $(0,i,0,i,\ldots, 0,i)$ from $x$ to $r$, and is $\Delta=D$ if not. The fact that always $\Delta>0$  forces any faces-maximizing Wick contraction
 to sit at any dipoles of $C$. It is not complicated to prove that this
 maximality is preserved by removing dipoles (and the respective Wick pair that joins their vertices)
and adding dipoles, as far as the new vertices are also Wick contracted (details are in \cite[Lem. 3.4, Rem. 3.5]{TwofoldUniversality}). But this is precisely the property describing \dip s, which finishes the proof

\item
Let $G_L$ and $G_R$ the graphs on the left and the right of Ineq. \eqref{swap_faces_deficit}.
Observe that both are the swap of the hatched blob $\Gamma$
(the same Wick  contraction of $B_{i_1}\dot\cup B_{i_2}\dot\cup  \cdots \dot\cup  B_{i_n}$)
with two different Wick contractions $\pi_L$ and $\pi_R $, respectively, so $\gamma_L:=\pi_L(C)$ and $\gamma_R:=\pi_R(C)$ of $C$.
By Eq. \eqref{swap_faces_deficit},
\[
f(G_\bullet) = f(\Gamma) +  f(\gamma_\bullet) -D  \qquad \bullet=L,R.
\]
Subtraction yields
\[
f(G_R)-f(G_L) =   f(\gamma_R) -f(\gamma_L)
\]
Notice that $\gamma_R$ and $\gamma_L$ are a pair of graphs that appears in
 Ineq. \eqref{bubbles_inequalitiesA}, therefore $f(G_R)-f(G_L) >0$.

\item In this third graph inequality we depicted the uppermost pair of propagators
to stress that we remain in the set of connected Feynman graphs. So we can focus only on the changes
that the swap of the lower-most four vertices in Ineq. \eqref{bubbles_inequalitiesC}.
Observe that the melon $C$ is attached to the bulk in the LHS of the equality
via four propagators. Two cases emerge: either the four vertices are connected
by a line of colour $i$ (for some $i=1,\ldots,D$) and then so happens in the RHS of the same inequality, or not. The affirmative case
is depicted here:
\[ \label{case_mit_i}
\runter{\includegraphics[width=.57\textwidth]{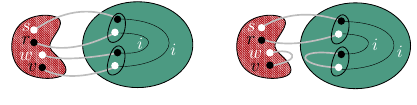}}
\]
Observe that the face of colour $i$ can be shared by the propagators at the
two sets of \dip s on the melon. On the bulk side,
let $k$  ($k=0,\ldots,D$) denote the number of common faces
(with colours $j_1,\ldots,j_k$) that are shared by all four depicted vertices $r,s,v,w$ in bulk.
We depict with a zigzag line the collective of the edges with colours $j_1,\ldots,j_k$
that share (by  assumption) endpoints. The propagator is denoted, as always here, with gray: \vspace{-2ex}
\[
\runter{\includegraphics[width=.135\textwidth]{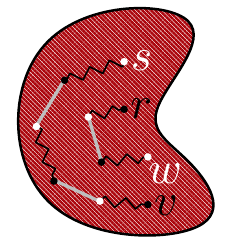}}
\]\vspace{-2ex}

There are still two cases to consider:   \begin{itemize}
                                          \item
If $i\notin \{j_1,\ldots,j_k\}$,
it follows from edge-regularity of the graphs, that the number of
faces implied in the four propagators is $3D-k$ for the rightmost graph in
\eqref{case_mit_i}, while $2D-k-1$ in the graph on the left there.
Their difference being  positive shows  Ineq. \eqref{bubbles_inequalitiesC}.

     \item If $i\in \{j_1,\ldots,j_k\}$,
     the four propagators bring in $3D-k-1$ faces on the right, and
     $2D-k+1$ on the left. Their difference $D-2>0$ shows   Ineq. \eqref{bubbles_inequalitiesC}.
     \end{itemize}

If the anatomy of the melon is \textit{not} as in \eqref{case_mit_i} above,
but rather
\[
\runter{\includegraphics[width=.57\textwidth]{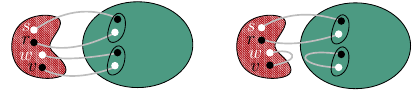}}
\]
then the difference of the number of faces RHS $-$ LHS
is always $D$, independently of the anatomy of the bulk (the analysis
has to be repeated as above, but it is not illuminating),
and Ineq. \eqref{bubbles_inequalitiesC} follows.
\item Finally, Ineq. \eqref{bubbles_inequalitiesD}
holds if and only if it holds after we swap the propagators
in such a way that the graph is disconnected.
Then we just apply Ineq. \eqref{bubbles_inequalitiesA} and the result follows.\qedhere
\end{enumerate}

\end{proof}

Some readers might have skipped the proof and yet wish to know the essence
of melonicity in Lemma \ref{thm:bubbles_inequalities_Lemm}, so we comment on this.
Were $C$ inside
\eqref{bubbles_inequalities} not melonic, then
there are no \dip s and the following double connections could happen:
\[
\runter{\includegraphics[width=.57\textwidth]{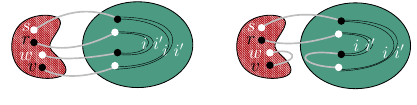}}
\]
In this non-melonic situation there exist cases for which the rightmost graph has less faces than the one on the left. For instance, suppose that there are $k$ faces of colours
$j_1,\ldots,j_k$ joining $s$ with $v$ and $r$ with $w$ above.
Then if $i,i' \in \{j_1,\ldots,j_k\}$, the difference faces${}_R$ $-$ faces${}_L$ is $D-4$, which could be non-positive. So \textit{one} role of melonicity
is to allow at most one colour to connect the two \dip s inside the melon $C$, then
faces${}_R$ $-$ faces${}_L = D-2>0$, as seen above.
\\

We let $\mathscr M_n(B_{i_1},B_{i_2},\ldots, B_{i_n})$ be the set of face-maximizing
Wick pairings that yield a connected Feynman graph out of the invariants $B_{i_1},B_{i_2},\ldots, B_{i_n}$
(cf. \cite{TwofoldUniversality}). Armed with Definition \ref{def:order_melonic}
and Lemma \ref{thm:bubbles_inequalities_Lemm},  we can prove the main claim of this section.

\begin{proposition}\label{thm:ExpMelonEqualPropo}
For melonic $C$ and $\tilde C$, there  is a bijection $\phi$ between the maximal Wick pairings
of a fixed set of interactions with either $C$ or $\tilde C$. That is,
\[ \phi : \mathscr M_{n+1} (C,B_{i_1},B_{i_2},\ldots, B_{i_n}) \to  \mathscr M_{n+1} (\tilde C,B_{i_1},B_{i_2},\ldots, B_{i_n})\]
for any $i_1,\ldots,i_n \in \{1,\ldots,m\}$. In particular, $\phi$ preserves the number of faces.
\end{proposition}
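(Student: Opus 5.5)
The plan is to describe the maximal Wick pairings in $\mathscr M_{n+1}(C,B_{i_1},\ldots,B_{i_n})$ structurally using Lemma \ref{thm:bubbles_inequalities_Lemm}, and then transport them from $C$ to $\tilde C$ using the canonical labels of Claim \ref{thm:unicity_max_claim}. This requires $p(C)=p(\tilde C)$, which I take to be implicit, since the theorem concerns melons of equal degree.

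\textbf{Structural step.} Let $\pi$ be a face-maximizing connected pairing and write $\mathcal B=B_{i_1}\dot\cup\cdots\dot\cup B_{i_n}$.
\begin{enumerate}[(i)]
\item By Ineq. \eqref{bubbles_inequalitiesC}, $C$ cannot be joined to any single connected component of the Wick-contracted bulk through more than two propagators. Otherwise, the swap on the right-hand side yields strictly more faces while keeping the graph connected.
\item Hence $C$ is linked to the bulk through exactly two propagators per attached bulk component, or through none if $n=0$. By connectedness, every bulk component is attached to $C$.
\item Ineq. \eqref{bubbles_inequalitiesB} then forces the two vertices of $C$ used by each attached component to form a \dip{} of $C$.
\item Ineq. \eqref{bubbles_inequalitiesA} and \eqref{bubbles_inequalitiesD} force all remaining vertices of $C$ to be Wick-paired among themselves along \dip s, that is, by the canonical Wick pair.
\end{enumerate}
So a maximal $\pi$ is encoded by three pieces of data:
\begin{itemize}
\item[$\balita$] a maximal connected contraction $\Gamma_1,\ldots,\Gamma_r$ of the bulk into components;
\item[$\balita$] for each component $\Gamma_j$, a choice of a black--white pair $(x_j,y_j)$ in it;
\item[$\balita$] an assignment of distinct \dip s $(v_j,w_j)$ of $C$ to the components, with $v_j\leftrightarrow y_j$ and $w_j\leftrightarrow x_j$.
\end{itemize}
Conversely, I will check that every such datum is maximal. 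By iterating the swap formula \eqref{swap_faces_deficit}, its face count is
\[ f=f(\gamma_{\can}(C))+\textstyle\sum_j f(\Gamma_j) - rD, \]
where $\gamma_{\can}(C)$ is $C$ closed by its canonical Wick pair. Each \dip{} propagator cut and reglued is a swap, and $f(\gamma_{\can}(C))=D+(D-1)(p-1)$ depends only on $p=p(C)$.

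\textbf{Transport step.} Define $\phi$ using canonical labels. Label $C$ and $\tilde C$ by $\lambda^{\can}(C)$ and $\lambda^{\can}(\tilde C)$. Both assign to their \dip s the same set of pairs $(2k,2k+1)$, $k=0,\ldots,p-1$, by Eq. \eqref{lambda1}. Then $\phi$ keeps the bulk contraction and the chosen pairs $(x_j,y_j)$. It replaces the \dip{} of $C$ labelled $(2k,2k+1)$ by the \dip{} of $\tilde C$ with the same label, and closes the remaining \dip s of $\tilde C$ canonically.

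By the structural description, $\phi(\pi)$ lies in $\mathscr M_{n+1}(\tilde C,\ldots)$. Its inverse is built the same way, so $\phi$ is a bijection. Face preservation follows from the displayed formula, because every term in it depends on $C$ only through $p$. Uniqueness of the canonical label (Claim \ref{thm:unicity_max_claim}) makes $\phi$ well defined, independent of any choices.

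\textbf{Expected obstacle.} The main difficulty is the structural step, specifically the use of Ineq. \eqref{bubbles_inequalitiesC} to exclude a component attached through more than two propagators. When three or more propagators meet one component, I would first try to reduce to the four-vertex case by swapping one pair at a time while preserving connectedness; this reduction needs care. A second point to verify is that the \dip s of a melon are exactly its pairs under the canonical Wick pair, and that each can host an insertion independently. I would check this using the iterative dipole-removal description of \dip s.
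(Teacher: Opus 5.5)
Your proposal is correct and follows essentially the paper's own proof. There too, Ineq.~\eqref{bubbles_inequalitiesC} forces every bulk component to meet $C$ in exactly two propagators, and the transport is precomposition with $\sigma=(\tilde\tau)^{-1}\circ\tau$ built from the canonical labels, which is your replacement of the \dip{} labelled $(2k,2k+1)$ in $C$ by the one in $\tilde C$. Faces are counted by iterated swaps as $[p(C)(D-1)+1]+\sum_a f(\mathfrak b_a)-d_C D$, and the inverse is the reversed relabelling.

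Your explicit use of Ineqs.~\eqref{bubbles_inequalitiesB}, \eqref{bubbles_inequalitiesA} and \eqref{bubbles_inequalitiesD} to put the legs on \dip s and close the rest canonically only spells out what the paper leaves implicit. The many-legs reduction you flag is exactly the step the paper also settles by a bare appeal to Ineq.~\eqref{bubbles_inequalitiesC}.
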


\begin{proof}
For sake of  brevity, in this proof we let \[\mathscr M= \mathscr M_{n+1} (C,B_{i_1},B_{i_2},\ldots, B_{i_n})  \,\, \und \,\,\tilde {\mathscr M}=\mathscr M_{n+1} (\tilde C,B_{i_1},B_{i_2},\ldots, B_{i_n})
\]

We define $\phi$ on a Wick partition in  $\pi \in \Msc$. Let $G=G_\pi$ abbreviate the Feynman graph $\pi$ gives rise to,  $G_\pi=
\pi\big( C\dot\cup [\dcupa ^n B_{i_a}] \big)$.
If we remove $C$ from $G_\pi$  we obtain in general a
disconnected graph $G_\pi\setminus C$,
whose number of connected components will be denoted by
$d_C$ ($1\leq d_C \leq p(C)$). To keep the proof concise,
let us call \textit{bulk} the graph $G_\pi\setminus C$, after removing also the propagators that
were attached to $C$.\\

Thanks to Lemma \ref{thm:bubbles_inequalities_Lemm} we know how the
several connected components of the bulk are attached to $C$:  there are
exactly $2d_C $ external legs, with two per connected component of course.
Indeed, first
it obviously
cannot have less by vertex-bipartiteness,
so it remains to see that it cannot have more than $2d_C$.
Suppose the  contrary, namely that the bulk has a number
of external legs larger than $ 2d_C$. Then, there exist at least one connected
component of the bulk that is attached to $C$ via at least four propagators. But then from
Ineq. \ref{thm:bubbles_inequalities_Lemm} we obtain a contradiction with the
maximality of $\pi$.\\

In $C$, the vertices might have another labels
that do not match the canonical one in Definition \ref{def:order_melonic},
so let us relabel to correct this. Let $\tau$ be the unique bijective map $\tau: \{0,1,\ldots, 2p-1\} \to  \{0,1,\ldots, 2p-1\}$ that relabels the original vertices $V(C)$
as $\lambda^{\can}(C)$.
In particular, $\tau=(\tau_0,\tau_1)$
splits into two permutations $\tau_0$ of the
black and $\tau_1$ of the white vertices
(or $\tau_0 \in \Sym \{0,2,\ldots,2p-2\},\tau_1 \in \Sym \{1,3,\ldots, 2p-1\}$).
Similarly for $\tilde C$, we correct its labels with a bijection
$\tilde \tau: \{0,1,\ldots, 2p-1\} \to  \{0,1,\ldots, 2p-1\}$ (observe that $p(C)=p(\tilde C)$ by assumption), so $\tilde \tau=(\tilde\tau_0,\tilde\tau_1)$ maps
the original labels of $\tilde C$ to
$\lambda^\can(\tilde C)$.\\

\begin{figure}[htb!]\centering
\includegraphics[width=.79\textwidth]{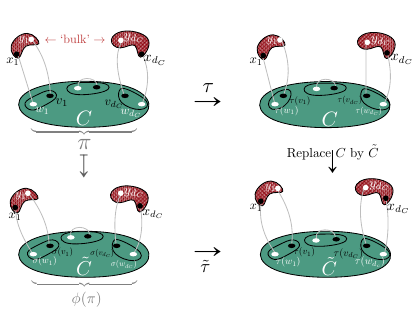}
\caption{In gray, the map $\phi(\pi)$ that
yields a new Wick contraction defined by permuting $V(\tilde C)$
by $\sigma$ and pairing in such a way, that the diagram commutes (i.e. $\tilde \tau \circ \sigma =  \tau$).
The map that consists in the replacement of $C$ by $\tilde C$ is possible thanks to Definition \ref{def:order_melonic} and Claim \ref{thm:unicity_max_claim}. \label{fig:proof}}
\end{figure}

Let\footnote{We clarify
that the notation here is functional, so $\tilde\tau\circ \tau$ means
`first $\tau$ and then $\tilde \tau$' which differs from
the usual multiplicative notation for permutations.} $\sigma= (\tilde\tau )\inv\circ \tau$, and
$\phi(\pi) = \sigma^*(\pi)$ denote precomposition by $\sigma$. Explicitly,
$\phi(\pi)$ consists of the following propagators
\[ \label{phipi_explicit}
\phi(\pi) =
            \Big\{  (\sigma(x_a), w_a) ,\,
              (v_a , \sigma(y_a) ),\,
               (\sigma(x_a), \sigma(y_a ) )  \Big\}_{ a=1,\ldots,d_C}  \cup
           \pi|_{\textrm{bulk}}.
\]
where  the very last set means
\[\pi|_{\textrm{bulk}} =
\big\{(x,y) \in \pi \,:\, x,y \in V( \dcupa^n B_{i_a} ) \setminus V(C)\big\}. \]
We now verify that $\phi$ satisfies what we claimed:
\begin{itemize}
 \itemb \textit{Well-definedness}.  Since
 $\sigma=(\sigma_0,\sigma_1)$ splits as a couple of permutations
  \[ \sigma_0  & = (\tilde \tau_0)\inv \circ  \tau_0
  \in \Sym \{0,2,\ldots,2p-2\}  \\ \sigma_1 &=  (\tilde \tau_1)\inv
  \circ  \tau_1 \in \Sym \{1,3,\ldots, 2p-1\} \]
   $V(C)$ is bijectively replaced via $\sigma$ by
 $V(\tilde C)$, keeping bipartiteness, so
 $\phi(\pi)$ is a Wick contraction of $\dcupa^n B_{i_a} \dot\cup  \tilde C$,
 since $\pi$ was a Wick contraction of
 $\dcupa^n B_{i_a} \dot\cup   C$.

 \itemb \textit{Maximality is preserved by $\phi$ }. Let
 $\mathcal B=(\dcupa^n B_{i_a} )\dot\cup   C$
 and
 $\tilde {\mathcal B}=(\dcupa^n B_{i_a} )\dot\cup \tilde  C$.
 Define $\tau^*\pi$ as in Eq. \eqref{phipi_explicit},  replacing $\sigma$ by $\tau$ there.
 Notice that this just relabels vertices, but
 the number of faces of $\pi(\mathcal B)$
 is the same. The main point now is that a connected
 melon has always the following maximal number of faces
 \[ \label{numfacesmelon1} \qquad
 \max \{  f (G_\Pi ) :  G_\Pi=\Pi (C),\Pi \text{ Wick contraction}  \} = p(C) \times (D-1) + 1.
\]
and maximality is achieved only when one Wick contracts the \dip s \cite{TwofoldUniversality}.
More important than the exact number is the fact that
\[\label{numfacesmelon2}
\max_{\Pi \, \Wick } f \big( \Pi(C)\big ) = \max_{\Pi \, \Wick } f\big( \Pi(\tilde C)\big ) ,
\]
since $p(C)=p(\tilde C)$ by assumption. Now
the next important fact is that $\pi(\mathcal B)$ has been obtained as
$d_C$  consecutive swaps of the $d_C$ connected components $\mathfrak b_1,\ldots, \mathfrak b_{d_C}$
of the bulk
(red or hatched blobs in Fig. \ref{fig:proof}) and the melon $C$.
Thus, we can compute faces everywhere using Eq.
\eqref{swap_faces_deficit}, which holds also for non-melonic graphs. Indeed,
observe that if we abbreviate  $\tilde G_{\phi(\pi)}= \phi(\pi) ({\mathcal B})$,
and if $\pi_{\max}(C)$ denotes the Wick contraction
defined by pairing all the \dip's, then
\[\notag
f(\tilde G_{\phi(\pi)} ) &= f(\pi_{\max} C) +  \bigg(
\sum_{a=1}^{d_C}\,
f(\mathfrak b_a)  \bigg) - d_C \times D
\\ \label{equal_faces_phipi}
&= \big [p(C) \times (D-1) + 1 \big] +  \bigg(
\sum_{a=1}^{d_C}\,
f(\mathfrak b_a)  \bigg) - d_C \times D  =  f(G_{\pi}),
\]
by Eqs. \eqref{numfacesmelon1} and
\eqref{numfacesmelon2}. Hence not only does
$\phi$ map maximal to maximal Wick contraction,
but it also preserves also the number of faces.

\itemb \textit{Bijectivity of $\phi$ }. From $ (h\circ g)^*=g^*\circ h^*$ satisfied by precomposition,
one can easily prove that the
inverse of $\phi$ is
$\psi$, defined on an element $\tilde \pi\in\tilde{ \mathscr M}$ by
$\psi(\tilde \pi)=(\tau\inv \circ \tilde\tau)^*(\tilde\pi).  $ \qedhere
\end{itemize}

\end{proof}
In summary, in this proof,
Lemma \ref{thm:bubbles_inequalities_Lemm} determines
the form of contracting the melon with (possibly)
non-melonic interactions. It is in fact not
too complicated to construct a map
\[ \mathscr M_{n+1} (C,B_{i_1},B_{i_2},\ldots, B_{i_n}) \to  \mathscr M_{n+1} (\tilde C,B_{i_1},B_{i_2},\ldots, B_{i_n})\nonumber \]
but the fact that the map $\phi$ constructed above is a bijection is merit of
Definition \ref{def:order_melonic}.

\begin{corollary}\label{thm:EC_isEtildeC_coro}
For $C$ and $\tilde C$ melonic, the leading order of
the moments $\E  _{g} [C]$
 and $\E _{g}  [{\tilde C}] $
coincides, when they converge.
\end{corollary}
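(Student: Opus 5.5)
The plan is to expand both moments perturbatively and compare them order by order using Proposition \ref{thm:ExpMelonEqualPropo}. Write $\E_g[C]=\mathcal Z^{-1}\int C\,\ee^{-S\inter}\gauss$ and expand $\ee^{-S\inter}=\sum_n \frac{(-1)^n}{n!}\sum_{i_1,\ldots,i_n}\prod_a g_{i_a}N^{s_{i_a}}B_{i_a}$. Each term is then evaluated by Wick's theorem \eqref{WickIntegral}. The normalisation $\mathcal Z$ is the same for $C$ and $\tilde C$. The standard cancellation of vacuum (disconnected-from-$C$) diagrams then leaves, at each order, a sum over Wick contractions $\pi$ of $C\dot\cup B_{i_1}\dot\cup\cdots\dot\cup B_{i_n}$ in which every component is connected to $C$. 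The Feynman graph may split into the melon attached to several bulk components, but these all hang from $C$. The weight of such a term is $N^{\sum_a s_{i_a}}A(G_\pi)=N^{\sum_a s_{i_a}+f(G_\pi)-(D-1)p(G_\pi)}$.

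The key observation is that, for fixed $(i_1,\ldots,i_n)$, the prefactor $(-1)^n g_{i_1}\cdots g_{i_n}N^{\sum s_{i_a}}/n!$ and the number of propagators $p(G_\pi)$ agree for $C$ and $\tilde C$, since $p(C)=p(\tilde C)$ because both have the same degree. So the only $\pi$-dependence of the power of $N$ is through $f(G_\pi)$. At a given order, the leading contributions therefore come exactly from the face-maximizing contractions $\mathscr M_{n+1}(C,B_{i_1},\ldots,B_{i_n})$, respectively $\mathscr M_{n+1}(\tilde C,\ldots)$. By Proposition \ref{thm:ExpMelonEqualPropo} there is a bijection $\phi$ between these sets that preserves $f$. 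Hence the maximal exponents coincide, and the number of leading terms, each with coefficient $1$, coincides too. The leading coefficient of order $n$ in $g$ is therefore identical for $C$ and $\tilde C$. Summing over $n$ and over the tuples $(i_1,\ldots,i_n)$ gives equal leading-order series. The scaling $q_C=q_{\tilde C}$ is then the supremum over orders of these common exponents, so $m(C)=m(\tilde C)$ whenever the limits exist.

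The main obstacle is twofold. First, the proposition is phrased for connected face-maximizing pairings, while the expectation involves arbitrary pairings. I would handle this by applying the proposition componentwise on the part connected to $C$, since vacuum components cancel against $\mathcal Z$. Second, "leading order" must be uniform across perturbative orders. A given $n$ may fail to reach the global scaling $N^{q_C}$, and then it contributes nothing in the limit. This is harmless, because the same exponent then appears for $\tilde C$ by the bijection. So the set of orders that contribute at $N^{q_C}$ is the same for both moments. I would state this as an identity of formal power series in the couplings $g$, and invoke the convergence hypothesis only to identify the series with $m(C)$ and $m(\tilde C)$.
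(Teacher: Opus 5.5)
Your proposal is correct and takes the same route as the paper. You expand perturbatively and cancel the components not containing the melon against $\mathcal Z$; the paper phrases this as the melon lying in exactly one connected component, the rest being vacuum graphs of the $B_i$'s. You then apply the face-preserving bijection of Proposition~\ref{thm:ExpMelonEqualPropo} for each fixed tuple $(i_1,\ldots,i_n)$. Your write-up is simply more explicit than the paper's terse argument about equal prefactors, equal $p(G_\pi)$ (using the implicit hypothesis $p(C)=p(\tilde C)$), and coefficient-one counting of Wick contractions, which together give equal leading coefficients order by order.
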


\begin{proof}
That bijection in Proposition
\ref{thm:ExpMelonEqualPropo} preserves maximality as well as the precise
number of faces means that connected Feynman graphs
with vertex sets $\{C,B_{i_1},B_{i_2},$ $\ldots,B_{i_n} \}$
and  $\{\tilde C,B_{i_1},B_{i_2},$ $\ldots,B_{i_n} \}$ for any
 $i_1,\ldots,i_n\in \{1,\ldots,m\}$
 are indistinguishable at leading order. Since any disconnected
 Feynman graph with any of the two previous sets of interactions
 has the melon in exactly one connected component (while the rest
 consists of Feynman graphs in the remaining $B_i$'s) then
 $\E  _{g} [\tilde C] - \E  _{g} [  C]= \mathcal Z\inv
 \int [\tilde C(T) -C(T) ] \ee^{\summ{i} g_iB_i  } \dif\mu_0(T)$ is indeed of a sub-leading order.
\end{proof}

\section{Reduced single-trace Schwinger-Dyson}\label{sec:SDE_redux}

The tensor Schwinger-Dyson or loop equations require the concept of
the union  \[ B \underset{x,w}{\cup} B' \] of two graphs $B,B'$
at two given vertices $x\in V(B) $ and $w \in V(B')$ of opposite parity,
e.g. $x$ is black (even) and $w$ white (odd).
We feel obliged to provide the full definition
but also a shortcut (see Fig.
\ref{fig:ex_union_Graphs}). The vertex and edge sets of the union of graphs are given by
\[ V(B \underset{x,w}{\cup} B') & = [V(B) \cup V(B')] \setminus \{ x,w\} \\
 E (B \underset{x,w}{\cup} B') & = E(B) \setminus \{  ( x,t_c ) : c=1,\ldots,D  \}
 \\
&\cup     E(B) \setminus \{  ( s_c,w  ) : c=1,\ldots,D \} 
\notag
 \\[1ex]
 &
\cup \{ (s_c,t_c)  : c=1,\ldots, D\} \notag
\]
where $s_c$ is the black (resp. $t_c$ is the white) vertex connected by the single colour-$c$
edge to $w$ (resp. to $x$), as depicted in Figure \ref{fig:faces_after_union}. In summary, the union of two graphs at two vertices
`has the edge sets of both graphs except those attached to
either of the given points; one removes these, breaking $D$ edges
into half-edges for each of the two vertices,
and weld the $2D$ half-edges colour-wise'.

\begin{figure}
\runter{\includegraphics[width=.58\textwidth]{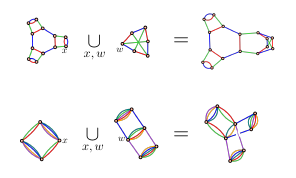}}
 \caption{The union of graphs exemplified for $D=3$ (above)
 and $D=6$ (below). The vertex parity is not depicted, as the bipartiteness of the graphs is obvious.\label{fig:ex_union_Graphs}}
\end{figure}

\begin{lemma}\label{thm:union_Faces_sum_lemma}
For any connected $D$-coloured graphs $B,C$ and $B'$, the following holds:
\begin{enumerate}[(a)]

 \item
 For Wick contractions $\pi$ of $B$
 and $\pi'$ of $B'$, let $v$ and $y$ be the only vertices determined by
 $(x,y)\in\pi$ and $(v,w) \in \pi' $. If
\[\Pi = (\pi \setminus \{x,y\} )\cup (\pi' \setminus \{v,w\}  )\cup \{ (v,y) \}  \label{bigPi}\]
then
\[ \label{faces_deficit_union}
f \Big[ \Pi\big ( B \underset{x,w}{\cup} B'\big)  \Big]  = f  [\pi(B)] + f [\pi'(B')] -D.
 \]

  \item For melonic $C$ and
vertices of opposite parity $x\in V(B)$ and $w\in V(C)$ (say, $x$ black and $w$ white)
there exists a bijection $\gamma_{x,w}$
of maximal Wick contractions:
\[ \gamma_{x,w} :
\mathscr M_1 (B)\times
\mathscr M_1 (C) \to
 \mathscr M_1 \big( B \underset{x,w}{\cup} C \big).
\]

\end{enumerate}

\end{lemma}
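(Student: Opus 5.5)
For part (a) I will argue directly with faces, the same way Eq. \eqref{swap_faces_deficit} is proved for the swap. Fix a colour $c\in\{1,\ldots,D\}$. In $\pi(B)$ the vertex $x$ lies on exactly one $c$-face, which passes through the propagator $(x,y)$ and the colour-$c$ edge $(x,t_c)$. Likewise, in $\pi'(B')$ the vertex $w$ lies on exactly one $c$-face, passing through $(v,w)$ and $(s_c,w)$. In $\Pi(B\cup_{x,w}B')$ the vertices $x,w$ are deleted, the edges $(x,t_c),(s_c,w)$ are replaced by $(s_c,t_c)$, and the propagators $(x,y),(v,w)$ are replaced by $(v,y)$.

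Tracing the alternating $\{0,c\}$-path, the old $c$-face through $x$ runs $t_c\to\cdots\to y$ with $x$ removed, and the old $c$-face through $w$ runs $v\to\cdots\to s_c$ with $w$ removed. The edges $(s_c,t_c)$ and $(v,y)$ then close these two open paths into a single cycle. Every other $c$-face is untouched, because it avoids $x$ and $w$ and the modified edges. So, colour by colour, two faces merge into one. Summing over the $D$ colours gives \eqref{faces_deficit_union}. I will include a small picture/case check that the two open paths are genuinely distinct and are joined in the right orientation, which bipartiteness guarantees.

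For part (b) I define $\gamma_{x,w}(\pi,\pi')=\Pi$ as in \eqref{bigPi}. By (a), $f(\Pi)=\max f(\pi(B))+(p(C)(D-1)+1)-D$ for $(\pi,\pi')\in\mathscr M_1(B)\times\mathscr M_1(C)$, using \eqref{numfacesmelon1}. The map is injective, since from $\Pi$ one recovers $\pi$ and $\pi'$ by cutting the unique propagator joining the $B$-side to the $C$-side and reinserting $x,w$. The remaining work is to show that $\Pi$ is maximal and that every maximal contraction of the union arises this way.

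The key step, and the main obstacle, is to show that a face-maximal contraction $\Xi$ of $B\cup_{x,w}C$ uses exactly one propagator between the $B$-part and the $C$-part. Since each part has one vertex fewer of a single parity, there is an imbalance, so at least one crossing propagator is forced. Suppose there are more, i.e.\ $2k+1$ crossing propagators with $k\ge1$. I will reduce this to Lemma \ref{thm:bubbles_inequalities_Lemm}(c),(d). The melon remnant $C\setminus\{w\}$, with its edges welded to $B$ through the $s_c,t_c$, behaves like a melon contracted with a bulk through several propagators. Re-routing two crossing propagators so that they pair \dip s inside $C$ strictly increases faces, with the difference $D-2$ or $D$ computed as in part (c). Repeating this reduces to a single crossing propagator, contradicting maximality. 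A technical point needs care here: the welded edges $(s_c,t_c)$ connect the two sides, so the faces through them must be accounted for as in \eqref{case_mit_i}.

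Once exactly one crossing propagator $(v,y)$ is established, cutting it yields contractions $\pi$ of $B$ and $\pi'$ of $C$ with $\Xi=\gamma_{x,w}(\pi,\pi')$. By (a), maximality of $\Xi$ forces both $\pi$ and $\pi'$ to be maximal. The same face count shows every image of $\gamma_{x,w}$ attains this common maximum. Hence $\gamma_{x,w}$ is a bijection.
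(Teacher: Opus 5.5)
Your part (a) is the paper's argument: for each colour $c$, the $c$-face through $x$ and the $c$-face through $w$ are opened at $x,w$ and closed into one cycle by $(s_c,t_c)$ and $(v,y)$. In (b) your framework is also the paper's: define $\gamma_{x,w}$ by \eqref{bigPi}, invert it by cutting the unique crossing propagator, and transfer maximality through \eqref{faces_deficit_union}. You also correctly isolate the crux, namely that a maximal contraction of $B\underset{x,w}{\cup}C$ has exactly one propagator between the two sides. But the mechanism you propose for that step does not work.

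First, Lemma \ref{thm:bubbles_inequalities_Lemm}(c),(d) is about a melon joined to a bulk only by propagators. Its case analysis around \eqref{case_mit_i} treats attached vertices of $C$ that form \dip s. Here $C\setminus\{w\}$ is also glued to $B\setminus\{x\}$ by the welded coloured edges $(s_c,t_c)$, which you flag but leave open. More seriously, nothing forces the $C$-endpoints of the $2k+1$ crossing propagators to contain a \dip{}. With \dip s $(b_1,w_1),(b_2,w_2)$ of $C$, one may have $b_1$ and $w_2$ crossing while $w_1$ is paired to $b_2$ inside $C$. Then there are no two crossing propagators to re-route onto a \dip{}, and re-routing onto a non-\dip{} pair has no sign control. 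The move that does raise $f$ (if $(b_1,w_1)$ is a dipole, pair $b_1$ with $w_1$ and $b_2$ with the old partner of $b_1$) leaves the number of crossings unchanged. So the number of crossings is the wrong quantity to induct on.

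The missing idea is to apply the local exchange of Lemma \ref{thm:bubbles_inequalities_Lemm}(a) inside the union graph, to the \dip s of $C$ other than the \dip{} $(w^*,w)$ containing $w$, wherever their partners sit. This is what the paper's terse ``there are still $2p-2$ \dip s in $V(\check C)$, so these must be paired by $\Pi$'' rests on.

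A melon with $p(C)\ge 2$ has at least two dipoles, so one of them, $(b',w')$, avoids $w$. Its $D-1$ shared edges are not incident to $w$, so it is still a dipole of $B\underset{x,w}{\cup}C$. The exchange only involves the dipole and its two propagators, so it gives $\Delta\in\{D-2,D\}>0$ regardless of the welded edges. Hence every maximal contraction pairs $b'$ with $w'$.

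Deleting a paired dipole lowers $f$ by exactly $D-1$ and commutes with $\underset{x,w}{\cup}$. So induction on $p(C)$ works, with base case $p(C)=1$, where the union is just $B$ with $x$ renamed $w^*$. It shows that every maximal contraction pairs the $2p-2$ vertices of those \dip s among themselves. The only $C$-vertex left is then $w^*$, so the unique crossing propagator is $(w^*,y)$, i.e.\ $v=w^*$. From there your cutting argument together with (a) finishes the proof.
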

\begin{proof} For (a), observe that since $B$ and $B'$
share no vertices nor edges,  $x$ lies in $D$
faces and $w$ lies in a disjoint set of $D$ faces.
When we perform the union at $x$ and $w$, these two
disjoint set of faces merge into exactly $D$.
Indeed, all other edges not containing these two points
were respected, and connect the
edges with colours in $\{1,\ldots,D\}$ and, according to \eqref{bigPi}, $\Pi$ has
a $0$-th colour (propagator) that
closes such $D$ faces, which we depict next only using colour-$c$ faces in Figure \ref{fig:faces_after_union}.
\begin{figure}
\centering
\runter{\includegraphics[width=.41\textwidth]{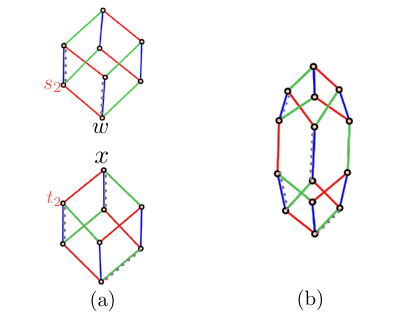}}
\runter{\includegraphics[width=.41\textwidth]{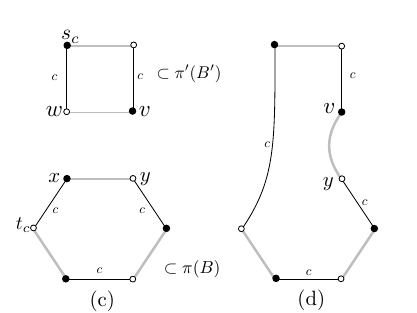}} \hspace{-3ex}$\subset \Pi(B \underset{x,w}{\cup} B')$
\caption{
(a) Two graphs $B$ below and $B'$ above,
and their
vertices $x$ and $w$, (b) shows $ B \cup_{x,w} B'$.
In dashed gray, we depict a selection of edges from a Wick contraction that
plays a role in (c) and (d) (in solid gray). In particular,
(c) depicts the colour-$c$ faces [in (a) and (b) concretely in red, or colour 2] of $B$ and $B'$. In (d)
these faces merge for the Wick contraction defined Eq. \eqref{bigPi}.\label{fig:faces_after_union}}
\end{figure}
Since this situation is independent for each colour,
after the union of graphs we get a deficit of $D$ faces. \\

Concerning (b), let $\gamma_{x,w}(\pi,\pi')= (\pi \setminus \{x,y\} )\cup (\pi' \setminus \{v,w\}  )\cup \{ (v,y) \}  $ as above.
Then this is a Wick contraction indeed, and
it is maximal since
Eq. \eqref{faces_deficit_union} implies that  $\gamma_{x,w}(\pi,\pi')$
is maximal if and only if $\pi$ and $\pi'$, as we assumed, are.
The inverse map is as follows. Given $\Pi \in \mathscr M_1( B{\cup}_{x,w} C ) $,
 consider the sets
 $V(\check B)=V(B)\setminus  \{   x \} 
$ and $ V(\check C)=V(C)\setminus   \{   w \}$.
Since $C$ is melonic, there are still
$2p-2$ \dip s $\subset V(\check C)$, so these must be paired by $\Pi$
and we can be sure that $\Pi$ does not pair vertices of $C$ with $B$.
Then the inverse of $\gamma_{x,w}$ is $\Pi \mapsto (\pi,\pi')$, whose components are given by
\[ \notag
 \pi=\Pi|_{V(\check B) } \cup \{(x,y)\} \quad\und\quad \pi'=\Pi|_{V(\check C)} \cup   \{(v,w)\},
\]
and maximality follows again from
Eq. \eqref{faces_deficit_union}.\qedhere
%
\end{proof}
\begin{minipage}{.7\textwidth}
\begin{example}
It is interesting to see that if $B'$ above is not melonic,
then  $\gamma_{x,w} :
\mathscr M_1 (B)\times
\mathscr M_1 (B') \to
 \mathscr M_1 \big( B{\cup}_{{x,w}} B' \big)$
is only an injection. If one tries to invert this, $[\#  \mathscr M_1(K_{3,3})] ^2=9$,
while $  \#\mathscr M_1 \big( K_{3,3} {\cup}_{{x,w}}  K_{3,3} \big) =10$.
The culprit is drawn on the right.
\end{example}
\end{minipage}~
\begin{minipage}{.26\textwidth}
\[\notag
\includegraphics[width=11ex]{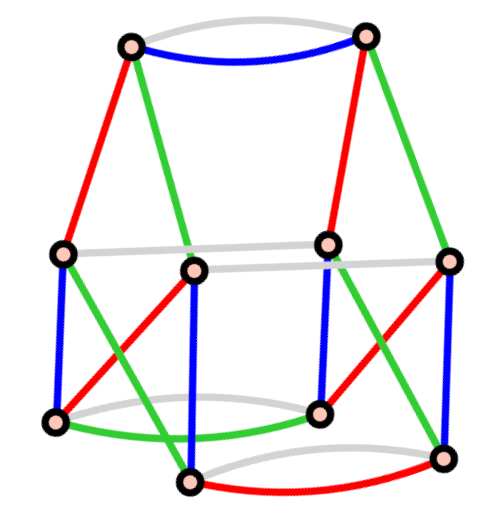}
\]
\end{minipage}
\vspace{-2ex}
\begin{proposition}\label{thm:MC_MCtilde_mitBs_Propo}
For connected, melonic $D$-coloured $C$ and $\tilde C$ with $p(C)=p(\tilde C)$ and $x\in V(C)$ and $y \in V(\tilde C)$, $z\in V(B_j)$,
with $x,y$ having the opposite parity to $z$, then there is a bijection $\alpha$ (that depends on the data $C,\tilde C, B_j, y,x$)
\[
\alpha : \mathscr M_{1+n} \big(C \underset{x,z}{\cup} B_j,  B_{i_1},  \ldots, B_{i_n} \big)
\to
\mathscr M_{1+n} \big(\tilde C \underset{y,z}{\cup} B_j,  B_{i_1},  \ldots, B_{i_n}\big ).
\]
\end{proposition}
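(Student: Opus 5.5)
The idea is to reduce the statement to Proposition \ref{thm:ExpMelonEqualPropo} and Lemma \ref{thm:union_Faces_sum_lemma}. The key observation is that $C\cup_{x,z}B_j$ contains a copy of $C$ with one vertex removed, and that copy still behaves like a melon with respect to face-maximal Wick contractions. Concretely, I would construct $\alpha$ as the composite of three bijections, rather than as a single relabelling.

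\textbf{Step 1: decompose.} Take $\Pi\in\mathscr M_{1+n}(C\cup_{x,z}B_j,B_{i_1},\ldots,B_{i_n})$. As in the proof of Lemma \ref{thm:union_Faces_sum_lemma}(b), the $p-1$ \dip s of $C$ not containing $x$ must be paired by $\Pi$. Here I use the analogues of Ineqs. \eqref{bubbles_inequalitiesA}--\eqref{bubbles_inequalitiesD}; these go through because the surrounding graph acts as a ``bulk'' and the swap deficit \eqref{swap_faces_deficit} is independent of the bulk's anatomy.

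Let $x'$ be the \dip-partner of $x$ in $C$. Then $\Pi$ pairs $x'$ with some vertex $u$ of $B_j\cup\mathcal B$. Undo the union in the manner of Lemma \ref{thm:union_Faces_sum_lemma}(a): pair $x$ with $x'$ inside $C$, and pair $z$ with $u$ in the graph $B_j\dot\cup B_{i_1}\dot\cup\cdots\dot\cup B_{i_n}$. This produces the maximal canonical contraction $\pi_{\max}(C)$ together with a Wick contraction $\Pi^\flat$ of the bulk $B_j,B_{i_1},\ldots,B_{i_n}$, which need not be connected.

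By Eq. \eqref{faces_deficit_union}, generalised to a bulk with several components, we have
\[ f(\Pi)=f(\pi_{\max}C)+f(\Pi^\flat)-D . \]
Since $f(\pi_{\max}C)=p(C)(D-1)+1$ by \eqref{numfacesmelon1}, $\Pi$ is maximal exactly when $\Pi^\flat$ is maximal among contractions with the prescribed connectivity. I would therefore record $\Pi^\flat$ together with the distinguished pair $(z,u)$.

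\textbf{Step 2: rebuild with $\tilde C$.} Reverse the construction. Attach $\tilde C$ at $y$, pair the \dip s of $\tilde C$ other than the one containing $y$, and pair the \dip-partner $y'$ of $y$ with $u$.

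By the same face count and \eqref{numfacesmelon2}, using $p(\tilde C)=p(C)$, this yields an element of $\mathscr M_{1+n}(\tilde C\cup_{y,z}B_j,\ldots)$ with the same number of faces. Connectivity is preserved because the melon hangs off a single bridge in both cases. The resulting map is manifestly invertible, since the inverse is the same construction with the roles of $C$ and $\tilde C$ exchanged. The step that makes this a well-defined map is that the \dip{} structure is canonical (Claim \ref{thm:unicity_max_claim}), so ``the partner of $x$'' and ``the partner of $y$'' are unambiguous. Unlike Proposition \ref{thm:ExpMelonEqualPropo}, this construction needs no global relabelling $\sigma$ of the melon's vertices, because all of the melon except the single bridge is internally contracted.

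\textbf{Main obstacle.} The hard part is Step 1's claim that every maximal $\Pi$ contracts the remaining \dip s of $C$ internally. After the union, the vertices of $C$ adjacent to $x$ have been rewired to vertices of $B_j$. Consequently the residual part of $C$ is not itself a melon inside the Feynman graph, and Lemma \ref{thm:bubbles_inequalities_Lemm} does not apply verbatim.

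I would first try to prove a version of the dipole inequality in part (a) of that lemma, namely $\Delta\ge D-2>0$, for any actual dipole of $C$ not containing $x$. This is a local statement involving only the $2D$ edges at the dipole, so it should survive the union. I would then induct by removing dipoles, exactly as in \cite[Lem. 3.4]{TwofoldUniversality}, until only the \dip{} containing $x$ remains.

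The delicate case is when the last remaining dipole involves $x$. In that case I would argue via the count in Lemma \ref{thm:union_Faces_sum_lemma}(a), showing that pairing $x'$ outside $C$ costs no more than the $D$-face deficit that is already accounted for.
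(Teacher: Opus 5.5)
\paragraph{The gap: Step 1 is false once $n\geq 1$.} Your construction rests on the claim that every maximal $\Pi$ pairs the other $p-1$ \dip s of $C$ internally, and for $n\geq 1$ this fails; it is not just hard to prove. Your local inequality $\Delta\geq D-2$ for an actual dipole $(u,v)$ is correct. But the switch that pairs $u$ with $v$ (and their former partners with each other) can disconnect the Feynman graph. When it does, the original graph is a swap of a piece containing the paired dipole with a bulk piece. By \eqref{swap_faces_deficit} a swap costs exactly $D$ faces whichever propagator it uses, so nothing forces the dipole to be contracted internally.

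Here is a concrete case. Let $C$ have vertices $a,b,a',b'$ with $a,a'$ black. Join $a,b$ and $a',b'$ by all colours $\neq 1$, and join $ab'$ and $a'b$ by colour $1$. Take $x=a$, and let $B_j$ and $B_{i_1}$ both be the quadratic invariant, with vertices $z',z$ and $e,\bar e$. Then $C\underset{a,z}{\cup}B_j$ is just $C$ with $a$ renamed $z'$. Among the six Wick contractions of it together with $B_{i_1}$, exactly two are connected and attain the maximal number $2D-1$ of faces:
\begin{itemize}
\item $\{(b,e),(\bar e,z'),(b',a')\}$, which your construction produces;
\item $\{(b,z'),(b',e),(\bar e,a')\}$, in which the \dip{} $(a',b')$, which does not contain $x$, is split by $B_{i_1}$.
\end{itemize}
Your decomposition is undefined on the second contraction, and rebuilding from $\tilde C$ never produces its analogue. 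So you only get a bijection between proper subsets. Your remark that the melon ``hangs off a single bridge'' fails for the same reason. In general, a maximal contraction of any $B_{i_k}$ can be swapped into the propagator of any \dip{} of $C$ at the same cost $D$.

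\paragraph{What is missing.} You need a correspondence between the \dip s of $C$ and those of $\tilde C$, along which such attached bulk pieces are transported. This is exactly the relabelling you explicitly discard. The paper obtains it as follows. It writes $\Pi$ as a sequence of swaps $\mathsf s_1,\ldots,\mathsf s_d$ applied to maximal contractions $(\pi_0,\ldots,\pi_n)$ with $\pi_0\in\mathscr M_1(C\underset{x,z}{\cup}B_j)$. It transports $\pi_0$ through Lemma \ref{thm:union_Faces_sum_lemma}(b) and the case $n=0$ of Proposition \ref{thm:ExpMelonEqualPropo}, whose map $\phi$ carries the vertex correspondence $\sigma$. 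It then replays the same swaps.

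To repair your route you would have to do three things:
\begin{itemize}
\item for each of the $p-1$ non-bridge \dip s of $C$, record the bulk pieces swapped into its propagator;
\item move them along a fixed bijection onto the non-bridge \dip s of $\tilde C$, which exists since $p(C)=p(\tilde C)$;
\item show with Lemma \ref{thm:bubbles_inequalities_Lemm}(c) that bulk components meet the non-bridge part of $C$ only through such two-leg swaps.
\end{itemize}
At that point your argument becomes essentially the paper's.
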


\begin{proof}
Unlike the case in which $B_{i_1},  \ldots, B_{i_n} $ are all melonic, we do not have now a tree-like structure.
What we do still have is the fact that $\Pi \in
\mathscr M_{1+n} \big(C {\cup}_{{x,z}} B_j,  B_{i_1},  \ldots, B_{i_n} \big) $
is obtained by applying a finite sequence  of (say, $d$)
swaps $\mathsf s_1,\ldots, \mathsf s_d$ ---which we shall remember---from elements in
\[
(\pi_0,\ldots,\pi_n) \in
\mathscr M_1 (C \underset{{x,z}}{\cup} B_j)  \times
\mathscr M_1 (  B_{i_1}) \times
\cdots
\mathscr M_1 (  B_{i_n}). \label{irdwwas}
\]
But then there is a bijection
\[
\beta: \mathscr M_1 (C \underset{x,z}{\cup} B_j) \stackrel{ \text{\tiny Lem. \ref{thm:union_Faces_sum_lemma} }}{\to }
\mathscr M_1 (C) \times \mathscr M_1 ( B_j)    \stackrel{ \text{\tiny Prop. \ref{thm:ExpMelonEqualPropo} }}{\to }
\mathscr M_1 (\tilde C) \times \mathscr M_1 ( B_j)  \stackrel{ \text{\tiny Lem \ref{thm:union_Faces_sum_lemma} }}{\to}
\mathscr M_1 (\tilde C \underset{x,z}{\cup}  B_j), \notag
\]
where the middle bijection  $\phi \times \mathrm{id}$ (with $\phi$ being the case $n=0$ in Prop. \ref{thm:ExpMelonEqualPropo})
exists because $C$ and $\tilde C$ are melonic, an due to $p(C)=p(\tilde C)$,
by assumption.
Then we are able to replace $C\to \tilde C$, $x\to y$
and $\pi_0$ by $\beta( \pi_0)$ in
\eqref{irdwwas} and swap via $\mathsf s_1,\ldots, \mathsf s_d$
the $n+1$-tuple in ($\beta(\pi_0), \pi_1,\ldots,\pi_n$)
in the same way that took us from
 $\pi_0,\ldots,\pi_n$ to $\Pi$, to obtain a new
 Wick contraction $\Pi'$ of
 $
 C {\cup}_{{x,z}} B_j \dot\cup B_{i_1} \dot\cup  \ldots  \dot\cup B_{i_n} $.
 This $\Pi'$ is maximal, since
 ($\beta(\pi_0), \pi_1,\ldots,\pi_n$) $\to \Pi'$ is obtained via
  $\mathsf s_1,\ldots, \mathsf s_d$
  and  leads then to exactly the same change in the number of faces
as $(\pi_0,\ldots,\pi_n) \to \Pi$.
Then we can set $\alpha(\Pi)=\Pi'$. In particular
$\alpha$ is bijective, since $\beta$ is invertible.
\end{proof}

We now arrive to our Schwinger-Dyson equations for tensor models.
These appeared first in \cite{GurauVirasoro}
as a set of operators that annihilates the partition function
and obeys a generalization of the (half-)Witt Algebra. Since this
has been done elsewhere and the procedure is well-known\footnote{An example of
tensor model with Schwinger-Dyson equations that
became more intricate is presented in \cite{SDE} for a
quartic tensor model inspired by the Kontsevich and Grosse-Wulkenhaar models \cite{gw12}.}, a sketchy path
to derive these equations suffices here. For a tensor model with interactions $ S\inter(T,\bT) $
is to start with the identity
$\int_{[\C^{N}]^{\otimes D}} \dif \big( f(T)  \ee^{-N^{D-1} T\cdot \bar T
- S\inter(T,\bT) }  \big) \dif T =0 $ for $f(T)$, which need not be an invariant.
In fact a useful type of function $f(T)$ is rather like
\begin{subequations}\label{bubble_removal}%
\[
\sum_{ \substack{ \gris{i_1,}i_{2},i_{3}, j_1, \gris{j_{2},} \\ j_{3},k_1,k_{2}\gris{,k_{3}}=1}}^N%
\quad&\gris{\bar{T}_{i_{1},k_{2},j_{3}}}
T_{j_1,j_{2},j_{3}}
\bar{T}_{j_{1},i_{2},k_{3}}& \leftrightarrow &&
\runter{\includegraphics[width=.126\textwidth]{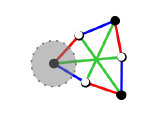}} \\[-2ex]
\cdot & {T_{i_1,i_{2},i_{3}} } T_{k_1,k_{2},k_{3}} \bar{T}_{k_{1},j_{2},i_{3}}  &&& \notag
\\
\sum_{ \substack{  \gris{i_1,}i_{2},i_{3},i_{4},j_1,\gris{j_{2},}j_{3},j_{4}, \\k_1,k_{2},\gris{k_{3},k_{4}}=1}}^N%
\quad &T_{j_1,j_{2},j_{3},j_{4}}\gris{\bar{T}_{i_{1},j_{2},k_{3},k_{4}} }  T_{i_1,i_{2},i_{3},i_{4}} %
&\leftrightarrow&&\runter{
\includegraphics[width=.16\textwidth]{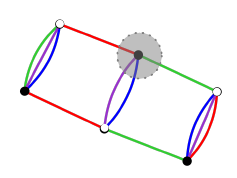}} \\[-2ex] \notag
\cdot & \bar{T}_{j_{1},k_{2},j_{3},j_{4}} T_{k_1,k_{2},k_{3},k_{4}} \bar{T}_{k_{1},i_{2},i_{3},i_{4}} &&&
\\
\sum_{\substack{ i_1,i_{2},i_{3},i_{4} ,\gris{j_1,{j_{2},}j_{3},j_{4},}\\k_1,k_{2}, k_{3},k_{4}=1}}^N%
&\gris{T_{j_1,j_{2},j_{3},j_{4}}} {\bar{T}_{i_{1},j_{2},k_{3},k_{4}} } T_{i_1,i_{2},i_{3},i_{4}} %
&\leftrightarrow&&\runter{
\includegraphics[width=.16\textwidth]{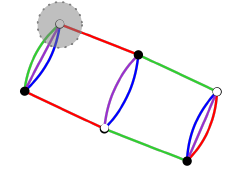}} \\[-2ex] \notag
\cdot &   \bar{T}_{j_{1},k_{2},j_{3},j_{4}} T_{k_1,k_{2},k_{3},k_{4}} \bar{T}_{k_{1},i_{2},i_{3},i_{4}} &&&
\]\end{subequations}

where the gray terms are meant to be absent, and
the vertices encircled on the right excised.
Due to the symmetry in the indices (and the fact that these
are real-valued invariants), the upper most invariant/graph does
not depend on the dropped factor/vertex, but the lack of symmetry
in the last two graphs shows that it in general does.
This means that, in general, the Schwinger-Dyson equations
depend on a graph $C$ and a vertex $x\in V(C)$.  After making sense of all the differential operators (e.g. following \cite{TensorBootstrapProgram} or \cite{openbubble})
one obtains for fixed $x\in V_0(C)$,
\[
 N^{D-1}\E [C] + \summ j \sum_{\substack{w\in V_1(B_j)} } g_j N^{s_j}\E [{ C
\underset{x,w}{\cup}
B_j   }]
= \sum_{ y \in V_1(C) }  N^{\#E(x,y) }
\E [ C\setminus \{x,y\}|_{\text{weld}}  ] \tag{SDE${}_{C,x}$} \label{SDE}
\]
where $E(x,y)$ denotes the set of edges shared by $x$ and $y$.
If $x, y \in V(C)$ have, as above, opposite parity,
then $C\setminus\{x,y\}|_{\text{weld}}$
is the graph obtained by (a) removing from $C$ the vertices
$x$ and $y$, (b) breaking the $D$ edges attached to each into a half
that remains fixed to the vertices that were not $x$ nor $y$,
(c) welding the halves colour-wise into a $c$-coloured edge.\\
\begin{table}[H]
\begin{tabular}{c|c|c|c|c} $D$ &
$C$   & $(x,y)$ &  $C\setminus\{x,y\}|_{\text{weld}}$  &
$N^{\#E(x,y)} $\\[.51ex] \hline
3 &
\runter{
\includegraphics[width=.13\textwidth]{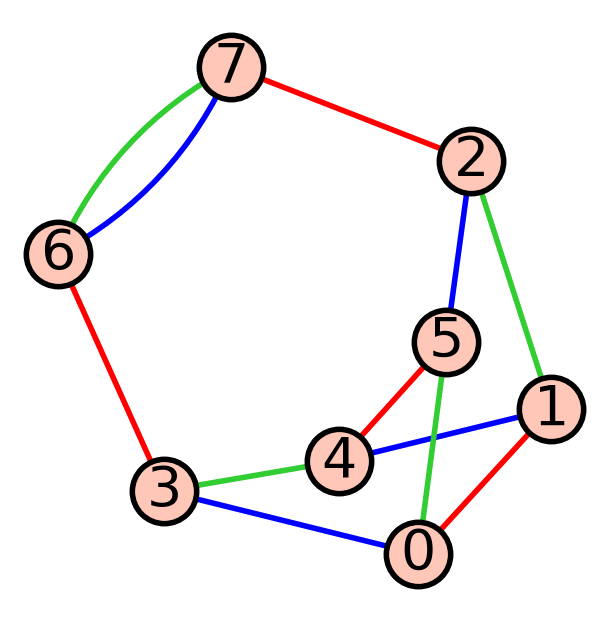}}& (6,7) &
\runter{
\includegraphics[width=.081\textwidth]{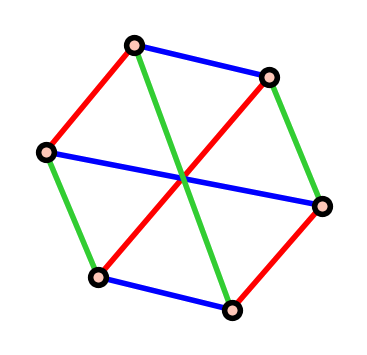}}
& $N^2$ \\ 6 &
\runter{
\includegraphics[width=.136\textwidth]{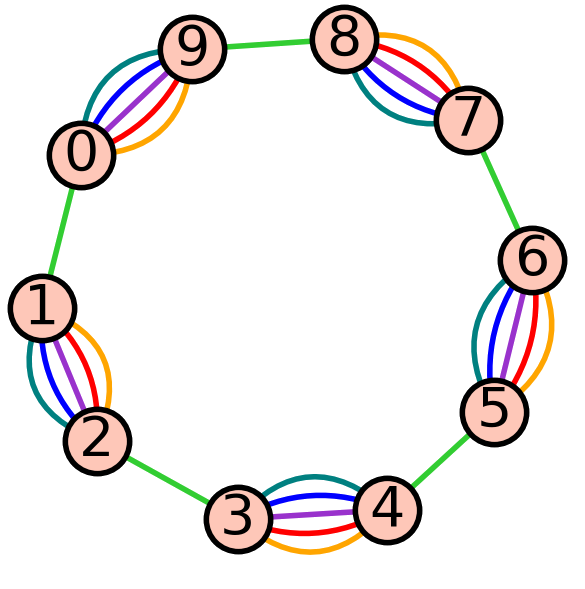}}& (0,5) &
\runter{
\includegraphics[width=.071\textwidth]{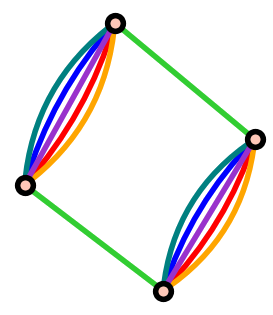}}
\runter{
\includegraphics[width=.071\textwidth]{ex_weld4.png}}
& $N^0$ \\ 6 &
idem & (0,9) &
\runter{
\includegraphics[width=.1\textwidth]{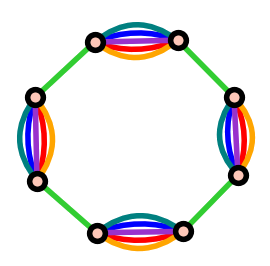}} & $N^{5}$
\end{tabular}
\caption{Examples of the breaking and welding $C\setminus\{x,y\}|_{\text{weld}}$
of graphs $C$.\label{tab:weld}}
\end{table}
Since we are not only interested in the graph, but also
in the set $E(x,y)$ of edges that \textit{completely} disappear while forming
$C\setminus\{x,y\}|_{\text{weld}}$, let us illustrate both concepts simultaneously
in Table \ref{tab:weld}. Observe there, that
break-and-weld can yield disconnected graphs. As correlators of
disconnected graphs scale with higher powers of $N$, in order to obtain
the large-$N$ limit of the Schwinger-Dyson equations, we do not only need
to search for dipoles (yielding $N^{D-1}$ as in the first and last example of Tab. \ref{tab:weld})
to break and weld so that the RHS scales as
the LHS's $N^{D-1}\E[C]$, but also take into account the
disconnected components. Generally
neither the $N^{s_i}$ factor accompanying the operators $B_i$ nor the global factor $N^{q_C}$
 Eq. \eqref{SDE} has to be divided by, so that it is finite at large-$N$,
are algorithmically known.
All $s_i$ and $q_C$ have to be determined. For melonic models $s_i=D-1$
and then $q_C$ is fixed by an algorithm given in \cite{TensorBootstrapProgram},
but as far as the determination of these (usually, integer) parameters is concerned,
this is an interesting problem.
\begin{theorem}\label{thm:SDE_are_equal_thm}
For melonic, connected $D$-coloured graphs $C$ and $\tilde C$ with the same number of vertices, let
$x \in V_0(C)$ and $\tilde x \in V_0(\tilde C)$. Then the LHS of the  Schwinger-Dyson Equation
\ref{SDE} equals
the LHS of SDE${}_{\tilde C,\tilde x}$ (with tilde input data) in the large-$N$ limit.
Concretely, the following leading orders  $(\mathrm{L.O.})$  in $N$ coincide for all $j=1,\ldots,m$ and all $w \in V_1(B_j)$,
\[ \notag
\mathrm{L.O } \, \big\{ \E [C] \big\}=
\mathrm{L.O }\,\big\{
 \E[\tilde C]\big\} \quad\und\quad
\mathrm{L.O }\, \big\{\E [ C
\underset{x,w}{\cup}
B_j   ] \big\} =
\mathrm{L.O }\,\big\{ \E [\tilde  C
\underset{\tilde x,w}{\cup}
B_j   ]\big\} 
\]
whenever any of the two converges,
independently of whether $B_j$ is melonic.
Then their RHSes must agree too, but not trivially, and they provide
relations among the connected and disconnected moments.
\end{theorem}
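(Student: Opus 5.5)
The plan is to reduce both equalities to the bijections already constructed, and to read them off the Wick expansion of the moments. Expanding $\mathcal Z\,\E[\balita]$ perturbatively in the couplings, each term is a sum over Wick contractions of $C\dot\cup B_{i_1}\dot\cup\cdots\dot\cup B_{i_n}$ (or of $C\cup_{x,w}B_j\dot\cup B_{i_1}\dot\cup\cdots\dot\cup B_{i_n}$). Each contraction carries the amplitude $N^{f-(D-1)p}$ times the coupling weights $\prod_a g_{i_a}N^{s_{i_a}}$. These weights and the number of propagators $p$ depend only on the $B$'s and on $p(C)=p(\tilde C)$. So in each order of the expansion the leading power of $N$ is governed by the face-maximizing contractions, and the leading coefficient is, up to common factors, the number of such contractions.

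The first equality is Corollary \ref{thm:EC_isEtildeC_coro}. Proposition \ref{thm:ExpMelonEqualPropo} provides, for every tuple $(i_1,\ldots,i_n)$, a face-preserving bijection between $\mathscr M_{n+1}(C,B_{i_1},\ldots,B_{i_n})$ and $\mathscr M_{n+1}(\tilde C,B_{i_1},\ldots,B_{i_n})$. Disconnected Feynman graphs factor into a component containing the melon and vacuum components; the latter cancel against $\mathcal Z$. Hence the leading coefficients agree term by term.

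For the second equality the same scheme applies, with Proposition \ref{thm:MC_MCtilde_mitBs_Propo} in place of Proposition \ref{thm:ExpMelonEqualPropo}. The bijection
\[
\alpha:\mathscr M_{1+n}\big(C\underset{x,w}{\cup}B_j,B_{i_1},\ldots,B_{i_n}\big)\to\mathscr M_{1+n}\big(\tilde C\underset{\tilde x,w}{\cup}B_j,B_{i_1},\ldots,B_{i_n}\big)
\]
is built from the factorisation of Lemma \ref{thm:union_Faces_sum_lemma}(b), the map $\phi$ for $n=0$, and replaying the same sequence of swaps. I would first check that $\alpha$ preserves the face number itself, not merely maximality. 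This follows because Eq. \eqref{faces_deficit_union} gives
\[
f\big(\Pi(C\underset{x,w}{\cup}B_j)\big)=f(\pi_{\max}C)+f(\pi_j(B_j))-D,
\]
$f(\pi_{\max}C)=p(C)(D-1)+1$ depends only on $p$, and every swap changes faces by the fixed amount of Eq. \eqref{swap_faces_deficit}. Since the vertex counts of $C\cup_{x,w}B_j$ and $\tilde C\cup_{\tilde x,w}B_j$ agree, the amplitudes agree. Summing over $n$, over the $i_a$, and over $w\in V_1(B_j)$ (with identical weights $g_jN^{s_j}$) gives equality of the leading orders whenever either converges. Here a converging expansion on one side forces the other to converge too, since the two leading series coincide term by term. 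Then the LHS of SDE${}_{C,x}$ and SDE${}_{\tilde C,\tilde x}$ agree at leading order, so the RHSes must agree as well.

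The main obstacle is the independence of the choice of $x$ and $\tilde x$. Proposition \ref{thm:MC_MCtilde_mitBs_Propo} is stated for arbitrary vertices of the right parity, but one must make sure that maximal contractions of the union never pair vertices of $C$ across to the bulk in a way that depends on where $x$ sits. I would handle this as in Lemma \ref{thm:union_Faces_sum_lemma}(b): the remaining $p-1$ \dip s of $C\setminus\{x\}$ must be self-paired, by Lemma \ref{thm:bubbles_inequalities_Lemm}(a),(d), even in the presence of other $B$'s, with (c) excluding multiple attachments. The only connection to the rest therefore passes through the partner of $x$, whose role is symmetric under the canonical relabeling. If this step fails for some non-melonic $B_j$, I would restrict to checking it through Lemma \ref{thm:bubbles_inequalities_Lemm}(b) applied at the welded half-edges.
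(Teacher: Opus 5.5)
Your proposal is correct and follows the paper's own argument: the first equality is Corollary~\ref{thm:EC_isEtildeC_coro}, and the second is read off order by order from the face-preserving bijection $\alpha$ of Proposition~\ref{thm:MC_MCtilde_mitBs_Propo}, then passed from connected to full moments as in that Corollary. The independence of $x,\tilde x$ that you flag as the main obstacle is already built into that Proposition, which allows arbitrary vertices of the right parity; note, however, that your sketched justification overstates things, since with the $B_{i_a}$ present the \dip s of $C$ need not be self-paired---a bulk component may attach to a \dip{} via two propagators, as in the proof of Proposition~\ref{thm:ExpMelonEqualPropo}.
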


\begin{proof}
The first equality has been proven by Corollary \ref{thm:EC_isEtildeC_coro},
and for the second equality we use now Proposition \ref{thm:MC_MCtilde_mitBs_Propo}. This states that the maximum of faces
of connected Feynman graphs that are Wick contractions of
$B_{i_i},\ldots,B_{i_n},  C \cup_{ x,w}
B_j$ and $B_{i_i},\ldots,B_{i_n}, \tilde  C \cup_{ \tilde x,w}
B_j $ are the same for $i_a=1,\ldots, m$,  for any $j$ and any $w$. Thus the L.O.
of the cumulants of both $ C \cup_{ x,w}
B_j$  and $  \tilde  C \cup_{ \tilde x,w}
B_j$ agrees at any order in perturbation
theory  (i.e. here, for any $n\in \Z_{\geq 0}$).
In the same way that we passed from cumulants to moments/correlators in Corollary
in \ref{thm:EC_isEtildeC_coro}, we obtain the second equality.
\end{proof}


\section{Simplification of the positive semidefinite matrix in the tensors}\label{sec:BootsMatrices}

\subsection{Positivity bootstrapping matrix ensembles}
This paragraph describes with the minimal non-trivial
amount of information the bootstrap strategy \cite{Lin} for matrix models.
For the ensemble $\cH_N$ of hermitian matrices $N\times N$
with a polynomial $V_g(M)$ in $M$ with coefficients parametrized by  $g\in \R$, let
\[ \notag
\mathcal M(g) = \begin{bmatrix}
                 1 & m_2 & m_4 & m_6 &\cdots\\
                 m_2 & m_4 & m_6 & m_8&\cdots\\
                  m_4 & m_6 & m_8 & m_{10} &\cdots\\
                  m_6 & m_8 & m_{10} & m_{12}&\cdots\\
                  \vdots & \vdots & \vdots & \vdots & \ddots
                \end{bmatrix} \qquad m_{2p}(g) = \lim_{N \to \infty } \frac{1}{Z}\int_{\cH_N} \frac{1}N \Tr M^{2p} \ee^{-N \Tr V_g(M) } \dif M,
\]
being $\dif M$ the (normalized)
Lebesgue measure on $\R^{N^2}\cong\cH_N$ and
$Z=\int_{\cH_N} \ee^{-NV_g(M)}\dif M$.
It is easy to prove the
positivity semi-definiteness of the matrix $\mathcal M(g)$ for all $g$.\\

Indeed, observing that $X=  \sum_i z_i M^{2i}$ satisfies $\Tr (X^*X) \geq 0$
and for any
choice of finitely many non-zero $z_i\in \C$,
so does its expectation
\[  \int_{\cH_N} \Tr (X^*X) \ee^{-N \Tr V_g(M) }  \dif M \geq0\qquad \forall g \,\und \forall z_i.\]
Expanding $X$ it follows that $\mathcal M(g)\succeq 0$.
In the concrete case of  $V_g(M) = \frac{1}{2} M^2+ \frac g4   M^4$,
the Schwinger-Dyson equation for $m_{2p}$
can be solved\footnote{For instance $
m_{4} = (1-m_{2})/g$, $
m_{6}= (2m_{2}-m_{4})/g$,
$m_{8} =  (2m_{4}+m_{2}^2-m_{6})/g$, etc. } in terms of $m_{2p-2}$ and the recursion yields
a parametrization of $\mathcal M$ by $m_2$ and $g$ alone. Then $\mathcal M(g,m_2) \succeq 0$ yields constraints for
the values of the function $m_2=m_2(g)$, which eventually converge
to the solution. Already with two submatrices of sizes $3\times 3$
one finds in $\sim$10 seconds the constraints shown in Figure \ref{fig:BootrMM}.

                \begin{figure}\[
\runter{\includegraphics[width=.3\textwidth]{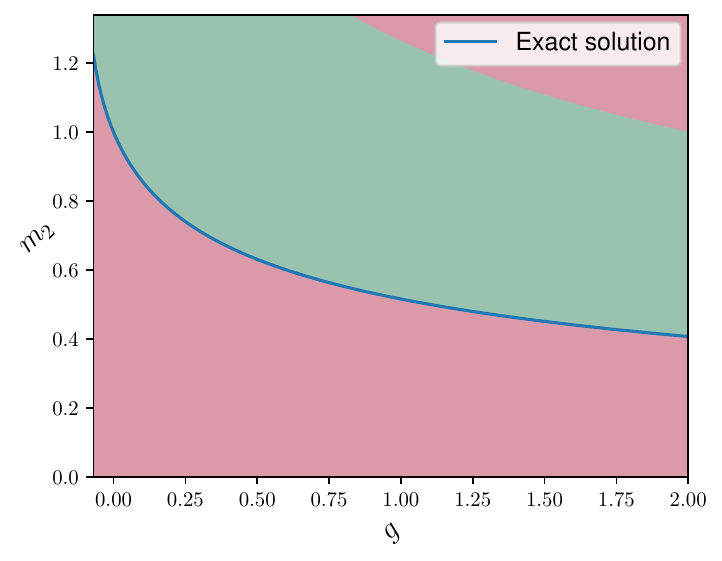}} & \text{\huge$+$}\!\!\!\runter{
\includegraphics[width=.3\textwidth]{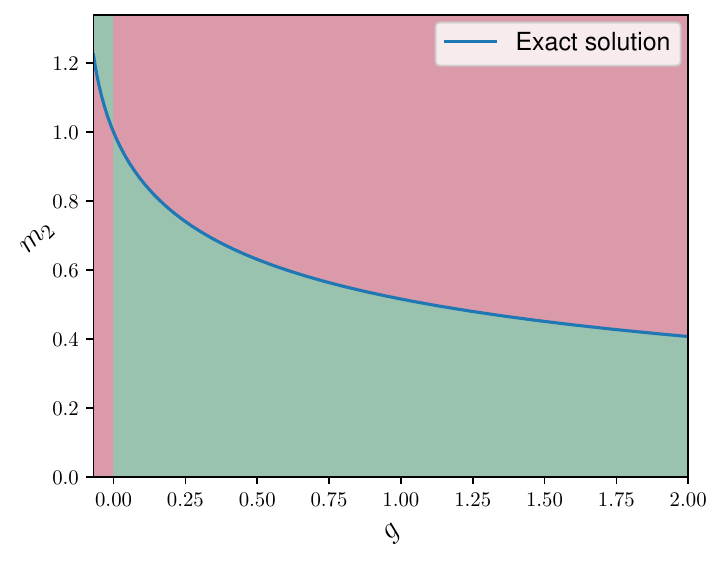}}
\text{\huge$=$}\!\!
\runter{\includegraphics[width=.32\textwidth]{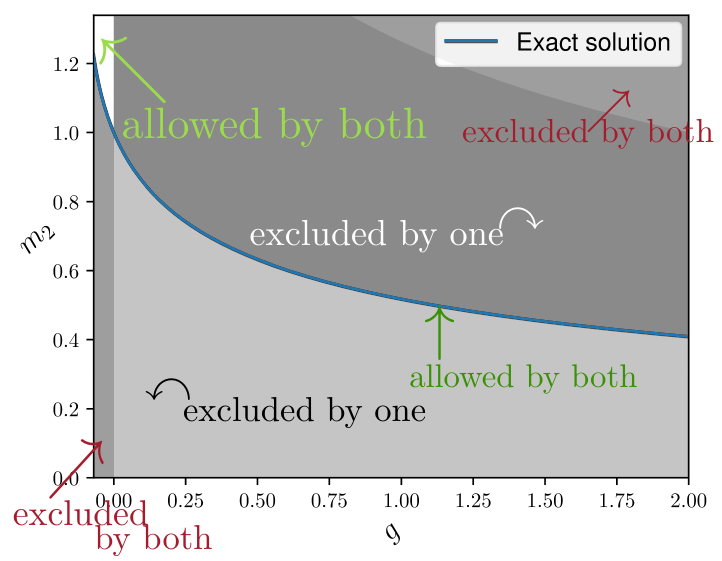}} \notag \\
\det \begin{bmatrix}
                   m_2 & m_4 & m_6  \\
                 m_4 & m_6 & m_8 \\
                   m_6 & m_8 & m_{10}
                \end{bmatrix} \geq 0 &\qquad
  \det
\begin{bmatrix}
   1 & m_2 & m_4  \\
                 m_2 & m_4 & m_6  \\
                  m_4 & m_6 & m_8
                \end{bmatrix}\geq 0 \qquad \qquad\,\,\text{both determinants}\notag
\]\caption{The two first graphs show
the constraints imposed by the determinant's positivity below it with green (red is where the determinant is negative).
The last graph is the superposition of both constraints. The moments $m_{2p}$ of the quartic
ensemble
are expressed as function of $m_2$ and $g$ by means of the Schwinger-Dyson equations.
The rightmost graph allows for $g\geq 0$ only a
thin stripe around the exact solution (blue).
\label{fig:BootrMM}}
                \end{figure}

\subsection{positive semidefinite matrices for large-$N$ tensor models}\label{sec:PosSemiDef_Tensors}

There are several ways to arrange the moments of random tensors in a
positive semidefinite  matrix (\psdm). Let us sketch two useful
proposals, assume that in the potential $S\inter(T,\bT) =
\sum_{j=1}^m g_j N^{s_j} B_j(B,\bT)
$ each $B_j$, is real valued, and abbreviate the list of real
couplings as $g=\{g_j\}_{j=1}^m $.
\begin{itemize}
 \itemb
The matrix that was used in \cite{TensorBootstrapProgram}
can be constructed as follows, starting from
an arbitrary list of connected
$D$-coloured graphs  $\{A_\alpha\}_\alpha$, and assume that $A_\alpha(T,\bT)\in\R$ for each $\alpha$.  One lets\vspace{-1ex}
\[
X_z(T,\bT) & := \sum_\alpha z_\alpha   \partial A_\alpha(T,\bT),
\]
where $\partial = \partial/\partial T$. Since for any complex $z_i$
\[\sum_{a_1,\ldots,a_D=1}^N (X_z)_{a_1,\ldots,a_D} (\overline X_z)_{a_1,\ldots,a_D}\geq 0,
\text{ holds, then  }
\mathcal M (g)\succeq 0  \label{CNDproduct}\]
where the entries of this matrix are defined by (recall $\E=\E\hp N_{g}$)
\[
\mathcal M_{\alpha,\nu} (g): &=
  \E [
\bar\partial A_\alpha \cdot
\partial A_\nu ] {}\ \\
&=
\sum_{a_1,\ldots,a_D=1 }^N
 \E \bigg[
\frac{\partial A_\alpha(T,\bT) }{\partial \bT_{a_1,\ldots,a_D}} \times
\frac{\partial A_\nu(T,\bT) }{\partial T_{a_1,\ldots,a_D}}   \bigg]\,.
\]
Observe how $\partial A_\nu$ sums \textit{all} possible
terms that can be formed from
removing from $\partial A_\nu$ a white vertex in the sense of
Eqs. \eqref{bubble_removal} above (\textit{all} `open bubbles'\footnote{
\cite{Reiko_2BTM} exhibits a \psdm, whose entries are not built
by sums over `all open bubbles', but from single `open bubbles'. This matrix can be exploited
for graphs without symmetries and at finite-$N$. However, independently of those symmetries, for
melonic observables of melonic and non-melonic models, both matrices
introduce the same constraints at large-$N$, see Thm. \ref{thm:SDE_are_equal_thm}}); similarly,
$\bar\partial A_\alpha$ is the sum of all terms
that arise after removing black vertices from $A_\alpha$.
Then the inner product $\bar\partial A_\alpha \cdot
\partial A_\nu $ welds colour-wise all broken edges---which bear indices as in Eq. \eqref{bubble_removal}---after removing both
vertices.
\\[-1ex]

\noindent
This setting is able to accommodate in the  \psdm{} $\mathcal M$
any observable, melonic or not,  \textit{except the constant invariant}.
\\

\itemb
A second matrix, introduced in \cite{Reiko_2BTM}, is obtained by taking the expectation of
$\Tr Y_z^*Y_z\geq 0$ where
\[
Y_z(T,\bT)  &= \sum_\alpha z_\alpha   M _\alpha\hp c (T,\bT),\vspace{-2ex}
\]
and $M _\alpha\hp c$ is a coloured matrix constructed, paraphrasing the authors, as follows.
One chooses a colour  $c$ and for any invariant
$A_\alpha$ one picks an edge $e_c^\alpha$ of colour $c$ and cuts it.
The algebraic expression that remains $(A_\alpha\setminus \{e_c^\alpha\})_{a_c,b_c}$ is a matrix defined by the relation \vspace{-2ex} \[
\Tr (A_\alpha\setminus \{e_c^\alpha\}) =\sum_{a_c,b_c=1}^N  \delta_{a_c,b_c} ( A_\alpha\setminus \{e_c^\alpha\})_{a_c,b_c}= A_\alpha (T,\bT).\]
Set $M\hp c_\alpha= A_\alpha\setminus \{e_c^\alpha\}$ above. Repeating the procedure sketched for the matrix $\mathcal M$,
this leads to a \psdm{}
\[
 \mathcal N (g) \succeq 0, \qquad \with \mathcal N_{\alpha,\nu}(g) := 
 \E \big[ \Tr  \big(  [ M\hp c _\alpha]^* M\hp c _\nu \big) \big], \qquad  \text{ \cite{Reiko_2BTM}}\]
that
\textit{is able}
to accommodate the constant invariant and melonic invariants,
since, by definition, $ \delta_{a_c,b_c}$ contracted with
any of the vertex-excised invariants, yields an invariant (and contraction with itself yields $N$).

\end{itemize}

\subsection{Non-melonic boostrap \psdm s}
Although
 \cite{Reiko_2BTM} and \cite{TensorBootstrapProgram} are correct, both
 focused on melonic observables, and generalizations require some care regarding the scaling
 factors. Indeed,  $\mathcal M$ and $\mathcal N$
 are \psdm's, but the
 expectation  of some non-melonic observables scales as $1$ at large-$N$,
 while melonic graphs scale as $N$, and (after dividing the matrices by $N$) this
 mismatch of scaling factors suppresses non-melonic entries.  To make both matrices ready for non-melonicity,
 we scale them.  Let  $q_\alpha$ be such that\footnote{For instance
$q_{\alpha}=1,0,-1$ if $\alpha$ indexes a melon, $K_{3,3}$ or $K_{3,3} \cup_{x,y} K_{3,3}$, respectively, the latter being independent of $x$ and $y$.} 
$N^{-{q_\alpha}}\E [ A_\alpha ] \sim 1 $ at large-$N$, and let
 \[
X_z(T,\bT) & := \sum_\alpha z_\alpha N^{1/2-q_\alpha}   \partial A_\alpha(T,\bT) ,\\
Y_z(T,\bT) & := \sum_\alpha z_\alpha  N^{1/2-q_\alpha}    M _\alpha\hp c (T,\bT).
\]
Again the arbitrariness of the $z_\alpha$'s and
the non-negativity of the norms \eqref{CNDproduct} in $\C^D$ and $Y_z\mapsto \Tr(Y_z^*Y_z)$ in $M_N(\C)$, respectively,
lead to $\mathcal  M(g)\succeq 0$ and $\mathcal N(g)\succeq 0$ for
\begin{subequations}\label{PositiveMatrices_wNscalings}
\[
\mathcal M_{\alpha,\nu} (g): &= \label{PositiveMatrices_wNscalings_M}
N^{1- q_\alpha -q_\nu  } \E [
\bar\partial A_\alpha \cdot
\partial A_\nu ] ,  \\
\mathcal N_{\alpha,\nu} (g) :&=
N^{1- q_\alpha -q_\nu  }  \E \big[ \Tr  \big(  [ M\hp c _\alpha]^* M\hp c _\nu \big) \big]. \label{PositiveMatrices_wNscalings_N}
\]\end{subequations}%

\begin{proposition}\label{thm:NonMelonicBoostrapNscalings_Propo}
The two \psdm's \eqref{PositiveMatrices_wNscalings}
are finite and their non-melonic entries at large-$N$ are non-vanishing, whenever $N^{-q_{\alpha} } \E[A_\alpha]$ and $ N^{-q_{\nu} } \E[A_\nu]$ are finite and non-vanishing in that limit.
\end{proposition}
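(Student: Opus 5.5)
The strategy is to reduce each entry of the matrices \eqref{PositiveMatrices_wNscalings} to the expectation of a finite sum of invariants. Then I control the large-$N$ exponent of each summand using the face-counting tools of Sec. \ref{sec:correlators_melon_nonmelonic} and Sec. \ref{sec:SDE_redux}, and check that the prefactor $N^{1-q_\alpha-q_\nu}$ cancels exactly the leading exponent.

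The first step is to expand $\bar\partial A_\alpha\cdot\partial A_\nu$ as a sum over pairs $(x,w)$, with $x\in V_0(A_\alpha)$ and $w\in V_1(A_\nu)$, of the union invariants $A_\alpha\cup_{x,w}A_\nu$. Indeed, excising a black vertex of $A_\alpha$ and a white vertex of $A_\nu$ and welding the broken edges colour-wise is, by definition, the union of graphs. Similarly, $\Tr([M\hp c_\alpha]^*M\hp c_\nu)$ is a single invariant obtained by cutting one $c$-edge in each graph and regluing crosswise, which is a one-colour analogue of the union. In both cases the number of summands is independent of $N$, so finiteness and non-vanishing reduce to the scaling of each summand.

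The second step is to show that $\E[A_\alpha\cup_{x,w}A_\nu]$ scales at most as $N^{q_\alpha+q_\nu-1}$, with equality for at least one summand, and that no cancellation occurs. For the Wick contractions I would use Lemma \ref{thm:union_Faces_sum_lemma}(a). Given contractions $\pi$ of $A_\alpha$ and $\pi'$ of $A_\nu$, the glued contraction $\Pi$ satisfies $f(\Pi)=f(\pi)+f(\pi')-D$. Counting propagators, $p$ drops by one, so the amplitude gains a factor $N^{D-1}$. Combined with the loss of $D$ faces this gives a net $N^{-1}$ relative to the product of amplitudes. This holds in the presence of arbitrary interaction vertices, using the swap formula \eqref{swap_faces_deficit} exactly as in the proof of Proposition \ref{thm:MC_MCtilde_mitBs_Propo}. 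Leading graphs of the union therefore scale as $N^{q_\alpha}N^{q_\nu}N^{-1}$, and $N^{1-q_\alpha-q_\nu}$ renders the entry $O(1)$. For the lower bound, contractions of this glued form exist whenever the maximal contractions of $A_\alpha$ and $A_\nu$ exist, and these are assumed to exist by the non-vanishing of $N^{-q}\E[A]$. Positivity of all amplitudes, which equal powers of $N$ with coefficient one, excludes cancellation at least perturbatively in $g$. The $\mathcal N$ case is analogous: cutting and regluing one $c$-edge in each graph merges two $c$-faces into one, while all other faces and propagators are unchanged, again giving a relative $N^{-1}$.

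The main obstacle is the upper bound. I must show that no contraction of the union beats the glued ones, in particular contractions that pair vertices of $A_\alpha$ with vertices of $A_\nu$ away from the former $x,w$. For a melonic factor this is Lemma \ref{thm:union_Faces_sum_lemma}(b), but the example with $K_{3,3}\cup K_{3,3}$ shows that extra maximal contractions can appear. My first attempt would be to treat any such contraction as a connected Wick contraction of the disjoint pair and bound it through Lemma \ref{thm:bubbles_inequalities_Lemm}(c)-type arguments: cross-propagators beyond one pair cost at least as much as they save. This gives $f\le f_{\max}(A_\alpha)+f_{\max}(A_\nu)-D$. Such extra contractions would then only change the finite leading coefficient, not the exponent. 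Non-vanishing then follows from positivity of the coefficients.
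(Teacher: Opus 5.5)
Your route is the paper's: expand $\bar\partial A_\alpha\cdot\partial A_\nu$ into the unions $A_\alpha\cup_{x,w}A_\nu$ (resp.\ the $c$-edge regluing for $\mathcal N$), glue maximal contractions via Lemma~\ref{thm:union_Faces_sum_lemma}(a), and read off $q_\alpha+q_\nu-1$. That part is sound. It gives a correct lower bound, and your sign argument (all amplitudes have the same sign at each fixed monomial in the couplings) handles cancellations.

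\textbf{The gap is the upper bound.} You rightly single it out but do not close it. Your reduction is fine: reinsert $x,w$ with the propagator $(w,x)$, which preserves faces and adds one propagator, so one must bound connected contractions of $A_\alpha\,\dot\cup\,A_\nu$. But Lemma~\ref{thm:bubbles_inequalities_Lemm}(c) cannot supply this bound. Its proof rests on the dipole insertion pairs of the melon $C$, and the remark after it notes that without melonicity the face balance can be $D-4\le 0$.

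In fact the inequality $f\le f^{\max}(A_\alpha)+f^{\max}(A_\nu)-D$ is false in general. Take $A_\alpha=B$, $A_\nu=\bar B$ its mirror graph, $w=\bar x$, and pair every vertex of $B$ with its mirror image, including the pair $(\bar x,x)$. Each colour-$c$ face then consists of one $c$-edge of $B$ and its mirror, so there are $D\,p(B)$ faces. Removing $x,\bar x$ only shortens the faces through them, so $B\cup_{x,\bar x}\bar B$ has a contraction with $D\,p(B)$ faces. This beats the glued value $2f^{\max}(B)-D$ as soon as $f^{\max}(B)<D(p(B)+1)/2$.

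For $K_{3,3}$ this is exactly a tie, $9=2\cdot 6-3$, which is the kind of extra maximizer in the paper's $K_{3,3}\cup K_{3,3}$ example. But take $D=3$ bubbles with independent uniformly random colour matchings. A first-moment count over the $p!$ self-contractions gives $f^{\max}(B)=p+o(p)$, so already in the Gaussian model the excess exponent grows linearly in $p$. An analogous mirror contraction sits in the diagonal entries $\E\big[\Tr\big([M_\alpha^{(c)}]^*M_\alpha^{(c)}\big)\big]$ of $\mathcal N$, where no real-valuedness is assumed, so finiteness actually fails there.

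The paper's own proof has the same hole. It asserts that $\Pi_{\alpha,\nu;x,w}$ is maximal ``by Eq.~\eqref{faces_deficit_union}'', but that identity only counts faces of glued contractions. It says nothing about the non-glued ones, whose existence the paper itself concedes. So your proposal reproduces the paper's argument including its unproved step, and the patch you suggest cannot work at this generality.

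What survives is the lower bound, hence non-vanishing order by order in perturbation theory. An upper bound needs extra input. For instance, if one factor is melonic, dipole-based arguments in the spirit of Lemma~\ref{thm:union_Faces_sum_lemma}(b) and Proposition~\ref{thm:MC_MCtilde_mitBs_Propo} may apply. For two non-melonic entries, the exponent has to be determined separately rather than read off as $q_\alpha+q_\nu-1$.
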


\begin{proof}[Sketch of proof.] To prove that the matrices with entries given by Eq.
\eqref{PositiveMatrices_wNscalings} are both \psdm's is routine.
To prove the interesting part of the claim, assume that $N$ is large.
First, we prove that $\E[
\bar\partial A_\alpha \cdot
\partial A_\nu]$ scales as $ N^{ q_\alpha  + q_\nu -1  } $, so that all entries in
\eqref{PositiveMatrices_wNscalings_M} are finite and not trivial, and leave
the analogous statement for \eqref{PositiveMatrices_wNscalings_N}, which has a similar proof,
sketched.\\

For any Feynman graph $H$, which contains always $0$-coloured edges (propagators),
let $H^{\hat 0}$ denote the graph remaining after removing all the $0$-coloured edges.
Let now $G_\alpha$ a connected maximal  Feynman graph
of the model  Eq. \eqref{Model} in which $A_\alpha$ sits
Wick-contracted, and let $\mathcal B_\alpha$ such that
 $G_\alpha ^{\hat 0} = \{A_\alpha\} \cup \mathcal B_\alpha$, which is the set of `interaction vertices'.
Since we are in the fixed model  Eq. \eqref{Model},
$ \mathcal B_\alpha = \{B_i\}_{i\in I(\alpha)}$
for some subset $I(\alpha)$ of $\{1,\ldots,m\}$ (following our previous
terminology let's call  $ \mathcal B_\alpha $ just bulk${{}_\alpha}$).
By definition of $q_\alpha$, $\E[A_\alpha]$ scales as $N^{q_\alpha}$,
then so does the cumulant or connected expectation $\E\conn[A_\alpha]$, and we can focus on connected Feynman graphs.
Since $A_\alpha$ is connected as well, and by assumption the maximum of faces $f_\alpha^{\max}= f(G_\alpha)$
is attained at $G_\alpha$, we have the following relations [recall Eqs.
 \eqref{Model}, \eqref{amp_def} and \eqref{ExpectDefinition}],
\[ \label{qf}
q_\alpha = f_\alpha^{\max} - (D-1) p_\alpha +   \sum_{i\in I(\alpha)} [ (D-1)p_i + s_i] ,
\]where
 $2p_\alpha=\#V(G_\alpha)$ and $p_i=p(B_i)$. We call the last sum over $i$
 bulk${}_{\alpha}$, abusing on  notation. 
\\

Now observe that the two derivatives $\bar \partial $  and $\partial$
with respect to $\bT$ and $T$, respectively, on each vertex yield
the following sum over the graph-unions of
excised vertices  \[
\bar\partial A_\alpha \cdot
\partial A_\nu
=\sum_{\substack{ x \in V_0(A_\alpha)\\ w \in V_1(A_\nu)} }
b_{x,w}^{\alpha,\nu}
\qquad \with\qquad
b_{x,w}^{\alpha,\nu}: =
(A_\alpha)
\underset{x,w}{\cup}
(A_\nu). \label{bxw}
\]

It is not true that any maximal graph of the Model \eqref{Model} containing
$ b_{x,w}^{\alpha,\nu}$ splits as two maximal graphs, one containing
of $A_\alpha + $\,bulk${}_{\alpha}$ and $A_\nu +$\,bulk${}_{\nu}$. However, as in
Lemma \ref{thm:union_Faces_sum_lemma}, the latter two sets inject into the former,
or in other words: although we cannot split any $H$ that attains
 $f^{\max}_{\alpha \cup \nu;x,w} = f(H)$ into two Wick contractions---one
 of $A_\alpha$ (and bulk) and $A_\nu$ (and bulk)---we can design a maximal
 $H$ from maximal data for $A_\alpha$ and $A_\nu$. Namely, first let
 $\text{bulk}_{\alpha\cup\nu}  = \sum_{I(\alpha) \cup I(\nu)}  (D-1)p_i + s_i$
 and let us now care about the Wick contraction.\\

 Let $\pi_\alpha$ be the Wick contraction underlying $G_\alpha$, i.e.
$G_\alpha=\pi_\alpha (A_\alpha \dot\cup \mathcal B_\alpha )$ (and similarly for $\nu$)
and define $\Pi_{\alpha,\nu;x,w} =
(\pi_\alpha \setminus \{x,y_\alpha \} )\cup (\pi _\nu \setminus \{v,w_\nu\}  )\cup \{ (v,y) \}
$, where the new vertices are uniquely determined by $(x,y_\alpha) \in \pi_\alpha $
and $ (v,w_\nu) \in \pi_\nu$. By Eq. \eqref{faces_deficit_union},
this Wick contraction $\Pi_{\alpha,\nu;x,w} $ is maximal if $\pi_\alpha$ and $\pi_\nu$ are. Moreover, observe
that the maximum number of faces of $\Pi_{\alpha,\nu;x,w} $ reads \vspace{-2ex}
\[ \label{sumeoffacesinproof}
f^{\max}_{\alpha \cup \nu ; x,w} = f^{\max}_\alpha + f^{\max}_\nu - D,
\]
which is independent of $x$ and $w$. Notice that
$ p \big( b_{x,w}^{\alpha,\nu} \big)$ is half the vertices of $b_{x,w}^{\alpha,\nu}$, so
$p \big( b_{x,w}^{\alpha,\nu} \big) =p_\alpha+p_\nu-1$, which is independent of $x,w$.
Let $q(\alpha \cup \nu)$ denote the large-$N$ scaling exponent
of $\bar\partial A_\alpha \cdot
\partial A_\nu$, $\E\conn [ \bar\partial A_\alpha \cdot
\partial A_\nu] \sim N^{q(\alpha\cup \nu) }$.
Then if $f^{\max}_{\alpha \cup \nu;x,w}$ is the maximum of faces
$b_{x,w}^{\alpha,\nu}$ can receive in the Model \eqref{Model},
\[q(\alpha \cup \nu)\notag
&=   f^{\max}_{\alpha \cup \nu;x,w} -  (D-1)  \times p \big( b_{x,w}^{\alpha,\nu} \big)+  \text{bulk}_{\alpha\cup\nu}  \\
 & = [f^{\max}_\alpha + f^{\max}_\nu  - D] -  (D-1) \times (p_\alpha+p_\nu-1 ) +\textstyle  \sum_{I(\alpha)\cup I(\nu)} \notag
(D-1)p_i + s_i \notag\\ \notag
&= f_\alpha^{\max} - (D-1) p_\alpha + \textstyle  \sum_{i\in I(\alpha)} [ (D-1)p_i + s_i] \\
& +\,f_\nu^{\max} - (D-1) p_\nu + \textstyle  \sum_{i\in I(\nu)} [ (D-1)p_i + s_i]  -1 \notag \\
& =  q_\alpha + q_\nu  -1\notag .
\]
To prove that the entries of the second matrix
\eqref{PositiveMatrices_wNscalings_N} also scale as $N^{q_\alpha + q_\nu  -1}$,
one can use Eq. \eqref{swap_faces_deficit},
which implies that the way one traces the product
$M_\alpha^* M_\nu$ (which form a sort of edge-swap, but
with edges $c$-coloured with $c>0$) one obtains the exact
faces deficit  as in \eqref{sumeoffacesinproof}
with respect to the faces of $\Tr M_\alpha$
and $\Tr M_\nu$. (Alternatively, since
this swap is additive in the Gur\u au-degree, which we did not use here,
\cite[Prop. 3.11]{fullward} can be used to write a shorter proof.)
\end{proof}
\begin{table}\begin{minipage}{.70\textwidth}
\includegraphics[width=.996\textwidth]{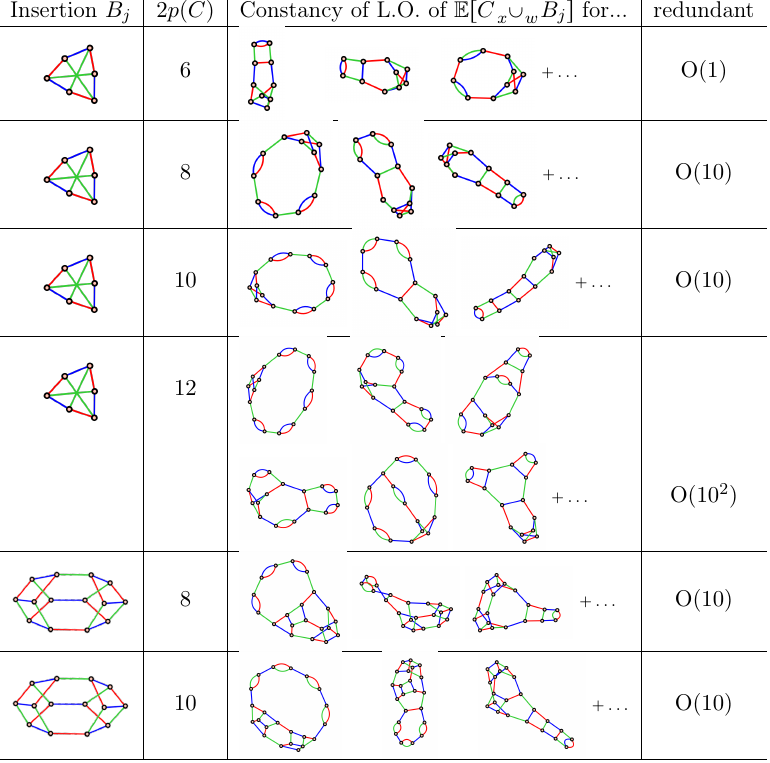}\hspace{-2ex}
             \end{minipage}\hspace{-5ex}\begin{minipage}{.33\textwidth}
\caption{Theorem \ref{thm:SDE_are_equal_thm} states that in
the large-$N$ moment of the union of a graph
$B_j$ with a melon $C$ one can replace $C$ by
$\tilde C$ as far as $p(C)=p(\tilde C)$. This simplifies
tensor bootstraps by treating all listed invariants in each row
one moment. \label{tab:nonBoteros}} \end{minipage}
\end{table}%

\allowdisplaybreaks[4]
\section{Conclusion}\label{sec:concl_outlook}

By ordering labels of vertex-labellings of melons,
we are able to locate dipole insertion vertex pairs
and replace a melon (which sits Wick-contracted in a larger 
graph) by any other melon with the same number of vertices, 
without modifying its maximum of faces.
This leads to the independence of the
combinatorial details of large-$N$ melonic moments in
non-melonic tensor models; instead, they only datum 
they depend  on is their number of vertices.
A similar result
holds for the `melonic part' (Thm. \ref{thm:SDE_are_equal_thm}) in
any graph, which leads to the equality of the Schwinger-Dyson Equations
for melonic observables of the same number of vertices. 
However, the second type of universality  
of \cite{TwofoldUniversality}, to wit \textit{independence of $D$}
in melonic large-$N$ models, does not seem to have a counterpart 
for non-melonic models, unsurprisingly (this is not 
disappointing, it is rather the
universality in $D$ of \cite{TwofoldUniversality} which is surprising).\\

An application of our result is the reduction of
the number of independent moments in positivity
bootstraps. The way invariants scale, depending on how we cut them and build
 a positive semidefinite matrix seems to be a criterion that deserves study.
Indeed, consider for instance the basis of observables
depicted on the top of the following matrix:
\[ \nonumber
 \tfrac 1N \runter{\includegraphics[width=9ex]{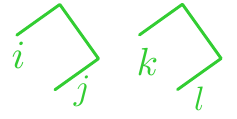}}
    \tfrac 1N \runter{\includegraphics[width=9ex]{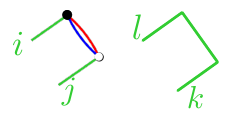}}
 \runter{\includegraphics[width=12ex]{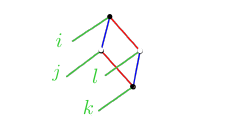}}       \tfrac 1N \runter{\includegraphics[width=9ex]{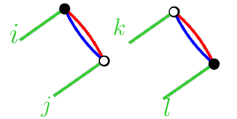} } \\[-3ex]
\kbordermatrix{ \cr
\cr
   \tfrac 1N \runter{\includegraphics[width=10ex]{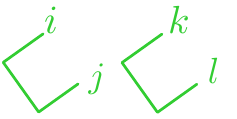} }
 &  1
  &   \tfrac 1N  \,\,\raisebox{-.75ex}{\includegraphics[width=3ex]{melon_tricolor} }
   &   \tfrac 1N  \runter{\includegraphics[width=5ex]{V2colors} } &  \tfrac 1N \runter{\includegraphics[width=5ex]{V1colors} }
  \cr
 \tfrac 1N   \runter{\includegraphics[width=10ex]{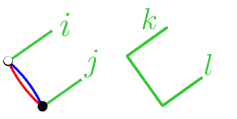}}
   &   \tfrac 1N  \,\,\raisebox{-.75ex}{\includegraphics[width=3ex]{melon_tricolor} }
  & \tfrac 1N  \runter{\includegraphics[width=5ex]{V1colors} }
& \tfrac 1N \runter{\includegraphics[width=5ex]{E1colors} }& \raisebox{-.75ex}{\includegraphics[width=3ex]{melon_tricolor} }\hspace{-1.15ex}\runter{ \includegraphics[width=5ex]{V1colors} }
  \cr
   \runter{\includegraphics[width=13ex]{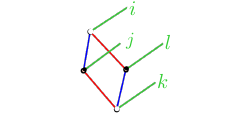} }
   &   \tfrac 1N  \runter{\includegraphics[width=5ex]{V2colors} }
  &  \tfrac 1N  \runter{\includegraphics[width=5ex]{E1colors} }
  &
  \runter{\includegraphics[width=7ex]{Cubecolors} }
   &  \tfrac 1N \runter{\includegraphics[width=6ex]{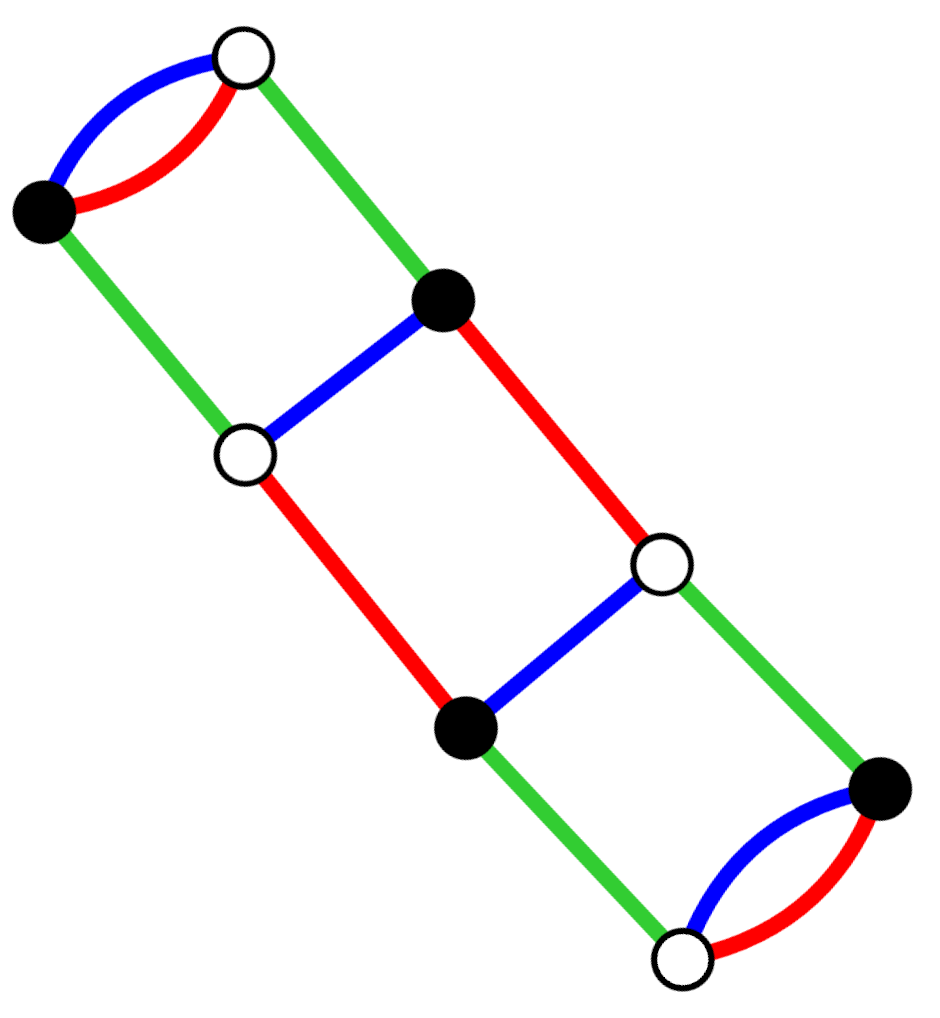} }
  \cr
  \tfrac 1N \runter{\includegraphics[width=10ex]{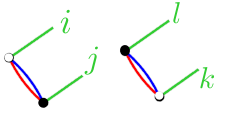}}
   &  \tfrac 1N \runter{\includegraphics[width=5ex]{V1colors} }
  &  \raisebox{-.75ex}{\includegraphics[width=3ex]{melon_tricolor} }\hspace{-1.15ex}\runter{\includegraphics[width=5ex]{V1colors} } &
  \tfrac 1N \runter{\includegraphics[width=6ex]{Mel121} }  &
  \runter{\includegraphics[width=5ex]{V1colors} }\!\!\!\!\!\runter{\includegraphics[width=5ex]{V1colors} }
} \nonumber
\]
The corresponding matrix of moments (i) is positive semidefinite, (ii)  scales correctly at large-$N$ and
(iii) does not arise from the available techniques (Sec. \ref{sec:PosSemiDef_Tensors}).
Indeed, a random tensors allow a 
diversity in the construction of positive semidefinite matrices.
This direction can be exploited as an alternative to---or at least an
enrichment of---large matrices of  moments.
Moreover, the \textit{non-melonic moments} that appear as entries of the
two positive semidefinite matrices that so far
build the core of the tensor
bootstrap (Eq. \eqref{PositiveMatrices_wNscalings},
which essentially correspond to the non-melonic adaptation of
\cite{TensorBootstrapProgram}'s and \cite{Reiko2026}'s idea),
scale now correctly at large-$N$
and are ready for non-melonic bootstrap,
as proven in Proposition \ref{thm:NonMelonicBoostrapNscalings_Propo}.
In any case, Tables \ref{tab:Simplifyning}
and \ref{tab:nonBoteros} show the magnitude of the simplification that melonic moments of
arbitrary tensor models will enjoy in the large-$N$
thanks to our main results (Prop. \ref{thm:ExpMelonEqualPropo} and Thm. \ref{thm:SDE_are_equal_thm}),
independently of their utility for bootstraps.

\vspace{1cm}

\appendix
\subsection*{Acknowledgements}
Feynman graphs and almost all $D$-coloured graphs were drawn using \texttt{feyntensor} \cite{feyntensor}, written by the author in---and thanks to---SageMath \cite{sagemath}.

\section*{Table of notation}
We list the meaning of some frequently used symbols. \\
\begin{center}
\fontsize{10.75}{13.55}\selectfont
 \begin{tabularx}{.985\textwidth}{lX}
$\mathcal M \succeq 0$ & positive semidefinite $\mathcal M$\\
 $\dot\cup$ & disjoint union\\
 $\#$ & cardinality \\
$*$ & adjoint (elsewhere dagger) \\
  $G\swap{v}{w} H$ & swap of $G$ and $H$ at $v$ and $w$  \\
$\cup_{x,y}$ & union of two graphs at two vertices $x,y$\\
$(r_{d} r_{d-1} \cdots r_1 r_0)_D$  & base-$D$ notation, $\sum_{i=0}^d r_i D^i$\\
 $ \mathcal A$ & arborescence (of a melon) \\
$B,C,\tilde B,\tilde C, B_i$ & typically, $D$-coloured graphs\\
$B, \tilde B, B_i$ & interaction vertices, in general non-melonic\\
$C,\tilde C $ & observables (typically melonic) \\
$C(T,\bT)$ & invariants in the tensor $T$ defined by the graph $C$ \\
$D$ &  number of colours of $T \in (\C^N)^{\otimes D}$, `rank of $T$' \\
$D$-coloured graph & vertex-bipartite uniformly edge-$D$-coloured graph
\\
$\E[B] $ & moment/expectation value, $\E_{g_1,\ldots,g_m} \hp {N}[B(T,\bT)]$ explicitly \\
\dip & dipole insertion vertex pair \\
$F(G)$ & faces of a Feynman graph\\
$f(G)$ & number of faces of a Feynman graph\\
$g_1,\ldots, g_m$ & coupling constants/parameters of the measure\\
$g$ & abbreviates the list of couplings $ g_1,\ldots, g_m$\\
$G$, $H$ & Feynman graphs\\
L.O. & Leading Order in $N$ \\
$m$ & number of interactions in $S\inter$ \\
$\mathscr M_n$ & face-maximizing Wick contractions (of the $n$ arguments)\\
$n$ & order in perturbation theory \\
$N$ & size of $N$, $T \in (\C^N)^{\otimes D}$\\
$p(C)$ & half of the number of vertices of $C$\\
\psdm & positive semidefinite matrix \\
$\pi$ & Wick contraction \\
$\pi(B)$ & Wick contraction of $B$, a $(D+1)$-coloured Feynman graph\\
$S\inter(T,\bT)$ & $\sum_{j=1}^m g_j N^{s_j} B_j(T,\bT) $  \\
$\Sym(p),\Sym(\Omega)$ & bijections/permutations of $\{1,\ldots,p\}$, resp. on a set $\Omega$    \\
$T$ & tensor \\
$\mathcal T$ & tree of a melon \\
$V_0(B)$ & black or even vertices of $B$ \\
$V_1(B)$ & white or odd vertices of $B$ \\
$V(B)$ & $V_0(B)\dot\cup V_1(B)$ \\
$v,w,x,y$ & typical vertex variables\\
$\bar z$ & complex conjugate of $z$ \\
$\mathcal Z$ & partition function
\end{tabularx}

\end{center}

\newgeometry{left=1.84cm,right=1.84cm, bottom=2.5cm}
{
\bibliographystyle{alpha}

}
\end{document}